\documentclass[12pt,a4paper]{article}
\usepackage[dvipsnames]{pstricks} 
\usepackage{afterpage}
\usepackage{epsfig}
\usepackage{pst-grad} 
\usepackage{pst-plot} 
\usepackage{float}
\usepackage[stable]{footmisc}
\usepackage[utf8]{inputenc}
\usepackage{a4wide}
\usepackage{amsmath,amstext,amsthm,amssymb}
\usepackage{amsfonts}
\usepackage{framed}
\usepackage{graphicx}
\usepackage{color}
\usepackage{bbold}
\usepackage{bbm}
\usepackage{rotating} 
\usepackage{slashed} 
\usepackage{cancel} 
\usepackage{braket} 
\usepackage{array}
\usepackage{setspace}
\usepackage{hyperref}
\usepackage{fontawesome5}
\usepackage{overpic}
\usepackage[nosort]{cite}
\usepackage{lipsum}
\usepackage{dsfont}
\usepackage{noindentafter}
\usepackage{simpleicons}
\newtheoremstyle{mydefinition}
  {\topsep}   
  {0.5cm}     
  {\normalfont}
  {}          
  {\bfseries} 
  {:}         
  {5pt plus 1pt minus 1pt} 
  {\thmname{#1}\thmnumber{ #2}\thmnote{ \normalfont{(#3)}}} 
\theoremstyle{mydefinition}
\newtheorem{theorem}{Theorem}
\newtheorem{lemma}{Lemma}
\newtheorem{definition}{Definition}
\NoIndentAfterEnv{theorem}
\NoIndentAfterEnv{lemma}
\NoIndentAfterEnv{definition}

\newcommand{\del}[0]{\partial}

\let\baraccent=\=
\renewcommand{\=}[1]{\stackrel{#1}{=}}
\newcommand{\id}[0]{\mathds{1}}

\newcommand{\R}{\ensuremath{\mathbb{R}}}
\newcommand{\Z}{\ensuremath{\mathbb{Z}}}
\newcommand{\C}{\ensuremath{\mathbb{C}}}
\newcommand{\N}{\ensuremath{\mathbb{N}}}
\newcommand{\Proj}{\ensuremath{\mathbb{P}}}

\DeclareSymbolFontAlphabet{\mathbb}{AMSb}
\definecolor{JM}{HTML}{629202}
\definecolor{BH}{HTML}{f903d7}

\begin{document}

\pagestyle{plain}
\pagenumbering{gobble} 

\makeatletter
\@addtoreset{equation}{section}
\makeatother
\renewcommand{\theequation}{\thesection.\arabic{equation}}
\pagestyle{empty}
\vspace{0.5cm}

\begin{center}
    
	{\LARGE \bf{Calabi-Yau Orientifold Hypersurfaces and their F--theory Uplifts} \\[4mm]
    }

\end{center}

\vspace{1cm}

\begin{center}
	\scalebox{0.95}[0.95]{{\fontsize{14}{30}\selectfont Bj\"orn Hassfeld and Jakob Moritz}}
\end{center}

\begin{center}
	\vspace{0.25 cm}
	
		\textsl{Department of Physics, University of Wisconsin-Madison, Madison, WI 53706, USA}\\

	\vspace{1cm}
	\normalsize{\bf Abstract} \\[8mm]

\end{center}

\begin{center}
	\begin{minipage}[h]{15.0cm}
        
        We present an algorithm that constructs Calabi-Yau threefold orientifolds and their F--theory uplifts to elliptically-fibered Calabi-Yau fourfolds, embedded in toric varieties at codimension one and two respectively. The resulting Calabi-Yau fourfolds arise from triangulations of $6d$ reflexive polytopes --- which our method constructs from orientifold data --- and are smooth away from isolated terminal singularities. 
        For many of our fourfolds, the construction of the mirror manifold is immediate, enabling the computation of fourfold periods, and thus the seven-brane superpotential.
        We present multiple examples that demonstrate these capabilities. Our algorithms work with $\mathtt{CYTools}$ and are available through a GitHub repository \href{https://github.com/B-Hassfeld/Calabi-Yau-orientifolds-and-F-Theory-uplifts}{\faGithub}.
        
	\end{minipage}
\end{center}
\newpage
\setcounter{page}{1}
\pagestyle{plain}
\renewcommand{\thefootnote}{\arabic{footnote}}
\setcounter{footnote}{0}
%
%
\tableofcontents
\newpage
\pagenumbering{arabic} 
\setcounter{page}{1}

\section{Introduction}\label{sec:Introduction}

The unexplained hierarchy problems of the standard model of particle physics and cosmology beg for an embedding into a UV-complete framework of quantum gravity, i.e., string theory. Addressing in particular the nature of dark energy and the electroweak hierarchy problem requires understanding the distribution of local minima in the low-energy effective potential of string compactifications --- the string landscape \cite{Bousso:2000xa,Giddings:2001yu,Douglas:2003um,Ashok:2003gk,Denef:2004ze}.

One of the best-understood corners of the string landscape arises from compactifications of type IIB string theory on Calabi-Yau orientifolds with fluxes and branes \cite{Candelas:1985en,Gukov:1999ya,Dasgupta:1999ss,Giddings:2001yu,Kachru:2003aw,Grimm:2004uq,Balasubramanian:2005zx,Conlon:2005ki}. The resulting $4d$ effective field theory is $\mathcal{N}=1$ supergravity, and supersymmetry can plausibly be broken spontaneously at energies far below the Kaluza-Klein scale \cite{Kachru:2002gs}. 

In this framework, the quest for finding isolated minima of the effective potential --- called \emph{moduli stabilization} ---  requires a detailed understanding of the K\"ahler and superpotentials, at the classical level and including the leading (non-)perturbative quantum corrections, as in \cite{Kachru:2003aw,Balasubramanian:2005zx}. Thus, improving our understanding of the string landscape involves developing better methods to compute these. This work is aimed towards sharpening our understanding of the classical superpotential in flux compactifications of type IIB string theory.

Throughout, we will focus on Calabi-Yau threefold hypersurfaces in toric varieties \cite{Batyrev:1993oya,Kreuzer:2000xy}, and orientifolds thereof.\footnote{See the textbook \cite{cox2011toric} for an introduction into toric geometry.} For these, a large arsenal of computational methods has already been developed \cite{Kreuzer:2000xy,Braun:2017nhi,Demirtas:2018akl,Demirtas:2020dbm,Demirtas:2023als,Moritz:2023jdb,Dubey:2023dvu,MacFadden:2023cyf,MacFadden:2024him,MacFadden:2025ssx,MacFadden:2026hgm}, and explicit flux vacua with stabilized moduli have been constructed \cite{Demirtas:2019sip,Demirtas:2020ffz,Alvarez-Garcia:2020pxd,Demirtas:2021ote,McAllister:2024lnt}.

However, an essential piece is so far missing: the classical superpotential is determined by fluxes \cite{Giddings:2001yu}, and can, for special configurations of D7-branes, be computed in terms of the \emph{periods} of the underlying Calabi-Yau threefold \cite{Hosono:1993qy,Hosono:1994ax,Demirtas:2023als}. But, this approach does not naturally extend to study the superpotential at general points in the moduli space of D7-brane configurations (see however \cite{Braun:2008pz,Braun:2009bh,Arends:2014qca}). Hence, assessing the mass spectrum in cosmological vacua \cite{Demirtas:2021nlu}, and moreover the perturbative stability of non-supersymmetric de Sitter candidates \cite{McAllister:2024lnt} requires more powerful computational tools.

For this, we turn to F--theory \cite{Vafa:1996xn,Morrison:1996na,Morrison:1996pp,Denef:2008wq,Weigand:2018rez}: it
not only provides a non-perturbative extension, in the string coupling, of type IIB string theory, but it also allows for the use of geometric methods to study the dynamics of D7-branes in weakly coupled type IIB theory \cite{Sen:1996vd,Sen:1997gv}.
Furthermore, some of the most promising constructions of the standard model of particle physics in string theory have been constructed using F--theory, see e.g.~\cite{Cvetic:2019gnh,Raghuram:2019efb} as well as  \cite{Marchesano:2022qbx,Cvetic:2022fnv,Marchesano:2024gul} and refs.~therein.

The basic idea of F--theory is to interpret the spatially varying axio-dilaton profile in type IIB compactifications with D7-branes as a fibration of an auxiliary torus over the Calabi-Yau orientifold \cite{Vafa:1996xn, Morrison:1996na,Morrison:1996pp,Weigand:2018rez}. Unbroken $4d$ $\mathcal{N}=1$ supersymmetry requires that this fibration defines an elliptically fibered Calabi-Yau fourfold, and the deformation moduli of D7-branes are geometrized as complex structure moduli of the fourfold. Happily, the \emph{entire} classical superpotential --- including its dependence on the D7-brane moduli --- is then captured by the periods of the fourfold and a choice of quantized four-form flux \cite{Gukov:1999ya}.

In this paper, we will work out in detail the explicit constructions of the required elliptically fibered Calabi-Yau fourfolds of F--theory, starting from orientifolds of Calabi-Yau threefold hypersurfaces in toric varieties, in the spirit of \cite{Collinucci:2008zs,Collinucci:2009uh}. This process is known as \emph{uplifting} type IIB orientifolds to F--theory compactifications.

Our starting point is the classification of orientifolds of Calabi-Yau threefold hypersurfaces of \cite{Moritz:2023jdb}: given a Calabi-Yau threefold hypersurface $Y_3$ in a toric variety $V_4$, one can define a Calabi-Yau orientifold for every holomorphic involution of the toric variety $V_4$.
One firsts suitably restricts the defining polynomial of $Y_3$ and then considers the induced orbifold action on the hypersurface. 
This defines a Calabi-Yau orientifold hypersurface $B_3=Y_3/{\mathbb{Z}_2}$ that is cut out of an orbifold variety $\widetilde{V}_4:=V_4/{\mathbb{Z}_2}$ as a divisor $D_B$.

We improve on this setting in several ways: first, we restrict to cases where $\widetilde{V}_4$ is itself a toric variety.\footnote{An $\mathcal{O}(1)$ fraction of orientifold involutions remain after imposing this restriction.} This allows us to harness the power of toric geometry to study the orientifold hypersurface $B_3$.
Second, we develop a systematic toric blow up procedure that ensures smoothness of the resulting orientifolds. For this we select suitable refinements of the \emph{normal fan} of $D_B$, which is roughly analogous to Batyrev's construction \cite{batyrev1993dualpolyhedramirrorsymmetry} of smooth Calabi-Yau hypersurfaces in toric varieties.\footnote{For related combinatorial constructions of F-theory compactifications see \cite{Morrison:2012js,Morrison:2012np,Martini:2014iza,Taylor:2015isa,Halverson:2015jua,Taylor:2015ppa,Halverson:2017ffz,Taylor:2017yqr,Abbasi:2025lvn}.}

We then turn to uplifting this class of orientifolds to elliptically--fibered Calabi-Yau fourfolds. 
We do so by defining an appropriately twisted $\Proj_{[2,3,1]}$ fibration over $\widetilde{V}_4$ to form a total space $V_6^s$, as also considered in \cite{Collinucci:2008zs,Collinucci:2009uh}. 
The Weierstrass divisor $D_W$ therein together with the orientifold hypersurface $D_B$ then defines a complete-intersection-Calabi-Yau (CICY) $Y_4^s\subset V_6^s$. 
However, $Y_4^s$ is generically singular and we explicitly resolve the codimension-one singularities associated with \emph{non-Higgsable clusters} (NHCs) \cite{Morrison:2012np,Morrison:2012js,Morrison:2014lca,Halverson:2016vwx,Halverson:2017vde}, by performing appropriate toric blow-ups $V_6\to V_6^s$.
Smoothness of the Calabi-Yau fourfold $Y_4\subset V_6$ at higher codimension can, with suitable restrictions on the class of models, be achieved by again considering refinements of an appropriately defined normal fan. In this case, the $6d$ toric variety $V_6$ is defined by a fine, regular, and star triangulation (FRST) of a $6d$ reflexive polytope.
Our construction thus achieves sufficient smoothness of the F-theory uplift, and makes the mirror symmetry dictionary of  Batyrev-Borisov \cite{batyrev1993dualpolyhedramirrorsymmetry,borisov1993mirrorsymmetrycalabiyaucomplete,Batyrev:1994pg} directly applicable in many cases. 

As explained in \cite{Batyrev:1993oya,borisov1993mirrorsymmetrycalabiyaucomplete,Batyrev:1994ju,Batyrev:1994pg,Batyrev:1995ca,Batyrev:2007cq}, mirror symmetry can be understood in essentially combinatorial terms if the pair $(D_B,D_W)$ forms a \emph{nef-partition}. This in particular implies smoothness of $Y_4$ (away from terminal $\mathbb{Z}_n$ singularities at codimension four).
Not all our models satisfy this additional constraint. But for those that do, we can compute the Hodge numbers, and, 
more importantly, we have set the stage for explicit computations of the periods of $Y_4$. Thus one can begin to address the seven-brane moduli stabilization problem.

Finally, we construct explicit ensembles of Calabi-Yau orientifolds, and their F--theory uplifts, by applying our technology to the Kreuzer-Skarke list \cite{Kreuzer:2000xy}, incorporating all inequivalent $\Z_2$ actions of the 4d ambient toric varieties. 
Many such models, after applying our normal fan construction \S\ref{sec:Normal_fan}, yield nef-partitions of 6d reflexive polytopes. 
We find that $240,136,807$ out of the $473,800,776$ 4d reflexive polytopes admit at least one orientifold involution that leads to an F--theory uplift defined by a nef-partition. In total, we find $255,976,011$ nef-partitions generated from the Kreuzer-Skarke list.
These models are stored in a database made available here: \href{https://huggingface.co/datasets/jakobmoritz/F-theory_Uplifts_Nef-partitions}{\simpleicon{huggingface}}.

This paper is organized as follows: in \S\ref{sec:Orientifolds}, we  collect relevant facts about toric varieties. We then review the orientifold construction of \cite{Moritz:2023jdb}, and we explain how to deform models in order to ensure smoothness. 
In \S\ref{sec:Systematic_F_Theory_uplifts} and \S\ref{sec:Properties}, we construct  F--theory uplifts, ensure smoothness and embed our models into the Batyrev-Borisov construction of mirror symmetry \cite{borisov1993mirrorsymmetrycalabiyaucomplete,batyrev1994dualconesmirrorsymmetry}.
We present a variety of explicit examples in \S\ref{sec:Examples}, and present our ensemble of nef-partition uplifts.

\section{O3/O7 Calabi-Yau orientifolds in toric varieties}\label{sec:Orientifolds}
In this section, we review how to construct Calabi-Yau orientifolds as subvarieties of toric varieties where the orientifold involution is inherited from an involution of the ambient toric variety. 
In \S\ref{sec:toric_varieties}, we begin with a review of toric geometry and Calabi-Yau hypersurfaces therein.
In \S\ref{sec:toric_orbifolds} and \S\ref{sec:induced_action} we discuss, based on \cite{Moritz:2023jdb}, how to construct Calabi-Yau orientifolds as hypersurfaces in toric varieties. 
In \S\ref{sec:resolution} and \S\ref{sec:Normal_fan}, we develop procedures to find smooth orientifolds based on the construction of \S\ref{sec:toric_orbifolds}-\S\ref{sec:induced_action}.

\subsection{Toric varieties and their hypersurfaces}\label{sec:toric_varieties}
Let us begin by recalling the ingredients of toric geometry needed for the following construction.
We refer to \cite{cox2011toric} for a detailed treatment of the subject.
The reader familiar with toric geometry may skip ahead to \S\ref{sec:toric_orbifolds}.

Let $N$ be an $r$-dimensional lattice $N\simeq\Z^r$, and let $N_\mathbb{R}\simeq \mathbb{R}^r$ denote the vector space $N\otimes \mathbb{R}$. 
We denote by $M$ the dual lattice of $N$ and by $M_\mathbb{R}$ the dual vector space $M\otimes \mathbb{R}$.
The natural pairing between $N$ and $M$ is denoted by $\langle \cdot , \cdot \rangle :M\times N\to\Z$. 

\begin{definition}\label{def:Cone}
    A \emph{strongly convex, rational, polyhedral cone} $\sigma\subset N_{\mathbb{R}}$ is the set of all positive linear combinations of a fixed set of finitely many lattice points in $N$ such that $-\sigma\cap \sigma=\{0\}$. Throughout, we will denote by $\sigma^k$ strongly convex, rational, polyhedral cones of dimension $k$.
\end{definition}

\begin{definition}\label{def:toric_fan}
    A \emph{toric fan} $\Sigma$ is a set of strongly convex, rational, polyhedral cones $\sigma$ such that each face of a cone in $\Sigma$ is also in  $\Sigma$, and with every pairwise intersection of cones being a face of each.
\end{definition}

We denote by $\Sigma(d)\subset \Sigma$ the subset of $d$-dimensional cones of the fan $\Sigma$. One-dimensional cones are also called rays, and we denote by the set $\{v_i\}_{i=1}^n\subset N$ their integral primitive generators, where $n:=|\Sigma(1)|$. Given some cone $\sigma$, we denote by $\sigma(1)\subset \Sigma(1)$ its one-dimensional faces.
In practice, it is enough to specify the top-dimensional cones and define the fan as these cones and their lower-dimensional faces.

\begin{definition}\label{def:toric_variety}
    A (normal) \emph{toric variety} $V_{\Sigma}$ is defined by $\Sigma$ as follows. First, one associates a homogeneous coordinate $x_i\in\C$ to each of the rays generated by the $v_i$. We will also frequently denote by $x_v$ a homogeneous coordinate associated with a ray $v$ or by $x_i$ a homogeneous coordinate associated with a ray $v_i$.
\end{definition}

Next, one lets $Z$ be the union over all subvarieties
\begin{align}
    \{x_{i_1}=x_{i_2}=\ldots =x_{i_k}=0|\,  {v_{i_1},\ldots,v_{i_k}} \text{ are not contained in a common cone  of } \Sigma \}\, .
\end{align}
Then,
\begin{align}
V_{\Sigma}:=\frac{\C^{n}- Z}{G}\, ,
\end{align}
where $G$ is the group of toric scaling relations defined by
\begin{align}
    G:=\{\eta \in (\C/\Z)^{n} : \sum_{i=1}^{n}\eta^iv_i\in N\}\, ,
\end{align}
with action on the homogeneous coordinates $x_i$
\begin{equation}
    x_i\mapsto e^{2\pi i\eta^i}x_i\, .
\end{equation}

The component of the group of toric scaling relations connected to the identity is generated by $x_i\mapsto \lambda^{Q_{\alpha i}} x_i$, $\alpha=1,\ldots, h^{1,1}(V)$, in terms of a \emph{GLSM charge matrix $Q_{\alpha i}$}. Here, the rows of $Q_{\alpha i}$ form a $\mathbb{Z}$-basis of linear relations among the $v_i$,
\begin{align}
    \sum_i Q_{\alpha i}v_i=0\,,\qquad \forall \alpha\,.
\end{align}

Toric varieties deserve their name because of the existence of an algebraic torus $T=(\C^*)^r\subset V_{\Sigma}$ that acts on $V_{\Sigma}$. $T$ is naturally defined as $T=(N\otimes \mathbb{C})/N$. A torus representative
\begin{align}
    t=\sum_{i=1}^{n}\eta_t^iv_i \in N\otimes \mathbb{C}
\end{align}
acts on $V$ via
\begin{align}\label{eq:algebraic_torus_action}
    \phi_{[t]}: x_i\to e^{2\pi i \eta_t^i}x_i\, .
\end{align}
Due to the identification via the action of the group $G$, \eqref{eq:algebraic_torus_action} is well-defined, i.e. it only depends on $t$ modulo $N$.

\begin{definition}
    A toric variety $V_{\Sigma}$ is called \emph{simplicial} if all cones $\sigma\in \Sigma$ are simplicial, i.e., they are generated by a number of rays equal to their dimension.
\end{definition}

\begin{lemma}[\cite{cox2011toric} Thm.~3.1.19]
    A simplicial toric variety is smooth if and only if all $\sigma\in \Sigma$ are smooth, i.e., primitive generators of $\sigma$ generate the sub-lattice $N\cap \sigma$.
\end{lemma}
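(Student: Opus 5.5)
Smoothness of a normal variety is local, so we reduce the statement to individual cones. Since $V_{\Sigma}$ is covered by the affine open sets $U_\sigma=\mathrm{Spec}\,\C[\sigma^\vee\cap M]$ for $\sigma\in\Sigma$, and $U_\tau\subset U_\sigma$ whenever $\tau$ is a face of $\sigma$, $V_\Sigma$ is smooth if and only if every $U_\sigma$ is smooth. The statement therefore follows from the affine claim: for a strongly convex rational polyhedral cone $\sigma$, the variety $U_\sigma$ is smooth if and only if the primitive ray generators of $\sigma$ are part of a $\Z$-basis of $N$. The simplicial hypothesis means the generators are linearly independent. For such cones, ``generators generate $N\cap\sigma$'' is equivalent to ``generators extend to a $\Z$-basis of $N$'', because $N\cap \mathrm{span}_\R\sigma$ is a saturated sublattice and so any basis of it extends to a basis of $N$. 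We will first record this equivalence.

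\emph{Sufficiency.} Suppose $v_1,\dots,v_k$ extend to a basis $v_1,\dots,v_r$ of $N$. Then $\sigma^\vee\cap M$ is generated by $e_1^*,\dots,e_k^*,\pm e_{k+1}^*,\dots,\pm e_r^*$ in the dual basis. Hence
\begin{align}
U_\sigma\cong \C^k\times(\C^*)^{r-k},
\end{align}
which is smooth.

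\emph{Necessity.} This is the harder direction, and we argue via the Zariski tangent space at the distinguished point. First reduce to the case where $\sigma$ is full-dimensional: if $N'=N\cap\mathrm{span}\,\sigma$ has rank $k$, then $U_\sigma\cong U_{\sigma,N'}\times(\C^*)^{r-k}$, and smoothness passes to the factor $U_{\sigma,N'}$. Now assume $\sigma$ is full-dimensional, so $U_\sigma$ has a torus-fixed point $p$ corresponding to the maximal ideal $\mathfrak m$ spanned by all $\chi^m$ with $m\in S\setminus\{0\}$, where $S=\sigma^\vee\cap M$.

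The key step is the identification $\mathfrak m/\mathfrak m^2\cong$ the span of the irreducible elements of $S$, i.e.\ its Hilbert basis $\mathcal H$. This holds because $\mathfrak m^2$ is spanned by the characters $\chi^{m}$ of decomposable $m$, using strong convexity of $\sigma^\vee$ (so that $S$ is pointed). Consequently $\dim T_pU_\sigma=|\mathcal H|$. Smoothness at $p$ forces $|\mathcal H|=r$. The $r$ elements of $\mathcal H$ generate $S$, hence generate $M$ as a group, and are therefore a $\Z$-basis of $M$. Then $\sigma^\vee$ is generated by a basis of $M$, and dualizing shows that $\sigma$ is generated by the dual basis of $N$. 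These are the primitive ray generators, so they form a $\Z$-basis of $N$ and generate $N\cap\sigma$.

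Putting this together over all cones proves the statement. The main obstacle is the tangent space computation $\mathfrak m/\mathfrak m^2\leftrightarrow\mathcal H$. The argument also depends on the basic toric facts about the affine charts $U_\sigma$ and their gluing, which we take from \cite{cox2011toric}; the quotient description in Definition~\ref{def:toric_variety} agrees with this chart description.
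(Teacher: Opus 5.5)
Your proof is correct and takes essentially the same route as the proof the paper relies on. The paper gives no argument of its own and only cites \cite{cox2011toric}, where Thm.~3.1.19 is obtained, as in your write-up, from locality of smoothness together with the affine criterion for $U_\sigma$; there too the nontrivial direction splits off the torus factor and identifies $\mathfrak m/\mathfrak m^2$ at the torus-fixed point with the span of the Hilbert basis of $\sigma^\vee\cap M$, and your preliminary translation of the paper's phrasing ("generate $N\cap\sigma$") into the part-of-a-$\Z$-basis condition, via simpliciality and saturation of $N\cap\mathrm{span}_\R\sigma$, is also sound.
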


\begin{lemma}[\cite{cox2011toric} Thm.~3.1.19 \&  Thm.~3.4.6]
    A toric variety $V_{\Sigma}$ is compact if and only if $\Sigma$ is a complete fan, i.e., the union over all $\sigma\in \Sigma$ is all of $N_{\mathbb{R}}$. This is equivalent to $V_{\Sigma}$ being complete.
\end{lemma}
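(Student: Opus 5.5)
The strategy is to prove the two implications directly, using the orbit–cone correspondence and the valuative criterion for properness. The equivalence between compactness in the classical topology and completeness (properness over $\C$) is Serre's GAGA-type comparison for separated finite-type $\C$-schemes, and toric varieties are separated. So it suffices to show that $V_\Sigma$ is complete if and only if $|\Sigma|=N_\R$.

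For \emph{complete fan $\Rightarrow$ complete variety}, I would use the torus-adapted valuative criterion. Since $T\subset V_\Sigma$ is dense and $V_\Sigma$ is normal, it is enough to check that every one-parameter limit coming from the torus exists. Concretely, take a discrete valuation ring $R$ with fraction field $K$ and a $K$-point $\gamma$ of $T$. The valuation of its characters, $u\mapsto \mathrm{val}(\chi^u(\gamma))$, defines a point $v\in N$. Completeness of the fan puts $v$ in some cone $\sigma$. Then for every $u\in\sigma^\vee\cap M$ the function $\chi^u(\gamma)$ has nonnegative valuation, so $\gamma$ extends to an $R$-point of the affine chart $U_\sigma=\mathrm{Spec}\,\C[\sigma^\vee\cap M]$. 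To pass from torus points to all of $V_\Sigma$, I would apply the same argument on each torus orbit $O(\tau)$, which is itself a torus with a quotient fan, and do induction on orbit closures. Alternatively, one can cite that a separated scheme is proper once limits exist for curves meeting a dense open subset.

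For the \emph{converse}, suppose $v\in N\setminus|\Sigma|$; rationality lets us choose $v$ to be a lattice point. Consider the one-parameter subgroup $\lambda_v:\C^*\to T$. The limit $\lim_{t\to0}\lambda_v(t)$ exists in $U_\sigma$ exactly when $v\in\sigma$, because $\chi^u(\lambda_v(t))=t^{\langle u,v\rangle}$ is regular at $0$ for all $u\in\sigma^\vee$ if and only if $v\in\sigma^{\vee\vee}=\sigma$. So no limit exists in any chart, and $\lambda_v$ does not extend to a map $\C\to V_\Sigma$, which contradicts the valuative criterion. In topological terms, the sequence $\lambda_v(1/k)$ then has no convergent subsequence. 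The reason is that any limit point would lie in some orbit $O(\tau)$ and hence in the open set $U_\tau$, and there the regular functions $\chi^u$ with $u\in\tau^\vee$ would have to stay bounded along the sequence, which fails.

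The main obstacle I expect is this last topological step. It requires a careful argument that a limit point in $U_\tau$ forces $v\in\tau$. The same care is needed in the forward direction, when reducing the valuative criterion to torus points, since there one must handle orbits of lower dimension. Since this is a standard textbook result, in the paper I would simply state it as a citation of \cite{cox2011toric}.
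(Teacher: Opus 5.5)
Your proposal is correct and matches the paper, which gives no argument of its own and simply cites \cite{cox2011toric}. Your sketch reproduces the standard textbook route: the limit of $\lambda_v$ exists in $U_\sigma$ iff $v\in\sigma$, a torus-adapted valuative criterion via the valuation map $T(K)\to N$, and the comparison of compactness with completeness. The step you flag as the main obstacle is already settled by your own boundedness argument: convergence in $U_\tau$ forces $\langle u,v\rangle\geq 0$ for all $u\in\tau^\vee\cap M$, hence $v\in\tau^{\vee\vee}=\tau$, and first countability of $V_\Sigma$ then turns the absence of convergent subsequences into non-compactness.
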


Each $k$-dimensional cone $\sigma^k$ of the fan $\Sigma$ defines a codimension-$k$ subspace of $V$
\begin{align}
    \mathcal{F}(\sigma^k)=\{x_{v}=0\,\, \forall v \,\, \text{that generate } \sigma^k\}\, .
\end{align}
The codimension-one subvarieties 
\begin{equation}
    D_i = \{x_i=0\}
\end{equation}
associated to the rays $v_i$ define the torus invariant, or \emph{prime toric divisors}.

\begin{definition}\label{def:Weil_and_Cartier_divisor}
    A torus-invariant \emph{Weil divisor} is a formal linear combination of prime toric divisors $D=\sum_i a_iD_i$, with integer coefficients $a_i$. Such a Weil divisor is a \emph{Cartier divisor} if
\begin{equation}\label{eq:Cartier_div}
    \forall \sigma^r\in \Sigma:\quad \exists m_\sigma\in M \,: \quad \langle m_\sigma, v_i\rangle+a_i=0 \quad \forall v_i\in \sigma^r(1)\, .
\end{equation}
The set $\{m_\sigma\}$ furnishes the \emph{Cartier data} of a Cartier divisor.
\end{definition}

Cartier divisors correspond to line bundles on $V_{\Sigma}$, while Weil divisors in general only correspond to sheaves. 

Local monomial sections of a sheaf $\mathcal{O}_{V_\Sigma}(D)$, with $D=\sum_i a_i D_i$, are given by characters $m\in M$,
\begin{equation}\label{eq:monomial_sections}
    s_m=\prod_{i=1}^n x_i^{\langle m, v_i\rangle+a_i}\, ,
\end{equation}
and we have $\mathcal{O}_{V_\Sigma}(D)=\mathcal{O}_{V_\Sigma}(D')$ if and only if $D-D'$ is a \emph{principal divisor}, i.e., $D-D'=\sum_i \langle m,v_i \rangle D_i$ for some character $m\in M$. This motivates the equivalence relation
\begin{equation}
    D\sim D' \quad \Leftrightarrow \quad D-D' \,\, \text{is principal.}\label{eq:Equivalence_relation_principal_divisors}
\end{equation}

\begin{definition}\label{def:Divisor_class_group}
    The group of torus-invariant Weil divisors modulo the equivalence relation \eqref{eq:Equivalence_relation_principal_divisors} is called the \textit{divisor class group}.
\end{definition}

\begin{definition}\label{def:Newton_Polytope}
    The \emph{Newton Polytope} $\Delta_D$ of a torus-invariant Weil divisor $D$ is defined as 
    \begin{align}
        \Delta_D := \{m\in M_\R | \langle m,v_i\rangle \geq -a_i\,\, \forall i=1,\ldots,n\} \subset M_\R\, .
    \end{align}
\end{definition}

The vector space of global sections $\Gamma(\mathcal{O}_{V_\Sigma}(D))$ is generated by the globally defined monomial sections \eqref{eq:monomial_sections}, which in turn are defined from integral characters $m$ in the Newton Polytope $\Delta_D$.
Thus, the most general global section of $\mathcal{O}_{V_\Sigma}(D)$ takes the form
\begin{align}
    P=\sum_{m\in M\cap \Delta_D} \psi_m s_m\,,\qquad \psi_m\in\C\,,\label{eq:hypersurface_polynomial}
\end{align}
with the complex parameters $\psi_m$ parameterizing the family of hypersurfaces $\{P=0\}$.

\begin{definition}
    The \textit{Minkowski-sum} of two polytopes $\Delta_1,\Delta_2$ yields a new polytope
\begin{align}
    \Delta_1+\Delta_2 := \{p_1+p_2\,|\,p_1\in \Delta_1\,, p_2\in \Delta_2\}\, .
\end{align}
\end{definition}

\begin{definition}
    The \textit{basepoint locus} $BL_D$ of a divisor $D$ is defined as the subspace $BL_D\subset V$ contained in every member of the family $\{P=0\}$. Correspondingly, a divisor $D$ is \textit{basepoint free} if for all points $p\in V$, there exists a section $s$ of $D$ such that $s(p)\neq 0$.
\end{definition}

\begin{definition}\label{def:nef}
    A divisor is \textit{numerically effective} or just \textit{nef} for short, if the corresponding hypersurface has only non-negative intersections with effective curves, that is
\begin{align}
    D\cdot C \geq 0\,,\qquad \forall\text{ effective curves } C \text{ of }V\,.
\end{align}
\end{definition}

We collect the following statements about divisors:

\begin{lemma}[\cite{cox2011toric} Exercise 4.3.2]\label{lem:shift}
    The Newton Polytopes of two torus-invariant Weil divisors $D,D'$ that are equivalent under the relation \eqref{eq:Equivalence_relation_principal_divisors} differ by a translation of an element $m\in M$.
\end{lemma}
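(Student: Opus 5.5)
The plan is to unwind the definitions: write $D=\sum_i a_i D_i$ and $D'=\sum_i a'_i D_i$. By the equivalence relation \eqref{eq:Equivalence_relation_principal_divisors}, $D\sim D'$ means $D-D'$ is principal. So there is a character $m\in M$ with $a_i-a'_i=\langle m,v_i\rangle$ for all $i=1,\ldots,n$. We then show directly from Definition~\ref{def:Newton_Polytope} that $\Delta_{D'}=\Delta_D+m$, i.e.\ that the two Newton polytopes differ by a translation by this lattice vector.

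The key step is a one-line change of variables. Take $u\in M_\R$ and set $u'=u+m$. Then, for every $i$,
\begin{align}
\langle u',v_i\rangle\geq -a'_i \;\Longleftrightarrow\; \langle u,v_i\rangle+\langle m,v_i\rangle\geq -a_i+\langle m,v_i\rangle \;\Longleftrightarrow\; \langle u,v_i\rangle\geq -a_i\,,
\end{align}
where we substituted $a'_i=a_i-\langle m,v_i\rangle$. Hence $u\in\Delta_D$ if and only if $u+m\in\Delta_{D'}$, and so $\Delta_{D'}=\Delta_D+m$.

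Because $m\in M$, the translation maps lattice points to lattice points. This gives a bijection $M\cap\Delta_D\to M\cap\Delta_{D'}$, which is the form in which the statement is used for global sections \eqref{eq:hypersurface_polynomial}. Moreover, the monomial \eqref{eq:monomial_sections} attached to $u$ for $D$ coincides with the monomial attached to $u+m$ for $D'$, since the exponents $\langle u,v_i\rangle+a_i$ and $\langle u+m,v_i\rangle+a'_i$ agree. This is worth remarking as a consistency check.

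There is no real obstacle here. The only point needing care is the sign convention: the relation fixes the sign of $m$ in $D-D'=\sum_i\langle m,v_i\rangle D_i$, and this determines whether the shift is by $+m$ or $-m$. Since $-m\in M$ as well, the statement holds either way.
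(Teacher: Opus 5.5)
Your proof is correct: substituting $a'_i=a_i-\langle m,v_i\rangle$ into the defining inequalities gives $\Delta_{D'}=\Delta_D+m$ with $m\in M$, and your check that the monomial sections attached to $u$ and $u+m$ coincide is also right. The paper gives no proof of its own and only cites \cite{cox2011toric} (Exercise 4.3.2); that exercise is meant to be solved by exactly this direct change of variables, so your route is the expected one.
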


\begin{lemma}[\cite{cox2011toric} Theorem.~6.1.7]\label{lem:NP_Cartier_and_nef_is_lattice}
    A torus-invariant Cartier divisor $D$ is \emph{nef} if its Cartier data is contained in $\Delta_D$.
    If the fan $\Sigma$ of the underlying toric variety is complete, the Newton polytope $\Delta_D$ of a Cartier and nef divisor $D$ is a lattice polytope, that is a Polytope whose vertices are lattice points.
\end{lemma}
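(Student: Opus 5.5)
The two assertions in Lemma~\ref{lem:NP_Cartier_and_nef_is_lattice} are proved separately. Both rest on the local description of a Cartier divisor $D=\sum_i a_iD_i$ by its Cartier data $\{m_\sigma\}_{\sigma\in\Sigma(r)}$. For each maximal cone $\sigma$, the rational function $\chi^{m_\sigma}$ is a local generator of $\mathcal{O}_{V_\Sigma}(D)$ on the affine chart $U_\sigma$. The condition $m_\sigma\in\Delta_D$ is equivalent to
\begin{align}
\langle m_\sigma,v_i\rangle\ge -a_i\quad\text{for all } i\,,
\end{align}
which says that the monomial section $s_{m_\sigma}$ of \eqref{eq:monomial_sections} is a global section. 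By construction, $s_{m_\sigma}$ has exponent zero on every $x_i$ with $v_i\in\sigma(1)$.

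\emph{Nefness.} Suppose all $m_\sigma\in\Delta_D$. Take any point $p$ lying in the torus-orbit closure $\mathcal{F}(\tau)$ for some cone $\tau$, and choose a maximal cone $\sigma\supseteq\tau$. The section $s_{m_\sigma}$ is built only from coordinates $x_i$ with $v_i\notin\sigma(1)$. On $U_\sigma$ these coordinates are nonzero, so $s_{m_\sigma}$ does not vanish on $U_\sigma\ni p$. Hence $D$ is basepoint free. A basepoint-free divisor is nef: given an irreducible effective curve $C$, pick a point $p\in C$ and a section $s$ with $s(p)\neq0$. Then $C\not\subset\{s=0\}$, so $D\cdot C=\deg(\{s=0\}|_C)\ge0$, and this extends to all effective curves by linearity.

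\emph{Lattice polytope.} Now let $\Sigma$ be complete and $D$ Cartier and nef. I would first show that $\Delta_D=\mathrm{conv}\{m_\sigma\}$, and conclude from $m_\sigma\in M$. The standard route uses torus-invariant curves. For each wall $\tau=\sigma\cap\sigma'\in\Sigma(r-1)$, the curve $C_\tau=\mathcal{F}(\tau)$ satisfies
\begin{align}
D\cdot C_\tau=\langle m_\sigma-m_{\sigma'},u\rangle\,,
\end{align}
where $u\in N$ is chosen appropriately with respect to $\tau^\perp$ and the orientation of $\sigma'$. Nefness gives $D\cdot C_\tau\ge0$, which says that the support function $\varphi_D$ (equal to $\langle m_\sigma,\cdot\rangle$ on $\sigma$) is convex across every wall. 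For a complete fan, local convexity across walls implies global convexity, so $\langle m_\sigma,v\rangle\ge\varphi_D(v)$ for all $v$ and all $\sigma$. In particular $\langle m_\sigma,v_i\rangle\ge -a_i$, so $m_\sigma\in\Delta_D$ (this is the converse of the first part).

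It remains to show every vertex of $\Delta_D$ is some $m_\sigma$. Let $m$ be a vertex. Then $m$ is cut out by $r$ linearly independent tight inequalities $\langle m,v_i\rangle=-a_i$. By completeness, some generic $v$ in the interior of the normal cone at $m$ lies in a maximal cone $\sigma$. Convexity of $\varphi_D$ then forces $\min_{\Delta_D}\langle\cdot,v\rangle=\varphi_D(v)=\langle m_\sigma,v\rangle$. Since $v$ is generic, this minimum is attained uniquely, so $m=m_\sigma\in M$.

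The main obstacle is the local-to-global convexity step: passing from nonnegativity on the wall curves $C_\tau$ to global convexity of $\varphi_D$. I would prove it by restricting $\varphi_D$ to a generic line segment in $N_\mathbb{R}$. Such a segment crosses walls only in their relative interiors, where the function is piecewise linear with nondecreasing slopes, hence convex; density of generic segments then gives convexity everywhere. Completeness is essential here, because it guarantees every direction lies in some cone.
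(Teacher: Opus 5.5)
Your proof is correct. It is essentially the standard argument of the source the paper cites without proof (Cox--Little--Schenck): basepoint-freeness from $m_\sigma\in\Delta_D$, the wall-crossing formula for $D\cdot C_\tau$, local-to-global convexity of $\varphi_D$ along generic segments, and $\Delta_D=\mathrm{Conv}\{m_\sigma\}$. There is one sign slip to fix. You conclude $\varphi_D\le\langle m_\sigma,\cdot\rangle$, i.e.\ $\varphi_D=\min_\sigma\langle m_\sigma,\cdot\rangle$, so the slopes along a generic segment are non\emph{increasing} rather than nondecreasing. The ``convexity'' you need is the Cox--Little--Schenck notion, which is concavity in the usual sense.
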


\begin{lemma}[\cite{cox2011toric} \S6]\label{lem:nef_Mink_sum}
    Given two torus-invariant divisors $D_1$ and $D_2$ that are Cartier and nef with associated Newton polytopes $\Delta_1$ and $\Delta_2$, the Newton polytope of $D_1+D_2$ is given by the Minkowski-sum $\Delta_1+\Delta_2$.
\end{lemma}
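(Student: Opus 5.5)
We need to show that for torus-invariant Cartier nef divisors $D_1,D_2$ on $V_\Sigma$, $\Delta_{D_1+D_2}=\Delta_1+\Delta_2$. Throughout we take the fan complete, as in Lemma~\ref{lem:NP_Cartier_and_nef_is_lattice}, so that nef Cartier divisors have Cartier data lying in their Newton polytopes. The proof is by two inclusions.

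\emph{The easy inclusion.} Write $D_1=\sum_i a_iD_i$ and $D_2=\sum_i b_iD_i$, so $D_1+D_2=\sum_i(a_i+b_i)D_i$. If $m_1\in\Delta_1$ and $m_2\in\Delta_2$, adding the inequalities gives
\begin{equation*}
\langle m_1+m_2,v_i\rangle\geq -a_i-b_i\quad\text{for all } i,
\end{equation*}
so $\Delta_1+\Delta_2\subseteq\Delta_{D_1+D_2}$. This holds without any hypothesis on the divisors.

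\emph{The reverse inclusion.} This is the step where Cartier and nef are used. Let $\{m_{1,\sigma}\}$ and $\{m_{2,\sigma}\}$ be the Cartier data as in \eqref{eq:Cartier_div}.
\begin{enumerate}
\item Since $D_1+D_2$ is Cartier with data $m_\sigma=m_{1,\sigma}+m_{2,\sigma}$, and each $m_{j,\sigma}\in\Delta_j$ by nefness, the easy inclusion gives $m_\sigma\in\Delta_{D_1+D_2}$. Hence $D_1+D_2$ is nef as well.
\item For each top cone $\sigma\in\Sigma(r)$, the constraints of $\Delta_{D_1+D_2}$ indexed by rays of $\sigma$ are tight at $m_\sigma$. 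So near $m_\sigma$ the polytope looks like $m_\sigma+\sigma^\vee$.
\item More globally, the nef property gives the standard description of the support function $\varphi_D(u)=\min_{m\in\Delta_D}\langle m,u\rangle$. This function is linear on each cone $\sigma$, equal to $\langle m_\sigma,\cdot\rangle$ there, with $\varphi_D(v_i)=-a_i$. Equivalently, the normal fan of $\Delta_D$ is coarsened by $\Sigma$.
\item Since $\varphi_{D_1}$ and $\varphi_{D_2}$ are linear on the same cones, $\varphi_{D_1+D_2}=\varphi_{D_1}+\varphi_{D_2}$ holds cone by cone. On the other hand, for any polytopes the support function of a Minkowski sum is additive: $h_{\Delta_1+\Delta_2}=h_{\Delta_1}+h_{\Delta_2}$.
\item Two compact convex sets with the same support function coincide. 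Therefore $\Delta_{D_1+D_2}=\Delta_1+\Delta_2$.
\end{enumerate}

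\emph{Main obstacle.} The key step is item~3: that the support function of $\Delta_D$ takes the value $-a_i$ at $v_i$ and is linear on each cone. Linearity on $\sigma$ follows because $m_\sigma\in\Delta_D$ minimizes $\langle\cdot,u\rangle$ over $\Delta_D$ for all $u\in\sigma$. Indeed, write $u=\sum_{v_i\in\sigma}c_iv_i$ with $c_i\geq 0$. Then for any $m\in\Delta_D$,
\begin{equation*}
\langle m,u\rangle=\sum_i c_i\langle m,v_i\rangle\geq-\sum_i c_ia_i=\langle m_\sigma,u\rangle .
\end{equation*}
Completeness ensures every $u$ lies in some cone. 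Hence $h_{\Delta_D}$ is the piecewise linear function determined by the $m_\sigma$, and additivity follows directly.
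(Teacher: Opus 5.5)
Your proof is correct. The paper gives no argument of its own and only cites Cox--Little--Schenck \S6, and what you wrote is essentially the standard support-function proof behind that citation: once the Cartier data lie in $\Delta_D$, the function $u\mapsto\min_{m\in\Delta_D}\langle m,u\rangle$ equals $\langle m_\sigma,u\rangle$ on each maximal cone $\sigma$, and additivity of support functions under Minkowski sums gives the equality.

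Two small remarks. First, the completeness you impose is genuinely needed, even though the paper's statement omits it. Take $\mathbb{F}_1$ with the point $p=E\cap F$ removed, where $E$ is the exceptional curve and $F$ is one of the two torus-invariant fibers. There $E$ and $F$ are both Cartier and nef, since $E$ is no longer a complete curve. Yet $\Delta_E+\Delta_F$ is a segment while $\Delta_{E+F}$ is a triangle. Second, your item 2 is never used and fails for nef non-ample $D$ (e.g.\ $D=0$, where $\Delta_D=\{0\}$), so it is best dropped.
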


\begin{definition}
    The \emph{normal fan} $\Sigma_\Delta$ of a full-dimensional lattice polytope
\begin{align}
    \Delta=\{\langle m,v_i \rangle+a_i \geq 0\,\, \forall i\}\subset M_{\mathbb{R}}\label{eq:normal_fan_def}
\end{align}
is constructed by associating a full-dimensional cone $\sigma_m\subset N_{\mathbb{R}}$ to every vertex $m$ of $\Delta$. $\sigma_m$ is the cone over all the $v_i$ associated with hyperplanes $H_i:=\{\langle m,v_i \rangle+a_i= 0\}\subset M_{\mathbb{R}}$ that contain $m$.
\end{definition}

\begin{lemma}[\cite{cox2011toric} \S2.3]\label{lemma:normal_fan}
    Let $\Delta$ be a full-dimensional lattice polytope, and let $\Sigma$ be its normal fan. Further, let $V\equiv V_{\Sigma}$ and let $D$ be the torus-invariant divisor whose Cartier data are the vertices of $\Delta$. Then, $D$ is Cartier and ample, and in particular nef. For any refinement $\Sigma'$ of $\Sigma$, corresponding to a blow-up $\pi: \,V_{\Sigma'}\rightarrow V_{\Sigma}$ the pullback $D':=\pi^*(D)$ remains Cartier and nef. We have $D'=\sum_i a_i D_i$ with $a_i=-\langle m,v_i\rangle$ for $v_i$ contained in a cone $\sigma_m$ dual to a vertex $m\in \Delta$.
\end{lemma}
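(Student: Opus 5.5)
The plan is to verify the three assertions of Lemma~\ref{lemma:normal_fan} in order: $D$ is Cartier and ample; the pullback along a refinement is Cartier and nef; and the pullback has coefficients $a_i=-\langle m,v_i\rangle$. Throughout, the normal fan $\Sigma$ is built from the facet normals $v_i$ of $\Delta$, with $\Delta=\{\langle m,v_i\rangle+a_i\ge 0\}$. For each vertex $m$, the cone $\sigma_m$ is spanned by the $v_i$ of the facets through $m$.

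First I will check that $\Sigma$ is a complete fan and that $D=\sum_i a_iD_i$ is Cartier. Completeness follows by taking any $u\in N_\R$ and minimizing $\langle\cdot,u\rangle$ over $\Delta$. The minimum is attained on some face, hence at a vertex $m$. By LP duality (Farkas), $u$ is a nonnegative combination of the active normals at $m$, so $u\in\sigma_m$. The cone-intersection property follows from the fact that $\sigma_m\cap\sigma_{m'}$ is the cone over the normals of the smallest face containing both $m$ and $m'$. For the Cartier condition \eqref{eq:Cartier_div}, take $m_{\sigma_m}:=m$, which is a lattice point because $\Delta$ is a lattice polytope. For every $v_i\in\sigma_m(1)$ we have $\langle m,v_i\rangle+a_i=0$ by the definition of $\sigma_m$, so these choices furnish Cartier data.

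Next, I show that the Cartier data lie in $\Delta_D=\Delta$, which gives nefness by Lemma~\ref{lem:NP_Cartier_and_nef_is_lattice}. For ampleness I use the standard criterion that the support function is strictly convex. Concretely, for $v_j\notin\sigma_m$ the inequality $\langle m,v_j\rangle+a_j>0$ holds strictly, because the hyperplane $H_j$ does not contain the vertex $m$. Combined with the fact that the $m_\sigma$ are distinct across maximal cones, this is the strict convexity criterion (\cite{cox2011toric} Thm.~6.1.14), so I will cite it there.

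For a refinement $\Sigma'$, every maximal cone $\tau\in\Sigma'$ lies in some $\sigma_m$, and I assign it the Cartier datum $m_\tau:=m$. The pulled-back divisor is then defined by the support function $\varphi_D$ restricted to $\Sigma'$. This function is linear on each $\tau$ and equals $-\langle m,\cdot\rangle$ there. Evaluating it on a new ray $v\in\tau\subset\sigma_m$ gives the coefficient $a_v=-\langle m,v\rangle$, which is the stated formula. This coefficient is well defined when $v$ lies in several cones $\sigma_m,\sigma_{m'}$, since $\varphi_D$ is continuous: for $v\in\sigma_m\cap\sigma_{m'}$ both $m$ and $m'$ lie on the common face, so the two expressions agree. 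Convexity of the support function is unaffected by refining the fan, and all Cartier data remain vertices of $\Delta$, hence inside $\Delta_{D'}\supseteq\Delta$. Therefore $D'$ is nef by Lemma~\ref{lem:NP_Cartier_and_nef_is_lattice}. Ampleness is lost in general, since strict convexity fails across walls that are new in $\Sigma'$.

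I expect the main obstacle to be the well-definedness step: showing that new rays lying on common faces of several $\sigma_m$ receive a consistent coefficient, and that $\Delta_{D'}=\Delta$. The latter requires checking that the inequalities coming from the new rays are redundant. This holds because each new $v$ is a nonnegative combination of the old normals, so its inequality is implied by theirs.
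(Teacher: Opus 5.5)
Your proposal is correct. It is essentially the standard argument that the paper defers to by citing \cite{cox2011toric} without giving its own proof: the vertices of $\Delta$ serve as Cartier data on the normal fan, ampleness follows from strict convexity of the support function, and the pullback along a refinement is described by the same piecewise-linear support function, with the new-ray inequalities redundant so that $\Delta_{D'}=\Delta$.
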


\begin{theorem}[\cite{cox2011toric} Prop.~6.1.13]\label{thm:normal_fan_of_Minkowski_sum}
    Let $\Delta^{(1)},\Delta^{(2)}$ be full dimensional lattice polytopes. Then, the normal fan $\Sigma_{\Delta^{(1)}+\Delta^{(2)}}$ is the coarsest common refinement of $\Sigma_{\Delta^{(1)}}$ and $\Sigma_{\Delta^{(2)}}$.
\end{theorem}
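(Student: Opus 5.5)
The plan is to characterize the normal fan of a full-dimensional polytope through its face structure. For a polytope $\Delta\subset M_{\mathbb{R}}$ and a vector $u\in N_{\mathbb{R}}$, let $\mathrm{face}_u(\Delta)$ denote the face on which $\langle\cdot,u\rangle$ attains its minimum over $\Delta$. The cone of $\Sigma_\Delta$ dual to a face $F$ is then the closure of $\{u : \mathrm{face}_u(\Delta)=F\}$. Equivalently, two vectors $u,u'$ lie in the relative interior of the same cone of $\Sigma_\Delta$ exactly when $\mathrm{face}_u(\Delta)=\mathrm{face}_{u'}(\Delta)$. This is consistent with the vertex-based definition above, since the cone $\sigma_m$ is precisely the set of $u$ minimized at the vertex $m$.

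\textbf{Additivity of faces.} The key identity to establish is
\begin{equation}
\mathrm{face}_u(\Delta^{(1)}+\Delta^{(2)})=\mathrm{face}_u(\Delta^{(1)})+\mathrm{face}_u(\Delta^{(2)}) .
\end{equation}
\begin{itemize}
\item Since $\langle p_1+p_2,u\rangle=\langle p_1,u\rangle+\langle p_2,u\rangle$, the minimum of $u$ over the Minkowski sum equals the sum of the two individual minima.
\item Consequently $p_1+p_2$ attains this minimum if and only if each $p_j$ attains its own minimum on $\Delta^{(j)}$.
\item The decomposition $p=p_1+p_2$ need not be unique, but any decomposition works for this argument.
\end{itemize}

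\textbf{Recovering the summand faces.} Next I show that the face of the sum determines both summand faces. If $F_1+F_2=F_1'+F_2'$, where these are faces of $\Delta^{(1)}$ and $\Delta^{(2)}$ cut out by $u$ and $u'$ respectively, then $F_1=F_1'$ and $F_2=F_2'$.
\begin{itemize}
\item Take a generic $w$ and compare the $w$-minimizing vertices on both sides.
\item The vertices of a Minkowski sum of faces are sums of vertices, and by additivity each is uniquely written as $\mathrm{vert}_w(F_1)+\mathrm{vert}_w(F_2)$.
\item Hence $\mathrm{vert}_w(F_1)=\mathrm{vert}_w(F_1')$ for every generic $w$.
\item Varying $w$ recovers all vertices of $F_1$, so $F_1=F_1'$, and likewise $F_2=F_2'$.
\end{itemize}

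\textbf{Comparing the fans.} Combining the two previous steps, $u$ and $u'$ lie in the same relative-interior cone of $\Sigma_{\Delta^{(1)}+\Delta^{(2)}}$ if and only if they lie in the same relative-interior cone of both $\Sigma_{\Delta^{(1)}}$ and $\Sigma_{\Delta^{(2)}}$. It follows that the cones of the sum's fan are exactly the intersections $\sigma_1\cap\sigma_2$ with $\sigma_j\in\Sigma_{\Delta^{(j)}}$: closure commutes appropriately because both fans are complete and polyhedral.

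The collection of such intersections is, by definition, the coarsest common refinement.
\begin{itemize}
\item It refines each $\Sigma_{\Delta^{(j)}}$.
\item Any common refinement must subdivide every $\sigma_1\cap\sigma_2$.
\item It is a fan, because intersections of faces are faces.
\end{itemize}
Full-dimensionality ensures that all cones are strongly convex, so everything above takes place in complete fans in $N_{\mathbb{R}}$.

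\textbf{Main obstacle.} The hardest step is to make the closure and relative-interior bookkeeping rigorous, so that "same face" equivalence classes genuinely match the cones $\sigma_1\cap\sigma_2$, including their lower-dimensional faces. I would handle this by working with closed cones directly: I would show that $u\in\sigma_{F_1}\cap\sigma_{F_2}$ if and only if $\mathrm{face}_u(\Delta^{(j)})\supseteq F_j$ for $j=1,2$, which holds if and only if $\mathrm{face}_u(\Delta^{(1)}+\Delta^{(2)})\supseteq F_1+F_2$.
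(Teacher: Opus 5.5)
Your argument is correct. The paper gives no proof of its own here; it simply quotes the result from Cox--Little--Schenck. Your face-additivity argument is a standard proof of it, and the closed-cone characterization $\sigma_{F_1}\cap\sigma_{F_2}=\{u:\mathrm{face}_u(\Delta^{(1)}+\Delta^{(2)})\supseteq F_1+F_2\}$ handles the bookkeeping.

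One step can be streamlined. Recovering the summand faces needs neither a generic $w$ nor uniqueness of vertex decompositions. (That uniqueness would have to hold over all of $\Delta^{(1)}\times\Delta^{(2)}$, not just $F_1\times F_2$, which your wording leaves implicit.) For $a'\in F_1'$ and $b'\in F_2'$ you have $a'+b'\in F_1+F_2=\mathrm{face}_u(\Delta^{(1)}+\Delta^{(2)})$, so your second bullet gives $a'\in F_1$. Swapping the roles of $u$ and $u'$ then gives $F_1=F_1'$.
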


\begin{definition}\label{def:reflexive_and_polar_dual}
    A polytope $\Delta\subset M_\R$ is reflexive if $0\in \Delta$ and the polar dual polytope
\begin{align}
    \Delta^\circ := \{n\in N_\R: \langle m,n\rangle \geq -1\,,\,\forall m \in \Delta\}\,,
\end{align}
is a lattice polytope. In this case $(\Delta^\circ)^\circ = \Delta$.
\end{definition}

\begin{theorem}[\cite{cox2011toric} Thm~6.3.12]\label{thm:nef_is_bp_free}
    A torus-invariant Cartier and nef divisor in a complete and normal toric variety is basepoint free.
\end{theorem}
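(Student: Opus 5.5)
The plan is to show that the Cartier data produce, at every point of $V_\Sigma$, a global monomial section that does not vanish there. I work with the orbit decomposition of $V_\Sigma$ under the torus $T$. Write $D=\sum_i a_iD_i$ with Cartier data $\{m_\sigma\}_{\sigma\in\Sigma(r)}$, so that $\langle m_\sigma,v_i\rangle+a_i=0$ for all $v_i\in\sigma(1)$. By completeness of $\Sigma$, every point $p\in V_\Sigma$ lies in the torus orbit of some cone $\tau\in\Sigma$. That is, the coordinates $x_i$ with $v_i\in\tau(1)$ vanish at $p$, and all other coordinates are nonzero. Since $\Sigma$ is complete, $\tau$ is a face of some maximal cone $\sigma\in\Sigma(r)$. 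This is where completeness first enters.

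The candidate section is the monomial $s_{m_\sigma}=\prod_i x_i^{\langle m_\sigma,v_i\rangle+a_i}$. Two checks are needed.

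First, $s_{m_\sigma}$ must be a global section, i.e.\ $m_\sigma\in\Delta_D$. This is exactly the nef condition as phrased in Lemma~\ref{lem:NP_Cartier_and_nef_is_lattice}. That lemma only states the implication ``Cartier data in $\Delta_D$ $\Rightarrow$ nef'', so I will need the converse. I would prove it as follows. For a wall $\omega=\sigma\cap\sigma'$ between two adjacent maximal cones, let $C_\omega$ be the associated torus-invariant curve. The intersection number $D\cdot C_\omega$ is a positive multiple of $\langle m_\sigma-m_{\sigma'},u\rangle$, where $u$ is the ray of $\sigma'$ not in $\omega$, normalized appropriately. Nefness gives $D\cdot C_\omega\ge0$ for all walls, which says the piecewise-linear support function $\phi_D$ (equal to $m_\sigma$ on $\sigma$) is convex. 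Convexity of $\phi_D$ on a complete fan then gives $\langle m_\sigma,v\rangle\ge\phi_D(v)$ for every $v$. Applied to $v=v_i$ this reads $\langle m_\sigma,v_i\rangle\ge -a_i$ for all $i$, which is $m_\sigma\in\Delta_D$.

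Second, $s_{m_\sigma}$ must not vanish at $p$. By the Cartier condition, the exponent of $x_i$ is zero for every $v_i\in\sigma(1)\supseteq\tau(1)$. So no coordinate that vanishes at $p$ appears in $s_{m_\sigma}$, and the remaining coordinates are nonzero at $p$. Hence $s_{m_\sigma}(p)\neq0$. Since $p$ was arbitrary, the base locus is empty.

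I expect the main obstacle to be the local-to-global step: showing that nefness (non-negativity on curves) forces $m_\sigma\in\Delta_D$, i.e.\ that convexity checked across adjacent maximal cones implies global convexity. I would first try the argument along a generic line segment. Take a segment from an interior point of $\sigma$ to $v_i$; it crosses only codimension-one walls, and at each crossing the wall inequality shows the slope of $\phi_D$ can only increase. Integrating along the segment then yields the global inequality.
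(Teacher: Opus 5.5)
Your proposal is correct and follows essentially the standard proof behind the citation; the paper itself gives no argument and only refers to \cite{cox2011toric}. That proof uses the same three steps: it writes $D\cdot C_\omega$ as a positive multiple of $\langle m_\sigma-m_{\sigma'},u\rangle$, upgrades the wall inequalities to global convexity of $\phi_D$ in the sense $\phi_D(v)=\min_{\sigma}\langle m_\sigma,v\rangle$ by a segment argument, and concludes because $m_\sigma\in\Delta_D$ makes $s_{m_\sigma}$ a global section that is nowhere zero on $U_\sigma$.

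One sign slip needs fixing, plus a genericity point. With your conventions, nefness forces the slope of $\phi_D$ along the segment to \emph{decrease} at each wall crossing, and that is what yields $\langle m_\sigma,v_i\rangle\ge\phi_D(v_i)=-a_i$; an increasing slope would give the opposite inequality. Also, choose the starting point in the interior of $\sigma$ generically, so that the open segment meets only interiors of maximal cones and relative interiors of walls.
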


\begin{theorem}[Bertini, \cite{hartshorne1977algebraic} II 8.18 and III 7.9.1]\label{thm:Bertini}
    Let $Y\subset V$ be a subvariety defined by the vanishing of a sufficiently general global section of a line bundle $\mathcal{L}=\mathcal{O}_V(D)$, with $D$ basepoint free. Then, $Y$ is smooth away from singularities of $V$.
\end{theorem}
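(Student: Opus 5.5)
This is the classical Bertini theorem, so the plan is to reduce it to the two references cited in the statement. The argument is local on the smooth locus $V^{\mathrm{sm}}$ of $V$, so I will replace $V$ by $V^{\mathrm{sm}}$ and show that a general member $Y=\{s=0\}$ is smooth there. The key input is that $D$ is basepoint free. Hence the finite-dimensional space $W=\Gamma(\mathcal{O}_V(D))$ (in our toric setting, spanned by the monomial sections \eqref{eq:monomial_sections} with $m\in M\cap\Delta_D$) defines a morphism $\phi:V\to \Proj(W^\vee)\simeq\Proj^{k}$ with $\phi^*\mathcal{O}(1)=\mathcal{L}$. Members of the family $\{P=0\}$ of \eqref{eq:hypersurface_polynomial} are exactly the pullbacks $\phi^{-1}(H)$ of hyperplanes $H\subset\Proj^k$.

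Next I would set up an incidence correspondence. Define
\begin{align}
Z=\{(p,[s])\in V^{\mathrm{sm}}\times\Proj(W)\,:\, s(p)=0,\ ds(p)=0\}\, ,
\end{align}
where $ds(p)$ is computed in a local trivialization of $\mathcal{L}$ near $p$. Because $s(p)=0$, this vanishing condition does not depend on the choice of trivialization. Fix a point $p$ of dimension-$r$ variety $V^{\mathrm{sm}}$. The conditions $s(p)=0$, $ds(p)=0$ are linear in $s$, so the fibre of $Z\to V^{\mathrm{sm}}$ over $p$ is a linear subspace of $\Proj(W)$. Basepoint freeness guarantees that the condition $s(p)=0$ is nontrivial, so it already cuts out a hyperplane.

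The crucial step is to show that, for every $p$, the fibre has codimension greater than $r$ in $\Proj(W)$. This needs enough independent conditions from $ds(p)$, which is where the main obstacle lies. If $\phi$ is unramified at $p$ (for example an embedding), the differentials span $T^*_pV$ and the fibre has codimension $r+1$. Then $\dim Z\le r+(k-r-1)=k-1$, so the projection $Z\to\Proj(W)$ is not dominant. A general $[s]$ therefore gives $Y$ smooth on $V^{\mathrm{sm}}$, which proves the claim when $\phi$ is unramified.

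For a general basepoint-free $D$, $\phi$ can contract curves, and I would instead argue through $\phi$. Write $Y=\phi^{-1}(H)$ for a general hyperplane $H$. Over the characteristic-zero field $\C$, generic smoothness applied to the universal family $\{(p,H): \phi(p)\in H\}\subset V^{\mathrm{sm}}\times \check{\Proj}^k$ does the job. That family is a $\Proj^{k-1}$-bundle over $V^{\mathrm{sm}}$ and hence smooth, so its general fibre over $\check{\Proj}^k$ is smooth. This is exactly the content of Hartshorne III~10.9 / the Bertini statements cited in the theorem, and I would invoke it for this step.

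Finally, I would note that the locus of $s$ for which $Y$ fails to be smooth on $V^{\mathrm{sm}}$ is a proper closed subset of the parameter space $\{\psi_m\}$. This justifies the phrase ``sufficiently general'' in the statement. No claim is made at the singular points of $V$, consistent with the statement.
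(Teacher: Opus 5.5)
Your proposal is correct, but it takes a different route from the paper. The paper gives no argument beyond citing Hartshorne II.8.18 and III.7.9.1. II.8.18 is Bertini for hyperplane sections of a nonsingular projective variety in a fixed embedding, i.e.\ for a \emph{very ample} linear system, and it holds in any characteristic. III.7.9.1 adds connectedness, hence irreducibility, of the general section in dimension $\geq 2$.

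Your first step, the incidence variety $Z$ with $\dim Z\le k-1$, is the proof of II.8.18. You rightly note that it fails for a merely basepoint-free $D$, since the fibre of $Z$ over $p$ has codimension only $1+\mathrm{rank}\, d\phi_p$ in $\Proj(W)$.

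Your second step is generic smoothness over $\C$, applied to the $\Proj^{k-1}$-bundle $\{(p,H):\phi(p)\in H\}\subset V^{\mathrm{sm}}\times\check{\Proj}^k$; this is Hartshorne III.10.9. It proves the statement exactly as formulated and subsumes the unramified case, so your first step is optional.

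Your route covers what the paper actually uses. There $D_B$ and $D_W$ are only nef and Cartier, hence basepoint free by Theorem~\ref{thm:nef_is_bp_free}. For example, $D_B$ on $V_6$ is pulled back from $\widetilde V_4$, so it is trivial on the fibres and not ample, and II.8.18 does not apply verbatim. The price is that characteristic zero is essential, whereas II.8.18 is characteristic-free. In characteristic $p$, the basepoint-free pencil spanned by $x^p,y^p$ on $\Proj^1$ has only non-reduced members. The irreducibility supplied by III.7.9.1 is not part of the statement anyway.

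One small correction to your last paragraph: the set of bad $s$ need not be closed. Since $V^{\mathrm{sm}}$ is not proper, the image of $Z$ in the parameter space can fail to be closed. What you get is that the bad set is \emph{contained in} a proper closed subset, namely the complement of the open set furnished by generic smoothness. That is all ``sufficiently general'' requires.
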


Every Weil divisor $D$ on a toric variety $V$ defines a family of codimension-one subspaces via Eq.~\eqref{eq:hypersurface_polynomial}. 
Calabi-Yau varieties can be defined considering anticanonical hypersurfaces $Y$ of $V$.
Their element in the divisor class group can be represented as
\begin{align}
   \overline{K}_V =\sum_i D_i\,.
\end{align}
The Newton polytope of the above torus-invariant divisor will be denoted simply by $\Delta$.
The adjunction theorem guarantees that the resulting hypersurfaces $Y$ have trivial canonical bundle and are hence Calabi-Yau.
The parameters $\psi_q$ in \eqref{eq:hypersurface_polynomial} (over-)parametrize the part of the complex structure moduli space of Y that is inherited from $V$.

Mirror symmetry is best understood for the case when $\overline{K}_V$ is nef and Cartier \cite{Batyrev:1993oya}. 
In this case, its Newton polytope $\Delta$ is reflexive and the toric fan $\Sigma$ derives from a fine, regular and star triangulation (FRST) of the polytope $\Delta^\circ$.
A triangulation is a prescription that associates to every nonzero point in $\Delta^\circ$ a ray of $\Sigma$ and picks a consistent way to form a fan out of these rays.
Now, as $\Delta$ and $\Delta^\circ$ form a reflexive pair, we can also triangulate the polytope $\Delta$, and obtain fans $\Sigma^\vee$ of a toric variety $V^\vee$.
The anticanonical hypersurfaces $Y^\vee$ in toric fourfolds $V^\vee$ of this form are collectively mirror dual to the anticanonical hypersurfaces $Y$ in fourfolds $V$ \cite{Batyrev:1993oya}.

\begin{lemma}[\cite{Batyrev:1993oya}]\label{lem:points_int_facets_do_not_intersect}
    Let $\Sigma$ be a fan of a toric variety that is derived from an FRST of a reflexive polytope $\Delta$. Let $Y$ be an associated Calabi-Yau manifold. Then, the toric divisors $D_i$ associated to points interior to facets (codimension-one faces) of $\Delta$ do not intersect $Y$. 
\end{lemma}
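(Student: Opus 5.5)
Here the rays of $\Sigma$ are the nonzero lattice points of the polytope in $N_\R$ that is being triangulated, which the preceding discussion calls $\Delta^\circ$; the statement writes $\Delta$ for it. The plan is to show two things. First, for a point $v$ interior to a facet $F$, the divisor $D_v$ meets only a torus orbit closure that $Y$ generically avoids. Second, this avoidance follows from genericity of the defining polynomial $P$.

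\textbf{Step 1 (cone combinatorics).} Since the triangulation is star, every cone of $\Sigma$ is a cone over a simplex lying in some facet of the triangulated polytope. Let $v$ be interior to the facet $F$. Then any simplex containing $v$ must lie entirely in $F$, since any other facet contains no interior point of $F$. Hence every cone $\sigma\ni v$ has all its generators in $F$, and so $\sigma\subset \mathrm{cone}(F)$. Let $m_F\in M$ be the vertex of the dual polytope dual to $F$, so $\langle m_F,w\rangle=-1$ for all $w\in F$.

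\textbf{Step 2 (restriction of $P$ to orbits in $D_v$).} Every point of $D_v$ lies in some orbit $O(\sigma)$ with $v\in\sigma(1)$. The anticanonical monomials are $s_m=\prod_i x_i^{\langle m,v_i\rangle+1}$. Restrict $P$ to $O(\sigma)$, where $x_i=0$ exactly for $v_i\in\sigma(1)$. The monomials that survive are those $m$ with $\langle m,v_i\rangle=-1$ for all $v_i\in\sigma(1)$.

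These surviving $m$ form a face of the dual polytope. Because $v$ is interior to $F$, the condition $\langle m,v\rangle=-1$ alone already forces $m=m_F$. To see this, write $v$ as a strictly positive convex combination of the vertices of $F$. Each vertex $u$ satisfies $\langle m,u\rangle\ge -1$, so $\langle m,v\rangle=-1$ forces $\langle m,u\rangle=-1$ for every vertex $u$ of $F$. The $m$ satisfying this for all vertices of the facet $F$ is unique, namely $m_F$.

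So on $O(\sigma)$ the polynomial $P$ restricts to $\psi_{m_F}s_{m_F}$, a single monomial. This monomial is nowhere vanishing on the orbit, because all coordinates not in $\sigma(1)$ are nonzero there.

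\textbf{Step 3 (conclusion).} For generic $\psi$ we have $\psi_{m_F}\neq0$, so $P$ has no zeros on any orbit contained in $D_v$. Hence $Y\cap D_v=\emptyset$.

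The main obstacle is the face-theoretic argument in Step 2, that interior points of a facet single out the dual vertex. The only care needed there is that the point be strictly interior to the facet, rather than to a lower-dimensional face. It must also be checked that stars of the fan are indeed contained in $\mathrm{cone}(F)$, which is where the star property of the triangulation is essential.
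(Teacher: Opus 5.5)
Your proof is correct. You also read the notation correctly: the triangulated polytope is the $N$-side one, which the paper elsewhere calls $\Delta^\circ$.

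The paper gives no proof of its own; it only cites Batyrev. Batyrev's argument rests on the same idea but is organized through the coarse toric variety. The fan of $V$ refines the normal fan of the $M$-polytope, which gives a crepant toric morphism $\varphi: V\to\mathbb{P}_\Delta$ with $Y=\varphi^{-1}(Z_f)$. For any cone $\sigma$ with $v\in\sigma(1)$, the smallest cone of the coarse fan containing $\sigma$ is $\mathrm{cone}(F)$. Hence $\varphi$ contracts all of $D_v$ to the torus-fixed point attached to the vertex $m_F$. A $\Delta$-regular $Z_f$ avoids every torus-fixed point, because only the vertex monomial survives there.

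Your orbit-by-orbit computation is this argument carried out directly on $V$, without introducing $\mathbb{P}_\Delta$ or $\Delta$-regularity. Its advantage is that it shows exactly which hypotheses are used:
\begin{itemize}
\item the star property, through your Step 1, which guarantees that $m_F$ survives on every $O(\sigma)$ with $v\in\sigma(1)$ (rather than no monomial surviving);
\item the single genericity condition $\psi_{m_F}\neq 0$.
\end{itemize}
If that coefficient is tuned away, e.g.\ by the projection \eqref{eq:projected_in}, every remaining monomial contains $x_v$, and then $D_v\subset Y$.

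Batyrev's formulation gives more in return. It describes $Y\cap O(\sigma)$ for cones over arbitrary faces of $\Delta^\circ$: it is a product of $Z_f\cap T_\theta$ with a torus, where $T_\theta$ is the orbit of $\mathbb{P}_\Delta$ for the dual face $\theta$. This uniform description is what enters the Hodge-number formulas.
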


As a consequence of Lemma \ref{lem:points_int_facets_do_not_intersect}, we only have to consider the points not interior to facets when studying Calabi-Yaus derived from FRSTs of reflexive polytopes.

 \paragraph{Remark:} 
The Weil divisors (not merely the torus-invariant ones) modulo linear equivalence actually form the same divisor class group as the one defined in Def.~\ref{def:Divisor_class_group}, see \S$4.1$ of \cite{cox2011toric}.
The Cartier and nef properties depend only on the linear-equivalence class of a divisor. Likewise, linearly equivalent divisors determine the same spaces of global sections and hence the same base locus. 
Thus from now on, we will use the same symbol, such as $D$, for an element of the divisor class group and for a divisor representing that class, hoping that the intended meaning is clear from the context.

\subsection{Toric orbifolds}\label{sec:toric_orbifolds}
We now describe how to construct orbifolds of toric varieties.
This is useful in order to study the orbifold action on the corresponding Calabi-Yau hypersurfaces.

For every discrete subgroup $G$ of the automorphism group $Aut(V,\C)$ of $V$, one may define a corresponding orbifold $\widetilde{V}=V/G$, which is a (generally singular) toric variety itself. The starting point is the automorphism group of the toric variety $V$, which has been understood by Cox \cite{cox92}. Moreover, because the orbifold $\widetilde{V}$ only depends on $G$ modulo conjugation by elements of $Aut(V,\C)$, one is actually interested in conjugacy classes of subgroups of $Aut(V,\C)$.

Conjugacy classes of $\mathbb{Z}_2$ subgroups in $Aut(V,\C)$ were studied in \cite{Moritz:2023jdb}: they are generated by a map $g$, which can be represented as
\begin{align}
    g=g_{\rm perm}\circ \phi_{[\xi/2]}\,,
\end{align}
where $\phi_{[\xi/2]}$ is an algebraic torus action
\begin{align}
    x_i\to e^{\pi i \xi^i}x_i\,,\label{eq:algebraic_action}
\end{align}
defined by a lattice vector $\xi\in N$ having the representation $\xi = \sum_i \xi^iv_i$, \emph{cf.} \eqref{eq:algebraic_torus_action}, and $g_{\rm perm}$ is a lattice automorphism of $N$ permuting the rays $\Sigma(1)$ such that the fan $\Sigma$ stays invariant.
Throughout this paper, we will restrict to involutions for which the permutation part is trivial, i.e., $g_{\text{perm}}=\id$.

The simplest such examples are given by involutions that act as a reflection
\begin{align}
    x_{i_0}\to -x_{i_0}\label{eq:simple_reflection}
\end{align}
on one of the homogeneous coordinates. 
This involution is induced by the choice $\xi=v_{i_0}\in N$.

Clearly, shifts of the $\xi^i$ by elements in $2 N$ leave the algebraic action \eqref{eq:algebraic_action} unchanged, such that the space of distinct algebraic involutions is parametrized by the coset
\begin{align}
    \frac{N}{2N}\simeq \left(\Z_2\right)^r\,.
\end{align}
The toric orbifold $\widetilde{V}=V/{\mathbb{Z}_2}$ is now constructed by refining the lattice $N$ as follows. One simply adds the lattice shifted by the half-integral element $\xi/2$ to form a new lattice $\tilde N:=N\cup (N+\xi/2)$. 
The toric fan of the orbifold variety $\Tilde{\Sigma}$  is then given by the old fan in the new lattice. In particular, if $V$ was a simplicial toric variety, then so is $\widetilde{V}$.

For later purposes it will be useful to note that, after the lattice refinement, some ray generators $v$ may no longer be primitive, if viewed as elements of the refined lattice $\widetilde{N}$. Instead, the primitive generators associated with the one-dimensional cones of the fan $\Tilde{\Sigma}$ may differ from those of $\Sigma$ by factors of $2$.
For the remainder of this section, we will denote the primitive rays in the new fan as $\tilde v_i$ and the associated homogeneous coordinates as $\tilde{x}_i$. If $\tilde{v}_i=v_i/2$ for some $i$, then the corresponding homogeneous coordinates are naturally identified as $\tilde{x}_i=x_i^2$.

It will be important to understand the fixed point locus of the orbifold action, as this gives rise to the loci hosting orientifold planes. The fixed locus turns out to be stratified by the torus invariant subvarieties defined by the cones $\sigma^k$ that satisfy \cite{Moritz:2023jdb}
\begin{align}
    \xi + \sum_{v \text{ generating } \sigma^k}v \in 2N\,.\label{eq:cones_fixed_locus}
\end{align}
Since by construction $\xi \in 2\widetilde{N}$, for every cone associated with a stratum of the fixed locus, i.e.,  satisfying \eqref{eq:cones_fixed_locus}, we have the relation
\begin{align}
    \sum_{ v_i \text{ generating } \sigma^k} v_i \in 2\widetilde{N}\, .\label{eq:cones_fixed_orbifold}
\end{align}
Thus, the primitive generators of one-cones (i.e., rays) that correspond to codimension-one fixed locus components of the orbifold action are even elements of $\widetilde{N}$.
Therefore, the corresponding rays of $\widetilde{\Sigma}$ have primitive generators $\tilde{v}_i=v_i/2\in \widetilde{N}$.

\subsection{Induced action on Calabi-Yau hypersurfaces}\label{sec:induced_action}
In order for the involution $\phi_{[\xi/2]}$ to descend to an involution of a Calabi-Yau hypersurface $Y\subset V$ defined by the polynomial equation $P_Y=0$, the polynomial $P_Y$ needs to get mapped to itself up to an overall $\mathbb{C}^*$ factor,
\begin{align}
    P_Y\mapsto \gamma_Y\cdot P_Y\,,\qquad \gamma_Y\in \C^*\,.    
\end{align}
This is usually not the case for the most general polynomial of the form \eqref{eq:hypersurface_polynomial}. Rather, to define a sensible involution of the hypersurface, one needs to first restrict the vector space of global sections to a special sub-locus where some of the parameters $\psi_m$ vanish.

Defining a parameter $\lambda_Y\in\{0,1\}$ labeling the two classes of the $\Z_2$ involution (which correspond to orientifolds of O5/O9 or O3/O7 type respectively), the parameters $\psi_q$ that remain unconstrained (are ``projected in'') correspond to characters $q\in M$ that satisfy
\begin{align}
    \langle q,\xi\rangle +\lambda_Y = 0 \mod 2\, .\label{eq:projected_in}
\end{align}
As we are interested in O3/O7 orientifolds in this work, we will henceforth only consider the case $\lambda_Y=1$, and we call the orientifold covariant polynomial $P_B$.\footnote{We here preemptively choose the subscript $B$ for the orientifold, as we later want to consider such orientifolds as the ``bases'' of the elliptically fibered Calabi-Yau fourfolds that correspond to the F--theory uplifts.} We may now view it as a \emph{general} section of a sheaf $\mathcal{O}_{\widetilde{V}}(D_B)$, where $D_B$ is the divisor class group element corresponding to the induced hypersurface in $\widetilde{V}$ (which will no longer be anticanonical). We will refer to the hypersurface $\{P_B=0\}\subset \widetilde{V}$ as $B$.

Importantly, due to the need for tuning of $P$, even the generic orbifold invariant hypersurface in $V$ may no longer be guaranteed to be smooth. Or, from the ``downstairs'' perspective, $D_B$ may fail to be nef and may even fail to be Cartier.

The class of the O7 planes, i.e., the divisorial fixed locus, is in general given by
\begin{align}
    D_{O7}=2\overline{K}_{B} = 2(\overline{K}_{\widetilde{V}}-D_B)|_{B}\,,\label{eq:O7_class}
\end{align}
where we have used the adjunction theorem again.

\subsection{Some possible singularities}\label{sec:resolution}

As we just discussed, the tuning of the Calabi-Yau defining equation \eqref{eq:projected_in} may render the generic $\mathbb{Z}_2$-invariant hypersurface, and also its orbifold $B$, singular. In some cases the singularities are quite severe (they occur at infinite distance in complex structure moduli space), leading us to discard the corresponding model altogether. In other cases the tuning \eqref{eq:projected_in} leads to singularities (at finite distance in moduli space) that can be smoothed via a blow-up morphism that amounts to an arbitrarily small move in the moduli space. In this case the model makes sense as a string theory target space.

Before dealing with singularities systematically, let us survey a few obvious ways that smoothness can fail after restricting to the monomials \eqref{eq:projected_in}.

To facilitate this discussion we first note that for each cone $\sigma^k$ satisfying \eqref{eq:cones_fixed_locus}, one has
\begin{align}
    P|_{\mathcal{F}(\sigma^k)}\equiv 0\qquad \text{if}\qquad \dim(\sigma^k)+\lambda_Y+1=k+\lambda_Y +1\in 2\N\, ,\label{eq:fixed_locus_dimension}
\end{align}
and otherwise, the cone $\sigma^k$ defines a codimension-$k$ subvariety of $Y$. As a consequence, for O3/O7 orientifolds (for which $\lambda_Y=1$) all even-dimensional strata of the $\mathbb{Z}_2$ fixed locus in $V$ lie inside the general $\mathbb{Z}_2$-invariant hypersurface. On the other hand, all odd-dimensional strata intersect the hypersurface transversely, and thus all strata of the fixed locus in $Y$ have even complex dimension.

We now turn to discuss possible singularities. They can arise already at codimension one of $B$, whenever the general $\mathbb{Z}_2$-covariant polynomial factorizes
\begin{align}
     P_B = \tilde{x}_i\cdot P_B'(\tilde{x})\, .
\end{align}
In other words, the divisor $D_B$ has a basepoint locus at codimension one, and is hence reducible. This is an example of a singularity at infinite distance in moduli space, which we will not attempt to resolve.

On the other hand, given a stratum of the fixed locus at codimension two in $V$, it follows from \eqref{eq:fixed_locus_dimension} that it lies entirely in the hypersurface. Hence, the hypersurface polynomial can be written as
\begin{align}
     P_B = \tilde{x}_i P_B^{(1)}(\tilde{x})+\tilde{x}_j P_B^{(2)}(\tilde{x})\, ,
\end{align}
with $P_B^{(1)}$ and $P_B^{(2)}$ polynomials in the $\tilde{x}$, and $\tilde{x}_i=\tilde{x}_j=0$ defining the codimension-two stratum of the fixed locus.

The above form of $P_B$ signals the presence of singularities, generically of conifold type, along the point set
\begin{align}
    \{\tilde x_i= P_B^{(1)}=P_B^{(2)}=\tilde{x}_j=0\}\subset \widetilde{V}\, ,
\end{align}
along which the Jacobian criterion is violated.

In the present case, this occurs if and only if the toric variety $\widetilde{V}$ has, as a consequence of \eqref{eq:cones_fixed_locus}, \eqref{eq:cones_fixed_orbifold}, $A_1$-type singularities along the subspace
\begin{align}
    \{\tilde{x}_i=0=\tilde{x}_j\}\subset \widetilde{V}\,.
\end{align}
Now, unlike the previous situation, the point-like singularities arise at finite distance in moduli space, and it is sensible to resolve them by blowing up the singular locus. Here, one simply adds an exceptional divisor $E$ associated with a new ray in the toric fan of $\widetilde{V}$, with primitive generator $e$ defined by $2e=\tilde v_i+\tilde v_j$. This removes the locus of $A_1$-singularities, adding a family of exceptional $\mathbb{P}^1$s and simultaneously resolves the conifold singularities of the underlying $\mathbb{Z}_2$-invariant Calabi-Yau hypersurface. From the perspective of the Calabi-Yau hypersurface this is a small resolution.

This step of course changes the topology of $\widetilde V$ and $B$ (as well as the topology of their double-covers). In particular, the Hodge numbers change.

The remaining singularities at codimension three are then of orbifold type. When considering Calabi-Yau threefolds,  these codimension-three singularities are interpreted as O3 planes.
These are physical, and remain singular from the perspective of F--theory. Thus, there is no need to remove such singularities for our following purposes.

\subsection{Smooth orientifolds from normal fans}\label{sec:Normal_fan}
While the above subsection deals with the simplest types of singularities that can arise in our setup, avoiding these simple cases still does not guarantee that the resulting hypersurfaces are smooth. Indeed, without additional properties, proving smoothness of a general hypersurface is difficult (though of course it can be done on a case-by-case basis). 

Instead, we now turn to a systematic procedure that \emph{modifies} the toric variety $\widetilde{V}$ with suitable blow-ups and restrictions of triangulations, such that the resulting orientifold is \emph{guaranteed} to be smooth. Our procedure applies whenever the following conditions are met:

\begin{enumerate}
    \item The general global section of $\mathcal{O}_{\widetilde{V}}(D_B)$ must not factorize. In particular, $D_B$ must be basepoint free at codimension one.
    \item The convex hull over the integral points in the Newton polytope of $D_B$ must have full dimension.
\end{enumerate}
These exclude being forced to certain loci at \emph{infinite} distance in moduli space, such as those where the Calabi-Yau hypersurface $Y$ and its orientifold $B$ become reducible.

\begin{definition}\label{def:regular_orientifold}
    An orientifold hypersurface $B$ that satisfies the above conditions (a) and (b) is called \emph{regular}, and \emph{irregular} otherwise.
\end{definition}

Given a regular orientifold, smoothness is still not guaranteed, and depends in particular on the details of the fans $\widetilde{\Sigma}$, not just its one-skeleton $\widetilde{\Sigma}(1)$. As different triangulations of vector configurations that keep the one-skeleton fixed induce finite distance moves in the moduli space of the Calabi-Yau hypersurface $Y$ --- see in particular \cite{MacFadden:2025ssx} --- one can aim to restrict the set of triangulations of the vector configuration $\Sigma(1)$ (supplemented by suitable blow-up divisors) to those in which the hypersurface $B$ is guaranteed to be smooth.

Crucially, smoothness of $B$ (away from orbifold singularities of $\widetilde{V}$) is guaranteed if $D_B$ is Cartier and nef, due to Theorems  \ref{thm:nef_is_bp_free} and \ref{thm:Bertini}. 
For divisors $D_B$ that satisfy the conditions (a) and (b) above, one can then systematically construct fans $\widetilde{\Sigma}$ such that $D_B$ (or rather its pullback under a suitable birational morphism) is indeed a Cartier and nef divisor.

In order to accomplish this, we now let $\Delta\subset M$ be the lattice polytope defined as the \emph{convex hull over the integral points of the} Newton polytope $\Delta_B$ of $D_B$. Any refinement of its normal fan $\Sigma_\Delta$ naturally defines a toric variety in which the pullback of $D_B$ is a nef Cartier divisor, \emph{cf.} Lemma \ref{lemma:normal_fan}.
For example, one might start with a toric variety $\widetilde{V}$ with fan $\widetilde{\Sigma}$ and a divisor $D_B$ representing a Calabi-Yau orientifold.
One then computes its normal fan $\Sigma_\Delta$ to which one may add as many new rays as desired. In particular, one may add the rays of $\widetilde{\Sigma}$.

In this way, one may define a toric variety that is birational to $\widetilde{V}$, with the feature that the pullback of $D_B$ under the birational morphism is a Cartier and nef divisor.

A couple of comments are in order.
First, for the present purpose, not all the resulting toric hypersurfaces $B$ are acceptable: the generic section of the sheaf $\mathcal{O}_B(2\overline{K}_B)$ parameterizes the O7-plane, \emph{cf.} \eqref{eq:O7_class}, so it should not have generic vanishing orders larger than one.  Otherwise, one cannot interpret the geometry as the target space of a weakly coupled type IIB orientifold model.

Second, the process outlined  above may of course significantly alter the topology of the original toric variety $\widetilde{V}$. However, it only amounts to arbitrarily small perturbations in the moduli space of K\"ahler structures, and amounts to the smallest steps needed to resolve finite-distance singularities. Physically, one can interpret this as passing to the Higgs branch.\footnote{Actually, our procedure may modify the toric fan even if the original model was non-singular to begin with. In this case, of course, no harm is done to the original model.}

\subsection{Special case: Trilayer polytopes}\label{sec:trilayer_polytopes}
For definiteness, we now turn to a particularly simple class of Calabi-Yau orientifolds \cite{Kim:2020_Trilayer,Moritz:2023jdb}.
These arise by considering so-called \emph{trilayer polytopes}. 
Such polytopes are reflexive polytopes $\Delta^\circ \subset N_\R$ with additional structure: $\Delta^\circ$ must be equal to the convex hull of a point $v_*$ and a set of points $v_i$ lying in an affine hyperplane $H$ such that $-v_*\in H$ but $v_*\notin H$. 
The polytope $\Delta^\circ$ hence has a `pyramid' type shape where $v_*$ is the tip of the pyramid and the convex hull of the $v_i$ defines its base. 
An illustration is given in Fig.~\ref{fig:Trilayer_polytopes}.

It is not hard to see that the dual polytope $\Delta$ of a trilayer polytope is again trilayer and we will denote its special vertex by $\hat v_*$.
The points $\hat v_i$ are those in the dual facet of $v_*$.
Every point $v\in \Delta^\circ \cap N$ satisfies
\begin{align}
    -\langle \hat v_*,v\rangle \in \{-1,0,1\}\label{eq:trilayer_grading}
\end{align}
and this property gives the polytope its name.
Only the special point $v_*$ is mapped to $-1$ and all points in the hyperplane $H$ get mapped to $+1$ under the map above.
We will refer to these layers as the $1$--, $0$-- and $(-1)$--layers.

Given a trilayer polytope, it is always possible to find an isomorphism $N\to\Z^n$ such that the point $v_*$ gets mapped to $(-1,0,\ldots,0)$ and the points $v_i\in H$ get mapped to $(1,\star,\ldots,\star)$.

\begin{definition}
    The representation of $\Delta^\circ$ in which $v_*=(-1,0,...,0)$ and $v_i=(1,\star,\ldots,\star)$ is called the \emph{trilayer normal form} of $\Delta^\circ$.
\end{definition}

The three layers of $\Delta^\circ$ can now be easily identified as the spaces $\{z_1=-1\},\{z_1=0\},\{z_1=1\}$, with $z_1$ being the first coordinate of $\Z^n$.

\begin{figure}
        \centering
        \def\svgwidth{0.8\linewidth}
        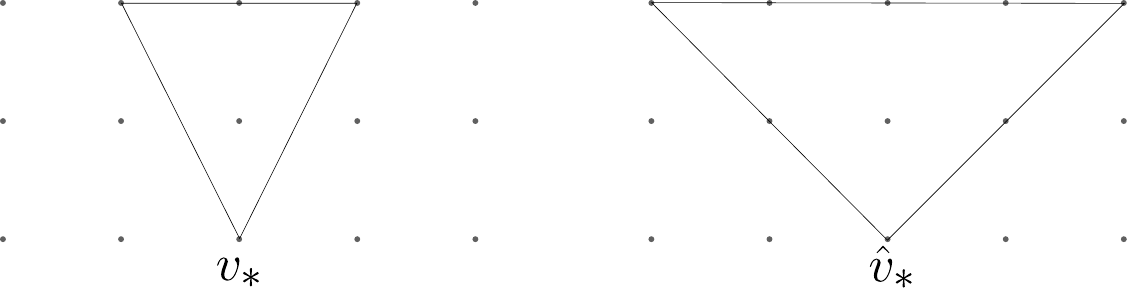
    \caption{The trilayer polytope of $\Proj[1,1,2]$ and its dual are displayed in their trilayer normal form. While the former polytope does not have any points in its zero-layer besides the origin, the latter has the two points $(0,1)$ and $(0,-1)$.}
    \label{fig:Trilayer_polytopes}
\end{figure}
Letting $V$ be a toric fourfold defined from a triangulation of $\Delta^\circ$, an involution of $V$ can be defined by the choice $\xi=v_*$. 

This involution has the following properties. 
First, there must exist at least one one-cone satisfying \eqref{eq:cones_fixed_locus}, namely $v_*$. 

Second, the projected-in monomials of the anticanonical divisor of $V$ --- those that can be used to define an O3/O7 orientifold --- are the monomials of $\Delta$ that are not in the zero-layer.
In case $\Delta^\circ$ does not contain any points in the zero layer besides the origin, the associated Calabi-Yau orientifold $B$ can itself be understood as a toric threefold. Indeed, $B$ can now be viewed as a hypersurface in $\widetilde{V}$ in the class of the prime toric divisor associated with the vertex $v_*$, and is hence isomorphic to a toric variety. Its toric fan is obtained from a new polytope $\widetilde\Delta$ generated by all points in the one-layer, with $-v_*$ representing the new origin (and triangulation induced from the underlying FRST of $\Delta^\circ$). 

Finally, there is a one-to-one correspondence between integral points in the $0$-layer, other than the origin, and NHCs.\footnote{This statement is due to Elijah Sheridan, and will appear in \cite{SMIIB}.} More precisely, for any such point $v$, we have that $v':=2v-v_*$ is in the $1$-layer  and satisfied $v'+v_*\in 2N$. Thus, $v'$ satisfies the criterion \eqref{eq:cones_fixed_locus} and the corresponding divisor hosts an O7-plane.

As an example, we consider weighted projective space $\Proj_{[1,1,2]}$, associated to the reflexive polytope displayed on the left-hand side of Fig.~\ref{fig:Trilayer_polytopes}. The polytope $\widetilde\Delta$ is the convex hull over the vertices $(1,-1)$ and $(1,1)$, which yield the toric fan of $\Proj^1$. 

\section{Systematic F--theory uplifts}\label{sec:Systematic_F_Theory_uplifts}
In this section we will construct the F--theory uplifts of the class of Calabi-Yau orientifold models obtained above.
For definiteness, we focus on the uplift of orientifold \emph{threefolds} to elliptically fibered Calabi-Yau fourfolds, but our methods straightforwardly generalize to any dimension.

We start with a Calabi-Yau orientifold $B_3=Y_3/\mathbb{Z}_2$, constructed as a hypersurface $\{P_B=0\}$ in a toric orbifold $\widetilde V_4=V_4/\mathbb{Z}_2$ as explained above. We will assume that $B_3$ is regular  (\emph{cf.} Def. \ref{def:regular_orientifold}). The goal is to construct a suitable elliptic fibration over $B_3$ in the form of a Weierstrass model,
\begin{equation}
    P_W=0\, ,\quad P_W:=y^2-x^3-fxz^4-gz^6\, ,
\end{equation}
with $f$ and $g$ suitable sections on $B_3$ and $[x:y:z]$ homogeneous coordinates on weighted projective space $\Proj_{[2,3,1]}$.

To this end, we will first assemble an appropriately twisted $\Proj_{[2,3,1]}$ fibration over the toric fourfold $\widetilde{V}_4$. This defines a toric sixfold that we will denote $V_6^s$. The complete intersection $\{P_B=P_W=0\}$ defines an elliptically fibered fourfold $Y_4^s$, which will turn out to be Calabi-Yau, as also discussed in \cite{Collinucci:2008zs,Collinucci:2009uh,Jefferson:2022ssj}.

This Calabi-Yau fourfold $Y_4^s$ will in general be singular due to degenerations of the elliptic fibration of type $I_0^*$ \cite{Morrison:2012np,Morrison:2012js,Morrison:2014lca,Halverson:2016vwx,Halverson:2017vde}. These singularities appear already in the Weierstrass hypersurface $\{P_W=0\}$ and are directly inherited by $Y_4^s$.

We will then systematically construct crepant resolutions of these singularities in $Y_4^s$ via a toric blow-up of the ambient sixfold $V_6\rightarrow V_6^s$. This corresponds to passing to the Coulomb branch of the corresponding M-theory compactification to three dimensions. The final result is an elliptically fibered Calabi-Yau fourfold $Y_4$, defined as a codimension-two complete intersection of hypersurfaces in $V_6$. It will be smooth away from point-like terminal $\mathbb{Z}_2$-orbifold singularities corresponding to orientifold three-planes.
In summary, we will move from left to right in the following illustration:
\begin{align}
    Y_3\subset V_4 \,\,\overset{\text{orbifold}}{\longrightarrow} \,\,B_3\subset \widetilde{V}_4\,\, \overset{\text{Weierstrass}}{\longrightarrow}\,\,Y_4^s\subset V_6^s\,\,\overset{\text{resolution}}{\longrightarrow}\,\, Y_4\subset V_6\,.\label{eq:Illustration}
\end{align}

In what follows, to avoid cluttered notation, we will not distinguish homogeneous coordinates, prime toric divisors and other associated quantities in notation between the toric varieties appearing in \eqref{eq:Illustration}.
We will likewise denote the restriction of an ambient divisor to a hypersurface in any of these varieties by the same symbol as the ambient divisor.
We hope that the variety the variables are associated with in each instance will be clear from the context. 
In general, will denote the homogeneous coordinates as $x_i$ and the associated divisors by $D_i$.

\subsection{The Weierstrass model}\label{sec:Weierstrass_model}
We begin with a complete, normal, and simplicial toric fourfold $\widetilde{V}_4$ defined via a toric orbifold $V_4/\mathbb{Z}_2$ as discussed in \S\ref{sec:toric_orbifolds}. As before, we let $B_3$ be a hypersurface  defined via the vanishing of a general global section $P_B$ of a sheaf $\mathcal{O}_{\widetilde{V}_4}(D_B)$, where $D_B\equiv \sum_i b_i D_i$ denotes the corresponding Weil-divisor. 

Using the adjunction theorem, we have that
\begin{equation}
    \overline{K}_{B}\sim \overline{K}_{\widetilde{V}_4}-D_B\, ,
\end{equation}
viewed as elements of the divisor class group.

In order to construct an F--theory uplift we now construct a Weierstrass model from a certain $\mathbb{P}_{[2,3,1]}$ fibration over $\widetilde{V}_4$:
\begin{equation}\label{eq:P231_fibration}
    \mathbb{P}_{[2,3,1]}\hookrightarrow V_6^s\overset{\pi}{\longrightarrow} \widetilde{V}_4\,,
\end{equation}
with $\pi$ denoting the projection onto the $4d$ toric base of the fibration.

The Weil divisor $D_B$ naturally pulls back to a divisor on $V_6^s$ which we will call, by the abuse of notation announced above, also $D_B$, and its associated polynomial $P_B$.
Explicitly, we construct the fibration $V_6^s$ as a simplicial toric sixfold in such a way that the simultaneous vanishing of $P_B$ and an appropriately chosen Weierstrass polynomial $P_W$ yields a Calabi-Yau fourfold as a codimension-two submanifold of $V_6^s$. 

We have
\begin{equation}\label{eq:Weierstrass_equation}
    P_W=y^2-x^3-fxz^4-gz^6\, ,
\end{equation}
with $(x,y,z)$ the homogeneous coordinates of the fiber $\mathbb{P}_{[2,3,1]}$, and $(f,g)$ sections of appropriately chosen sheaves $(\mathcal{O}_{\widetilde{V}_4}(D_f),\mathcal{O}_{\widetilde{V}_4}(D_g))$ on the base $\widetilde{V}_4$. We will fix these sheaves momentarily.

The fibration \eqref{eq:P231_fibration} is specified by letting the fiber coordinates $(x,y,z)$ take values in $(\mathcal{O}_{\widetilde{V}_4}(D_x),\mathcal{O}_{\widetilde{V}_4}(D_y),\mathcal{O}_{\widetilde{V}_4}(D_z))$ defined on $\widetilde{V}_4$. Without loss of generality we may choose $\mathcal{O}_{\widetilde{V}_4}(D_z)=\mathcal{O}_{\widetilde{V}_4}$, and then the Weierstrass polynomial \eqref{eq:Weierstrass_equation} makes sense if and only if the following conditions are met:
\begin{equation}
\mathcal{O}_{\widetilde{V}_4}(2D_y)=\mathcal{O}_{\widetilde{V}_4}(D_g)\, ,\quad \mathcal{O}_{\widetilde{V}_4}(2D_x)=\mathcal{O}_{\widetilde{V}_4}(D_f)\, ,\quad \mathcal{O}_{\widetilde{V}_4}(2D_g)=\mathcal{O}_{\widetilde{V}_4}(3D_f)\, .
\end{equation}

The hypersurface $\{P_W=0\}$ yields an elliptic fibration over $\widetilde{V}_4$. After imposing the further hypersurface constraint $P_B=0$, by using again the adjunction theorem, one obtains a Calabi-Yau fourfold if and only if
\begin{align}\label{eq:Weierstrass_line_bundles}
\mathcal{O}_{\widetilde{V}_4}\left({D_x}\right)&=\mathcal{O}_{\widetilde{V}_4}\left({2\overline{K}_{B}}\right)\, ,\quad \mathcal{O}_{\widetilde{V}_4}\left({D_y}\right)=\mathcal{O}_{\widetilde{V}_4}\left({3\overline{K}_{B}}\right)\, ,\\ 
\mathcal{O}_{\widetilde{V}_4}\left({D_f}\right)&=\mathcal{O}_{\widetilde{V}_4}\left({4\overline{K}_{B}}\right)\, ,\quad \mathcal{O}_{\widetilde{V}_4}\left({D_g}\right)=\mathcal{O}_{\widetilde{V}_4}\left({6\overline{K}_{B}}\right)\, .
\end{align}
Hence, the fibration \eqref{eq:P231_fibration} is completely determined by requiring that the Weierstrass model \eqref{eq:Weierstrass_equation} makes sense and that the codimension-two hypersurface $\{P_W=P_B=0\}$ is Calabi-Yau.

It is now straightforward to construct a toric fan $\Sigma_6^s$ for $V_6^s$. We denote by $\widetilde{\Sigma}_4(1)$ the one-dimensional cones of the toric fan $\widetilde{\Sigma}_4$ of $\widetilde{V}_4$, and by the set $\{{v}_1,\ldots,{v}_n\}\subset \widetilde{N}_4\simeq \mathbb{Z}^4$ the primitive generators of the $1d$ cones. We embed the $\{v_i\}$ into a six-dimensional lattice $N_6\simeq \mathbb{Z}^6$ via
\begin{equation}\label{eq:4d_to_6d_embedding}
    v_i\mapsto \begin{pmatrix}
        v_i\\
        w_i
    \end{pmatrix}\, ,\quad w_i\in \mathbb{Z}^2\, ,
\end{equation}
and define primitive generators,
\begin{equation}
    v_x=\begin{pmatrix}
        0\\
        3\\
        1
    \end{pmatrix}\, ,\quad v_y=\begin{pmatrix}
        0\\
        -2\\
        -1
    \end{pmatrix}\, ,\quad v_z=\begin{pmatrix}
        0\\
        0\\
        1
    \end{pmatrix}\, ,
\end{equation}
associated with the homogeneous fiber coordinates $(x,y,z)$. 

This assignment \eqref{eq:4d_to_6d_embedding} satisfies the constraints \eqref{eq:Weierstrass_line_bundles} if and only if
\begin{equation}\label{eq:linear_equation_determining_wi}
    \sum_i \eta_i \left(w_i+(b_i-1)\begin{pmatrix}
        0\\
        1
    \end{pmatrix}\right)\in \mathbb{Z}^2\quad \forall \,\eta_i \,\,\text{s.t.:}\quad \sum_i \eta_i v_i\in \mathbb{Z}^4 \, ,
\end{equation}
where, as defined before, $D_B\equiv \sum_i b_i D_i$. Any choice of 2d lattice points $w_i$ that solves \eqref{eq:linear_equation_determining_wi} leads to an equivalent $\mathbb{P}_{[2,3,1]}$ fibration, so without loss of generality we may set
\begin{equation}
    w_i^{(1)}=0\, ,\quad w_i^{(2)}=1-b_i\, .
\end{equation}

This fixes the rays of the fan $\Sigma_6^s$.
The appropriate higher dimensional cones of $\Sigma_6^s$ are obtained as follows: for each top-dimensional cone $\sigma^k$ of $\tilde\Sigma$, we define three $6$-dimensional cones in $\Sigma_6^s$ by adding the combinations $(v_x,v_y),(v_y,v_z),(v_z,v_x)$ to the primitive generators of $\sigma^k$.
These cones, together with their faces, define the toric fan $\Sigma_6^s$.

For completeness, we note that the GLSM charge matrix $Q^6_{ij}$ of $V_6^s$ can be represented as an extension of the GLSM charge matrix $Q^4_{ij}$ of $\widetilde V_4$ by one row and three columns. 
The columns contain the charges of the three additional rays $v_x,v_y,v_z$ and the additional row simply contains the relation $2v_x+3v_y+v_z=0$.
The entries of the three columns can be determined by using the relation \eqref{eq:Weierstrass_line_bundles}.
The entries $Q_{ix}^6,Q_{iy}^6,Q_{iz}^6$ of the three extended columns are given by
\begin{align}
    Q_{ix}^6 = 2\cdot \sum_jQ_{ij}^6(1-b_j)\,,\qquad Q_{iy}^6 = 3\cdot\sum_j Q_{ij}^6(1-b_j)\,,\qquad Q_{iz}^6 = 0\, .
\end{align}

The corresponding element in the divisor class group associated to the Weierstrass hypersurface can be represented by
\begin{align}
    D_W = 3D_x =2D_y\,,\label{eq:Weierstrass_Weil_divisor}
\end{align}
as is clear from the form of the Weierstrass equation \eqref{eq:Weierstrass_equation}.
\subsection{Resolution of codimension-one singularities}\label{sec:smoothening}
The generic member of the family of Weierstrass hypersurfaces $\{P_W=0\}$ in $V_6^s$ can be singular, even if $\widetilde{V}_4$ is smooth. 
For any given member of the family, singularities arise when the elliptic fiber degenerates over loci in the base, and the theory of the occurring singularities is well studied in the mathematics and the F--theory literature \cite{Kodaira1963OnCA,PMIHES_1964__21__5_0,Vafa:1996xn,Weigand:2018rez}.

The generic member of the Weierstrass family is singular, if the singularity is present for all members of the family. 
According to Bertini's Theorem \ref{thm:Bertini}, such generic singularities are located in the basepoint locus of $D_W$.
From the form of the Weierstrass equation \eqref{eq:Weierstrass_equation}, we can deduce that the basepoint locus takes the form
\begin{align}
    BL_{D_W} = \{x=y=0\}\cap \pi^*(BL_{6\overline{K}_B})\,,
\end{align}
where $BL_{6\overline{K}_B}$ is the basepoint locus of $6\overline{K}_B$ in $\widetilde{V}_4$.
The set $\{x=y=z=0\}$ is not part of $V_6^s$ because the three corresponding rays do not form a cone together.
The basepoint loci at codimension-two and higher depend on the triangulation of the fan, whereas the basepoints at codimension-one do not. 
We hence focus on these basepoints first and refer to \S\ref{sec:Cartier_and_nef} for a  treatment of the higher-codimension basepoints.

A codimension-one basepoint locus of $6\overline{K}_B$ along $D_i$ arises when every section of $6\overline{K}_B$ vanishes on the locus $\{x_i=0\}$.
This signals a degeneration of the elliptic fiber along $D_i$, or, in the type IIB language, the presence of seven-branes on this divisor. 
Due to \textit{every} section vanishing along $D_i$, the seven-branes are rigid and the corresponding gauge group cannot be Higgsed by perturbing the seven-brane embedding. 
Hence, such loci are called \textit{non-Higgsable clusters} (NHCs) \cite{Morrison:2012np,Morrison:2012js,Morrison:2014lca,Halverson:2016vwx,Halverson:2017vde}.

It is well known \cite{Sen:1996vd,Sen:1997gv,Halverson:2016vwx} that in F--theory models corresponding to type--IIB orientifold models, the fiber degeneration along every NHC has to be of $I_0^*$--type and the corresponding gauge algebra is thus $\mathfrak{so}(8),\mathfrak{so}(7)$ or $\mathfrak{g}_2$.
Correspondingly, the polynomials $f,g$ and the discriminant $\text{Disc}(P_W)$ factorize as follows
\begin{align}\label{eq:NHC_f_g}
    &f \equiv  \prod_{i\in \text{NHCs}}{x}_i^{2}f'
    ,\quad g \equiv \prod_{i\in \text{NHCs}}{x}_i^{3}g'
    \,, \\ 
    &\text{Disc}(P_W):=4f^3+27g^2\equiv \prod_{i\in \text{NHCs}} x_i^6\text{Disc}'(P_W)\, ,
\end{align}
with $f',g',\text{Disc}'(P_W)$ generically non-vanishing along all the NHC divisors $\{x_i=0\}$.

The generic member of the family $\{P_W=0\}$ is then singular along any of the loci $\{x=y= x_i=0\}$, for $D_i$ hosting an NHC, because the Jacobian criterion is violated, i.e., $P_W$ and $dP_W$ vanish simultaneously.

These singularities cannot be resolved by complex structure deformations, but a K\"ahler resolution is available, and corresponds to moving onto the Coulomb branch of the gauge theory (in three dimensions).
This branch is geometrically realized as a blow-up morphism, introducing exceptional divisors $E_1,\ldots,E_{{\rm rank} (G)}$ in the (five-dimensional) Weierstrass hypersurface $W=\{P_W=0\}$ that descend to divisors in $\widetilde{V}_4$. 

This blow-up also introduces exceptional curves $\tilde{E}_1,\ldots,\tilde{E}_{{\rm rank}(G)}$ --- all of which are rational curves --- and turns the degenerate elliptic fibers into chains of $\Proj^1$s.

The exceptional curves intersect the exceptional divisors as
\begin{align}
    \left(E_a\cdot \tilde{E}_b\right)_{W} = -C_{ab}\,,\label{eq:intersection_D4}
\end{align}
where the subscript $W$ denotes that the intersection is taken in the Weierstrass hypersurface and $C_{ab}$ is the Cartan matrix of the gauge algebra. For the three gauge algebras of consideration, we have
\begin{align}
     {C}_{ab}^{ \mathfrak{so}(8)} = \begin{pmatrix}
        2 & -1 & -1 & -1\\
        -1 & 2 & 0 & 0\\
        -1 & 0 & 2 & 0\\
        -1 & 0 & 0 & 2
    \end{pmatrix}\,,\; {C}_{ab}^{ \mathfrak{so}(7)} = \begin{pmatrix}
        2 & -1 & -2\\
        -1 & 2 & 0\\
        -1 & 0 & 2
    \end{pmatrix}\,,\; {C}_{ab}^{\mathfrak{g}_2} = \begin{pmatrix}
        2 & -3\\
        -1 & 2
    \end{pmatrix}\,.\label{eq:Cartan_matrix}
\end{align}

The $I_0^*$ fiber leads to a singularity at codimension three, and, in the language of toric geometry,  corresponds to the three-cone with primitive generators $v_i,v_x,v_y$.
One might na\"ively expect that the normal directions to this codimension-three locus already yield a curve worth of $D_4$ singularities, but this is not possible: $A_n$-type singularities are the only ones realized as toric surfaces. 

In fact, the cone generated by $(v_i,v_x,v_y)$ is generally smooth and the singularity only manifests itself on the Weierstrass hypersurface.
Hence, in order to blow-up the singularity, we have to perform blow-ups of \textit{regular} points of $V_6^s$.
One might expect that the number of necessary blow-ups in the cone is defined by the rank of the present gauge algebra but, as we will explain below, it is sufficient to introduce two exceptional divisors in the toric sixfold for all three cases.
For the case of $\mathfrak{so}(8)$ ($\mathfrak{so}(7)$) one of the exceptional divisors in the toric sixfold descends to three (two) divisors on the Weierstrass hypersurface, while in the case of $\mathfrak{g}_2$ it remains irreducible.

To be fully explicit, let us consider the Weierstrass equation with one NHC on $D_i$ with $f,g$ as in \eqref{eq:NHC_f_g}.
We now refine $\Sigma_6^s$ to $\Sigma_6$, the toric fan of the blown-up variety $V_6$, by the introduction of two additional rays, with primitive generators
\begin{align}\label{eq:blowups_Coulomb_branch}
    v_{e_1} = {v}_i+v_x+2v_y\,,\qquad v_{e_2} = 2 v_i+2 v_x + 3 v_y\, ,
\end{align}
the choice of which will be justified \emph{a posteriori}. We will refer to the homogeneous coordinates associated with these rays as $e_1$ and $e_2$.

The three-cone generated by $(v_i,v_x,v_y)$ gets refined by the introduction of the blow-up rays, and admits a unique regular triangulation. As a consequence, every simplicial cone in $\Sigma_6^s$ containing ${v}_i,v_x,v_y$ as generators is replaced by a set of five simplicial cones, according to the replacement rule
\begin{align}
    \{( v_i,v_x,v_y)\} \to \{( v_i,v_x,v_{e_2}),( v_i,v_y,v_{e_1}),( v_i,v_{e_1},v_{e_2}),(v_x,v_y,v_{e_1}),(v_x,v_{e_1},v_{e_2})\}\,.\label{eq:cones_resolution}
\end{align}
The local GLSM of the relevant local three dimensional toric variety then takes the form 
\begin{align}\label{eq:local_GLSM}
    \begin{pmatrix}
        {x}_i & x & y & e_1 & e_2\\
        1 & 1 & 2 & -1 & 0\\
        2 & 2 & 3 & 0 & -1
    \end{pmatrix}\, .
\end{align}
From this, one confirms that after passing to the blown-up variety, the Weierstrass hypersurface equation can be written as
\begin{align}
    \begin{split}
        0 = P_W =e_1y^2-x^3-x_i^2 f'xz^4-x_i^3 g'z^6\,.\label{eq:Weierstrass_equation_resolved}
    \end{split}
\end{align}
The polynomial $P_W$ still identically vanishes along the locus $\{y=x= x_i=0\}$ and now also along $\{e_1=x=x_i=0\}$, but, as can be seen from the cone assignments \eqref{eq:cones_resolution}, neither combination of associated rays forms a cone together, such that the dangerous-looking loci are actually not part of $V_6$.
This means that the basepoint locus associated to the NHC at ${x}_i=0$ is resolved.

After setting $e_1=0$ in \eqref{eq:Weierstrass_equation_resolved}, $P_W$ takes the form
\begin{align}\label{eq:mondromy_polynomial}
    -P_W|_{e_1=0} &= x^3+x_i^2f'xz^4+x_i^3g'z^6\,.
\end{align}
This is precisely the ``monodromy polynomial'' whose splitting properties govern the surviving gauge algebra \cite{tate1975algorithm,Katz:2011qp,Grassi:2011hq}. 
Indeed, the monodromy polynomial may factorize into two, or three components, or be irreducible. 

More precisely we have
\begin{align}
    P_W|_{e_1=0}^{\mathfrak{so}(8)} &= (x-x_iQz^2)(x-x_iRz^2)(x+x_i(Q+R)z^2)\,,\label{eq:Weierstrass_three_splitting}\\
    P_W|_{e_1=0}^{\mathfrak{so}(7)} &= (x-x_iQz^2)(x^2+x_iQ x z^2+x_i^2Rz^4)\,,\label{eq:Weierstrass_two_splitting}\\
    P_W|_{e_1=0}^{\mathfrak{g}_2} &= x^3+x_i^2Qxz^4+x_i^3Rz^6\,,\label{eq:Weierstrass_one_splitting}
\end{align}
where $Q,R$ are sections of appropriate line bundles on the base of the fibration.
The divisor $\{e_1=0\}$ hence descends to $(3,2,1)$ divisors on the Weierstrass hypersurface for the gauge algebras $(\mathfrak{so}(8),\mathfrak{so}(7),\mathfrak{g}_2)$.

We will now justify the choice of blow-up morphism \eqref{eq:blowups_Coulomb_branch}. We will do so by verifying that the intersection pairing of the exceptional components indeed leads to \eqref{eq:intersection_D4}.\footnote{We refer to App.~\ref{app:intersections_non_favorable} for details about computing intersection numbers in the relevant ``non-favorable'' situations, in which the hypersurface does not inherit a basis of its second cohomology group from the variety it is embedded in.} 

Locally, in a tubular neighborhood around the locus hosting the NHC, the total space $V_6$ can be understood as a fibration of the three-dimensional toric variety $V_3$, defined by the cones \eqref{eq:cones_resolution}, over a three complex-dimensional base that hosts the $I_0^*$ singularity.
The total Weierstrass hypersurface is then the $2d$ (resolved) Weierstrass surface $W_2$ in $V_3$ fibered over the same base space.
In the local toric threefold, the Weierstrass polynomial trivially splits
\begin{align}\label{eq:split_local_threefold}
    P_{W}|_{e_1=0}=(x-\lambda_1)(x-\lambda_2)(x-\lambda_3)\,,
\end{align}
as all coefficients of the remaining polynomial are simply complex numbers. 
Hence, there will always be four exceptional curves/divisors $E_i,\tilde{E}_i$, $i=1,2,3,4$ in $W_2$.
Using basic intersection theory one arrives at the intersection matrix
\begin{align}
    D_W\cdot D_{e_a}\cdot D_{e_b}|_{V_3} = \begin{pmatrix}
        -6 & 3\\
        3 & -2
    \end{pmatrix}\,,\label{eq:e1_e2_threefold}
\end{align}
where we have indicated that the intersection is taken in the local toric threefold.

As the splitting of $\{e_1=0\}$ into three components $E_i$, $i=1,2,3$ \eqref{eq:split_local_threefold} is completely democratic, we can deduce that the intersection matrix on the $2d$ Weierstrass hypersurface $W_2$ within $V_3$ is given by
\begin{align}\label{eq:intersection_form_loc_threefold}
    E_a\cdot \tilde{E}_b|_{W_2} = \begin{pmatrix}
        -2 & 1 & 1 & 1\\
        1 & -2 & 0 & 0\\
        1 & 0 & -2 & 0\\
        1 & 0 & 0 & -2
    \end{pmatrix}\,,\qquad a,b\in \{1,\ldots,4\}\,.
\end{align}

But, divisors of $W_2$ need not define divisors of the $5d$ Weierstrass hypersurface upon fibering over the $3d$ base.
It may be that only suitable unions of divisors in $W_2$ define divisors in $W$. Whenever this happens, a subset of the exceptional curves become homologous.
The monodromy responsible for the folding of the corresponding Dynkin diagram from $D_4$ to $B_3$ or $G_2$ precisely explains how the homology classes of $W_2$ relate to those of $W$:
\begin{enumerate}
    \item For the $\mathfrak{so}(8)$ case, there is no monodromy and each divisor/curve in $W_2$ also defines a divisor/curve in $W$. Hence, Eq.~\eqref{eq:intersection_form_loc_threefold} defines the intersection form between exceptional divisors and curves in $W$ and we match the required intersection form \eqref{eq:Cartan_matrix}.
    \item For the $\mathfrak{so}(7)$ case, there is a monodromy that relates two of the exceptional $\Proj^1$s in $W_2$. As a result, only the union of two divisors in $W_2$ actually defines a divisor in $W$.
    \item For the $\mathfrak{g}_2$ case, only the sum of all three components of $e_1$ defines a divisor in $D_W$ and all three associated curve classes are in the same homology class in $W$. 
\end{enumerate}
The corresponding intersection forms are thus recovered by adding two (three) columns of \eqref{eq:intersection_form_loc_threefold} to each other and then only keeping the independent rows:
\begin{align}
    E_a\cdot \tilde{E}_b|_W^{\mathfrak{so}(7)}=\begin{pmatrix}
        -2 & 1 & 2\\
        1 & -2 & 0\\
        1 & 0 & -2
    \end{pmatrix}\,,\qquad E_a\cdot \tilde{E}_b|_W^{\mathfrak{g}_2}=\begin{pmatrix}
        -2 & 3\\
        1 & -2
    \end{pmatrix}\,.
\end{align}
We thus recover the intersection forms \eqref{eq:intersection_D4},\eqref{eq:Cartan_matrix} and we have therefore shown that the toric blow-ups \eqref{eq:blowups_Coulomb_branch} define the K\"ahler resolution of the NHC singularities as claimed. 

A couple of comments are in order: first, for the $\mathfrak{so}(8)$ and $\mathfrak{so}(7)$ cases, the divisors of $Y_4$ are not directly inherited from $H^2(V_6,\mathbb{Z})$, and the embedding of $Y_4$ into $V_6$ is thus ``not favorable''.

Second, we note that the weights of the divisor $D_B$ in the blown-up variety can straightforwardly be determined from its weights before the blow-up, using the form of the local GLSM \eqref{eq:local_GLSM}.

Finally, we observe that in our main case of interest, uplifts of globally defined Calabi-Yau orientifolds (without intersecting O7 planes), the gauge algebra of all NHCs is $\mathfrak{so}(8)$. 
This can be seen as follows: $2\overline{K}_B$ is the class of the total O7-plane. Factoring out a single factor $x_i$ and restricting to $x_i=0$ the general section vanishes precisely along all the curves along which other orientifold seven-planes intersect our reference O7-plane $D_i$. But as pairs of O7-planes do not intersect, by assumption, this general section must actually be constant. An analogous statement also holds for $4\overline{K}_B$ and $6\overline{K}_B$, after factoring out $x_i^2$ and $x_i^3$ respectively and restricting to $x_i=0$, so the sections $f'|_{x_i=0}$ and $g'|_{x_i=0}$ are also constant functions. We conclude that the monodromy polynomial is simply a cubic in $\mathbb{P}^1$, and therefore factorizes globally along the locus $x_i=0$, leaving no room for a monodromy that could break the gauge algebra to $\mathfrak{so}(7)$ or $\mathfrak{g}_2$.

In \S\ref{sec:5d_uplift}, we will briefly discuss situations where pairs of O7 planes do intersect in certain singular collision limits (as in \cite{DelZotto:2014hpa}), and then recombine to a single seven-brane stack with gauge algebra reduced to $\mathfrak{g}_2$ or $\mathfrak{so}(7)$. We believe such models still have a well-behaved weakly coupled type IIB description in terms of D7-branes on O7-planes, even though they do not directly arise from a globally defined orientifold involution of a Calabi-Yau threefold.

\section{Properties of the uplift}\label{sec:Properties}
We now turn to collecting  properties of the toric variety $V_6$ and the Calabi-Yau submanifold $Y_4$ therein.
We will mostly be interested in identifying conditions on the divisors $D_B$ and $D_W$ such that the mirror of $Y_4$ can be understood in a combinatorial way, according to the works of Batyrev and Borisov \cite{borisov1993mirrorsymmetrycalabiyaucomplete,batyrev1994dualconesmirrorsymmetry}. 
We focus on these, because we are ultimately interested in the superpotential of F--theory compactifications, which is determined by the periods of $Y_4$.
In the large complex structure limit of $Y_4$, the periods can be systematically computed if the corresponding mirror manifold $X_4$ is known explicitly as a complete intersection of hypersurfaces in a toric variety \cite{Hosono:1993qy,Hosono:1994ax}.
\subsection{Nef-partitions}\label{sec:nef_partitions}
As discussed in \S\ref{sec:toric_varieties}, mirror symmetry can be understood in a combinatorial way for Calabi-Yau hypersurfaces if the toric fan of the ambient variety $V$ derives from a triangulation of a reflexive polytope \cite{Batyrev:1993oya}. 
The polytope being reflexive is equivalent to the anticanonical divisor $\overline{K}_V$ of $V$ being Cartier and nef.

Batyrev's construction generalizes to complete intersection Calabi-Yaus (CICYs) that are cut out by $k$ Cartier and nef divisors $D^{(1)},\ldots,D^{(k)}$ in a toric variety, if the  $D^{(1)},\ldots,D^{(k)}$ form a \emph{nef-partition} \cite{borisov1993mirrorsymmetrycalabiyaucomplete,batyrev1994dualconesmirrorsymmetry}, a property we shall define below.

In App.~\ref{app:nef_partitions}, we review the Batyrev-Borisov mirror symmetry construction of \cite{borisov1993mirrorsymmetrycalabiyaucomplete,batyrev1994dualconesmirrorsymmetry} for general $k$-part CICYs. Here, we focus on the special case $k=2$ applied to the F--theory uplifts introduced in \S\ref{sec:Systematic_F_Theory_uplifts}.

As before, we let $D_B,D_W$ have decompositions
\begin{align}
    D_B=\sum_i b_iD_i\,,\qquad D_W=\sum_iw_iD_i\, ,\label{eq:DB_DW_decomposition}
\end{align}
in terms of the prime toric divisors $D_i$.
As by construction $D_B+D_W=\overline{K}_{V_6}$, without loss of generality, we can set $b_i+w_i=1$.

If $D_B,D_W$ are Cartier and nef, the Minkowski sum of their Newton polytopes
\begin{align}\label{eq:Delta_polytope}
    \Delta:=\Delta_B+\Delta_W\subset M_\R\,,
\end{align}
is then reflexive, \emph{cf.} Lemma \ref{lemma:normal_fan}.
The polar dual of \eqref{eq:Delta_polytope}, denoted $\Delta^\circ$, is then equal to the convex hull over the one-skeleton
\begin{align}
    \Delta^\circ = \text{Conv}(\Sigma_6(1))\subset N_\R\,.
\end{align}

\begin{definition}
    Let $D_B,D_W$ be Cartier and nef divisors on a toric variety $V$ such that $D_B+D_W=\overline{K}_V$.
    Such a decomposition of the anticanonical divisor is called a \emph{nef-decomposition}.
\end{definition}

\begin{definition}[\cite{borisov1993mirrorsymmetrycalabiyaucomplete,batyrev1994dualconesmirrorsymmetry}]
    Let $D_B,D_W$ as in \eqref{eq:DB_DW_decomposition} be a nef-decomposition. If in addition, 
    \begin{align}
        b_i,w_i\in \{0,1\}\,,\label{eq:partition_condition_2}
    \end{align}
    we say that $D_B,D_W$ form a \textit{nef-partition}.
\end{definition}

We note that $D_B,D_W$ forming a nef-partition implies that
    $\text{Conv}(\Delta_B,\Delta_W)$
is a reflexive polytope as well, see App.~\ref{app:nef_partitions}.

The property \eqref{eq:partition_condition_2} naturally partitions the rays of $\Sigma_6$ into two sets
\begin{align}
    E^B=\{v_i\in \Sigma_6(1)\;|\;b_i=1\}\,,\qquad E^W=\{v_i\in \Sigma_6(1)\;|\;w_i=1\}\,.
\end{align}
From these we define the polytopes
\begin{align}
    \nabla_B:=\text{Conv}(\{0\}\cup E^B)\,,\qquad \nabla_W:=\text{Conv}(\{0\}\cup E^W)\, .
\end{align}
Their Minkowski sum forms another reflexive polytope
\begin{align}
    \nabla := \nabla_B+\nabla_W\, ,
\end{align}
and its reflexive dual satisfies
\begin{equation}
    \nabla^\circ \equiv \text{Conv}(\Delta_B,\Delta_W)\, .
\end{equation}

An FRST of $\nabla^\circ$ now defines a toric variety $V_6^\vee$ and the polytopes $\nabla_B,\nabla_W$ define Cartier and nef divisors $D_B^\vee,D_W^\vee$ on $V_6^\vee$.
The corresponding CICY
\begin{align}
    X_4=D_B^\vee\cap D_W^\vee \subset V_6^\vee\,,
\end{align}
is the mirror dual of $Y_4$ \cite{borisov1993mirrorsymmetrycalabiyaucomplete,batyrev1994dualconesmirrorsymmetry}.
An illustration of all relevant polytopes is given in Fig.~\ref{fig:BB_Mirror_symetry}.

\begin{figure}
        \centering
        \includegraphics[width=0.8\linewidth]{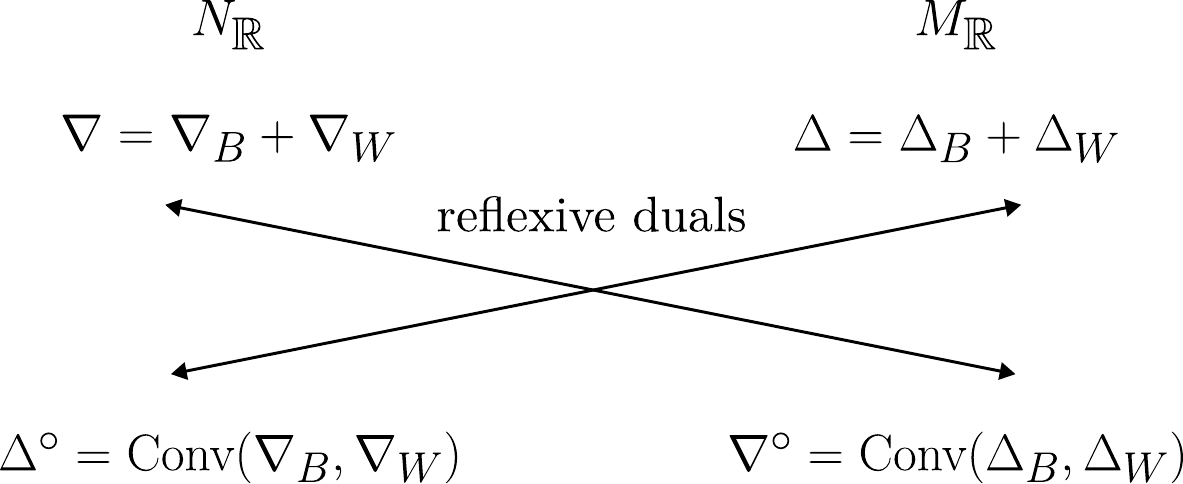}
        \caption{Dualities for Batyrev-Borisov \cite{borisov1993mirrorsymmetrycalabiyaucomplete,batyrev1994dualconesmirrorsymmetry} symmetry coming from nef-partitions.}
        \label{fig:BB_Mirror_symetry}
\end{figure}

\subsection{Hodge numbers}
If $Y_4$ is given by a nef-partition, we can compute the Hodge numbers combinatorially, as reviewed in App.~\ref{app:Hodge_numbers}.
The upshot is that there exists a generating functional for the Hodge numbers of CICYs realized by $k$-part nef-partitions \cite{Batyrev:1994pg,Batyrev:1995ca}.
However, the extraction of $h^{1,1},h^{2,1}$ and $h^{3,1}$ becomes increasingly more complicated with growing $k$.
For our main case of interest, $k=2$, the explicit formulas are given by \eqref{eq:Hodge_numbers},\eqref{eq:Hodge_number_h21}.

As is well known, $h^{1,1}(Y_4)$ is related to $h^{1,1}(B_3)$ by the Shioda-Tate-Wazir Theorem \cite{Shioda1972,wazir2001arithmeticellipticthreefolds} via
\begin{align}
    h^{1,1}(Y_4)=h^{1,1}(B_3)+1+4\cdot n_{\rm NHC}+n_{U(1)}\,,
\end{align}
with $n_{\rm NHC}$ the number of NHCs and $n_{U(1)}$ the number of additional sections of the elliptic fiber which correspond to global $U(1)$ fields in the type IIB description.\footnote{There exist examples of F-theory models with `non-Higgsable $U(1)$s' \cite{Wang:2016urs}. In contrast, we do not expect non-Higgsable $U(1)$s to arise in our setting where $Y_4$ is the F--theory uplift of a weakly coupled type IIB orientifold model with its base $B_3$ constructed as a hypersurface in a toric variety. It would be very interesting to determine whether such models exist.}

\subsection{Triangulations of \texorpdfstring{$\nabla$}{nabla}}
In \S\ref{sec:Systematic_F_Theory_uplifts}, we constructed a toric fan $\Sigma_6$ of $V_6$, the ambient space of the uplift $Y_4$. Let us assume that $D_B,D_W$ form a nef-decomposition and hence the polytope $\Delta^\circ = \text{Conv}(\{v_i\,, v_i\in \Sigma_6(1)\})$ is reflexive. 

Then, this toric  fan arises from a regular star triangulation of the polytope $\Delta^\circ = \text{Conv}(\{v_i\,, v_i\in \Sigma_6(1)\})$.\footnote{This is because by construction, the anticanonical divisor is Cartier and nef.} This triangulation is however not in general fine, because it need not include all the points associated with divisors that do not intersect the Calabi-Yau fourfold.

One may, however, always refine such fans to form \emph{fine} regular star triangulations (FRSTs) of $\Delta^\circ$ (subject to compatibility with the $\Proj^2_{[2,3,1]}$ fibration \eqref{eq:P231_fibration}). Let $\Sigma_6'$ be one such fan.
By construction, the subvariety $D_B\cap D_W$ is the same as our original Calabi-Yau fourfold $Y_4$.

The fan $\Sigma_6'$ can be finer than $\Sigma_6$ because we are taking a convex hull. 
Divisors associated to points interior to facets or codimension-two faces of $\Delta^\circ$ will not intersect $Y_4$, for the same reasons that divisors associated to points interior to facets of a polytope do not intersect the associated Calabi-Yau hypersurface \cite{Batyrev:1993oya}.
Hence, these rays can show up in $\Sigma_6'$ while they in general do not in $\Sigma_6$.
In the presence of NHCs, additional such rays in $\Sigma_6'$ do exist: denoting the ray of an NHC by $v_i$ and its two associated blow-ups by $v_{e_1},v_{e_2}$, the convex hull of $\{0,v_i,v_x,v_y,v_{e_1},v_{e_2}\}$ contains one additional point given by $v_i+v_x+v_y$.
More points can (and in general will) appear but all of these have the property that they do not intersect the Calabi-Yau complete intersection.

Next, both the Cartier and nef properties of divisors depend on the triangulations of the fan. It then na\"ively appears that additional constraints on the set of FRSTs of $\Delta^\circ$ are needed in order to ensure that the divisors $D_B$ and $D_W$ are Cartier and nef. After all, only the divisor $D_B+D_W=\overline{K}_{V_6}$ is Cartier and nef by construction. But, in fact any FRST leads to $D_B$ and $D_W$ being Cartier and nef, since every FRST is a refinement of the normal fan of $\overline{K}_{V_6}$. 
By Theorem \ref{thm:normal_fan_of_Minkowski_sum}, $\Sigma_{\Delta_M^B+\Delta_M^W}$ is a refinement of both normal fans $\Sigma_{\Delta_B}$ and $\Sigma_{\Delta_{W}}$. 

\subsection{Building a nef-partition}\label{sec:Cartier_and_nef}
According to Theorem \ref{thm:Bertini}, $Y_4$ is guaranteed to be smooth away from singularities of $V_6$ if $D_B$ and $D_W$ are basepoint free divisors. Thus, to ensure smoothness --- and for applying the mirror symmetry dictionary reviewed in \S\ref{sec:nef_partitions} --- we are interested in finding models in which $D_B,D_W$ are Cartier and nef \emph{by construction}.

In \S\ref{sec:Systematic_F_Theory_uplifts}, we have explained how codimension-one basepoint loci of the Weierstrass line bundle $\mathcal{O}(D_W)$ are resolved. 
These did not depend on triangulation data and required a blow-up morphism for resolution.
Here, we turn to higher-codimension basepoint loci.

Let us start by recalling that after resolution of $I_0^*$ singularities the Weierstrass equation takes on the form
\begin{align}
    P_W=y^2\prod_{i\in \text{NHC}} e_{i,1} - x^3 - f'(x)xz^4\prod_{i\in \text{NHC}}x_i^2 - g'(x)z^6 \prod_{i\in \text{NHC}} x_i^3\,.
\end{align}
From this we see that while the basepoint loci at $\{x=y=x_i\}=0$ have been resolved by introducing the exceptional coordinates $e_{i,1},e_{i,2}$, there now may arise new basepoint loci at $\{e_{i,1}=x=x_j=0\}$ for $i\neq j$.

But, while $P_W$ would vanish along such loci, it is not guaranteed that these loci actually are part of $V_6$. 
This of course depends on whether the pairs $v_i,v_j$ form cones together in the fan of $\widetilde{V}_4$, or, in other words, whether any pairs of rigid O7 planes intersect.

If a pair of rigid O7-planes intersect, the codimension-four subset $\{e_{i,1}=x=y=x_j=0\}$ in $V_6$ is singular as both $P_W$ and $dP_W$ vanish on this subspace. Such ``collision loci'' were studied in \cite{DelZotto:2014hpa} and arise at infinite distance in moduli space. Nevertheless, they can be resolved by simply blowing up the collision locus. Finally, basepoint loci of $D_W$ may arise as higher-codimension basepoint loci of $6\overline{K}_B$ in $\widetilde{V}_4$.

A sufficient, and plausibly necessary, condition for $D_W$ and $D_B$ to be Cartier and nef is encapsulated in the following

\begin{lemma}\label{lem:uplift_is_refl}
Let $B_3=D_B\subset \widetilde V_4$ be a Calabi-Yau orientifold hypersurface, and let $Y_4=D_B\cap D_W\subset V_6$ be its F--theory uplift after resolving all codimension-one $I_0^*$ singularities as in \S\ref{sec:smoothening}. Let
\begin{equation}\label{eq:KB6_movable}
    D_{g'}:=6\overline K_B-\sum_{i\in{\rm NHC}}3D_i
\end{equation}
denote the movable part of $6\overline K_B$, obtained by subtracting the fixed components associated with the non-Higgsable $I_0^*$ divisors. Suppose that:
\begin{itemize}
    \item[(a)] $D_B$ is Cartier and nef on $\widetilde V_4$.
    \item[(b)] $D_{g'}$ is Cartier and nef on $\widetilde V_4$.
    \item[(c)] Rigid O7 divisors do not intersect.
\end{itemize}
Then the divisors $D_B$ and $D_W$ which define the resolved F--theory uplift are both Cartier and nef on $V_6$.

\noindent\textbf{Proof:} Let $\beta:\, V_6\rightarrow V_6^s$ denote the blow-down that removes the exceptional divisors associated with $I_0^*$-singularities, and $\pi:\, V_6^s\rightarrow \widetilde{V}_4$ the projection that forgets the $\Proj^2_{[2,3,1]}$-fiber. By construction the divisor $D_B$ on $V_6$ is the pullback of $D_B$ on $\widetilde{V}_4$ under the map $\pi \circ \beta$. Since pullbacks of Cartier and nef divisors under toric morphisms are Cartier and nef (see \cite{cox2011toric} \S$6$),  condition (a) implies that $D_B$ is indeed Cartier and nef on $V_6$.

It remains to be shown that the Weierstrass divisor $D_W$ is also Cartier and nef. To see this, it is enough to show that for every full-dimensional cone $\sigma$ of $V_6$ there exists a global section $s_{\sigma}$ that appears in the Weierstrass polynomial, which does not vanish along $\{x_i=0\,,\forall i\in\sigma(1)\}$. This simply constructs the Cartier data $m_\sigma\in \Delta_{D_W}$ required by Lemma \ref{lem:NP_Cartier_and_nef_is_lattice} and Definition \ref{def:Weil_and_Cartier_divisor}. For this we make a case distinction: first, suppose that $\sigma$ does not contain a generator associated with an NHC or one of the exceptional divisors that resolve $I_0^*$-singularities. Then, the required non-vanishing monomial is either $x^3$, $y^2$ or $\tilde{s}z^6$ --- corresponding respectively to the fiber-loci $y=z=0$, $x=z=0$, or $x=y=0$ ---  with $\tilde{s}$ a non-vanishing section of $D_{g'}$ (which exists by virtue of (b)). Second, suppose that $\sigma$ does contain an NHC or at least one of the exceptional divisors as a generator. Then, by virtue of (c), $\sigma$ involves at most a single NHC (and its resolution divisors). The relevant loci and associated non-vanishing global sections are given in Table \ref{tab:global_sections_NHCs}. $\,\,\Box$
\begin{table}[t]
    \centering
    \begin{tabular}{|c|c| c| c| c | c|}
    \hline
        Vanishing coordinates &  $(x_i,x,e_2)$ & $(x_i,y,e_1)$ & $(x_i,e_1,e_2)$ & $(x,y,e_1)$ & $(x,e_1,e_2)$\\\hline
        Global section & $y^2e_1$ & $x^3$ & $x^3$ & $\tilde{s}x_i^3 z^6$ & $\tilde{s}x_i^3 z^6$\\\hline
    \end{tabular}
    \caption{Local monomial sections that yield Cartier data for the Weierstrass divisor $D_W$ on cones containing an NHC ray or its $I_0^*$-resolution rays. Each listed triple denotes the local part of a maximal cone; the remaining generators are inherited from the base cone. The section $\tilde s$ is chosen to be a non-vanishing monomial section of $D_{g'}$ on the corresponding residual base locus.}
    \label{tab:global_sections_NHCs}
\end{table}

As a consequence of Lemma \ref{lem:uplift_is_refl}, a generic complete intersection
\begin{equation}
    Y_4=D_B\cap D_W\subset V_6
\end{equation}
is smooth away from the singular locus of $V_6$. The remaining singularities are described in the paragraph below.
\end{lemma}

Under the assumptions of Lemma \ref{lem:uplift_is_refl} the anticanonical divisor of $V_6$, 
\begin{equation}
    \overline{K}_{V_6}=D_B+D_W\, ,
\end{equation}
is also nef and Cartier. As a consequence, the convex hull over the primitive generators of the rays of the toric fan of $V_6$ forms a $6d$ reflexive polytope $\Delta_6$. Moreover, the toric fan of $V_6$ is defined from a regular star subdivision of $\Delta_6$. If this isn't already a fine regular star triangulation, one may always refine it by adding all the points in the boundary of $\Delta_6$ and obtaining a regular triangulation. This defines a maximal projective crepant partial (MPCP) resolution of the toric variety \cite{Batyrev:1993oya}. The remaining singularities of $V_6$ are terminal, lie at codimension four in $V_6$, and therefore intersect the Calabi-Yau fourfold at most in isolated points. Locally, they are orbifold singularities of the form $\mathbb{C}^4/{\mathbb{Z}_n}$ \cite{Morrison1984,Batyrev:1993oya,batyrev1994dualconesmirrorsymmetry}. In the case $n=2$ these are naturally interpreted as the locations of O3-planes, \emph{cf.} \cite{Garcia-Etxebarria:2015wns}, and indeed in our examples we have only encountered $\mathbb{C}^4/{\mathbb{Z}_2}$ orbifold singularities.\footnote{We are however not currently aware of a combinatorial proof that $\mathbb{Z}_n$-singularities with $n>2$ cannot arise in our setup.}

If, in addition, $D_B$ and $D_W$ admit decompositions in terms of the prime toric divisors that satisfy condition \eqref{eq:partition_condition_2}, then $(D_B,D_W)$ form a nef-partition.

We now want to, starting from a Calabi-Yau orientifold $B_3\subset\widetilde{V}_4$, modify $\widetilde{V}_4$ such that the conditions (a)--(c) of Lemma \ref{lem:uplift_is_refl} are met. 
The modification process can involve the change of triangulation and the blowing up of divisors of the toric fan.
We are specifically interested in finding orientifolds that arise from deforming the original model by only an infinitesimal distance in moduli space.
We will in general not be interested in models that require a large deformation away from the original orientifold. 

The construction of \S\ref{sec:Normal_fan} outlines how to modify $\widetilde{V}_4$ to a toric variety $\widetilde{V}_4'$ in which the base divisor $D_B$ is Cartier and nef, by considering refinements of the normal fan $\Sigma_{\Delta_B}$. 
This ensures that part (a) of Lemma \ref{lem:uplift_is_refl} is satisfied.
We now want to consider a specific implementation of this process that also ensures that the movable part of $6\overline{K}_B$ becomes Cartier and nef. 
From a physics perspective, we furthermore want to ensure that there cannot be divisors along which a generic section of $6\overline{K}_B$ vanishes to an order other than zero or three, as these are the only acceptable generic vanishing orders for a weakly coupled type IIB orientifold.

To begin with, we consider the movable part of $6\overline{K}_{B}$, as defined in (b) of Lemma \ref{lem:uplift_is_refl}.
The integral points in the corresponding Newton Polytope $\Delta_{g'}$ are the same as $\Delta_{6\overline{K}_{B}}\cap M$, but there are no basepoint loci at codimension one: these have been subtracted off. 

To construct a toric variety in which a line bundle with the same sections as $D_{g'}$ is Cartier and nef, we again consider normal fans, continuing our discussion in \S\ref{sec:Normal_fan}. The basic goal is to refine the normal fan constructed in \S\ref{sec:Normal_fan} in ways that guarantee that $D_{g'}$ is Cartier and nef.

First, we define $\Delta_{g'}$ as the \emph{convex hull} over all integral points in the Newton polytope associated with $D_{g'}$. Then, assuming that the Minkowski sum $\Delta_B+\Delta_{g'}$ is a full-dimensional polytope, we may consider its normal fan
\begin{equation}
\Sigma_n:=\Sigma_{\Delta_B+\Delta_{g'}}\, .
\end{equation}

Importantly, by Theorem \ref{thm:normal_fan_of_Minkowski_sum}, $\Sigma_n$ is the coarsest fan for which the divisors whose sections are computed from the integral points in $\Delta_B$ and $\Delta_{D_{g'}}$ are Cartier and nef. We will denote these divisors by $\widehat{D}_B$ and $\widehat{D}_{g'}$ respectively.

Each vertex $m$ of $\Delta_B+\Delta_{g'}$ can uniquely be written as $m=m^{(1)}+m^{(2)}$ with $m^{(1)}\in \Delta_B\,,m^{(2)}\in\Delta_{g'}$ vertices of the two respective polytopes.
Every full-dimensional cone $\sigma_{m}$ of the normal fan is then uniquely  labeled by a vertex $m$ of $\Delta_B+\Delta_{g'}$.

The divisors $\widehat{D}_B,\widehat{D}_{g'}$ have the decompositions
\begin{align}\label{eq:DB_Dg_decompositions}
    \widehat{D}_B\equiv \sum_i b_i D_i=-\sum_i \langle v_i, m_i^{(1)} \rangle D_i\,,\qquad \widehat{D}_{g'} \equiv  \sum_i w_iD_i =-\sum_i \langle v_i, m_i^{(2)} \rangle D_i\, ,
\end{align}
where $m_i=m_i^{(1)}+m_i^{(2)}$ are vertices of $\Delta_B+\Delta_{g'}$ dual to full-dimensional cones of $\Sigma_n$ that contain $v_i$.\footnote{Of course, the $v_i$ are contained in multiple full-dimensional cones, dual to different vertices $m$. In \eqref{eq:DB_Dg_decompositions} it does not matter which one is chosen.}

By virtue of \S\ref{sec:smoothening} the divisor $\widehat{D}_B$ represents the Calabi-Yau orientifold, which we will now refer to as $\widehat{B}$. One now needs to ensure that $6\overline{K}_{\widehat{B}}$ inherits all the desired properties from $D_{g'}$. We have
\begin{align}
    6\overline{K}_{\hat B} = 6(\overline{K}_{\widehat{V}_4}-\widehat{D}_B)=6\sum_i(1-b_i)D_i=6\sum_i (1+\langle v_i,m_i^{(1)}\rangle)D_i\,.\label{eq:6KB_normal_fan}
\end{align}

We would like to refine the normal fan $\Sigma_{n}$, but we first have to ensure the following properties are satisfied in order for the resulting orientifold model to be well-behaved.

The Newton polytope $\Delta_{6\overline{K}_{\widehat B}}$ must contain $\Delta_{g'}$.
Otherwise, one cannot identify the resulting model as a desingularization of the model we started with, as the Weierstrass model would need to be tuned to  a particular sub-locus in its moduli space. 
As such a deformation would not be infinitesimal, we do not want to consider such models in the following. 

This condition can be recast in the form 
\begin{align}
    6(1-b_i)\geq w_i\quad  \Longleftrightarrow \quad \langle v_i,\hat{m}_i\rangle +6\geq 0 \quad \forall i\, , \quad \hat{m}_i:=6m_i^{(1)}+m_i^{(2)}.\label{eq:condition_NP_larger}
\end{align}
If this inequality is saturated for a given $v_i$, the corresponding divisor does \emph{not} host an NHC. If on the other hand
\begin{equation}
    \langle v_i,\hat{m}_i\rangle = -3\, ,
\end{equation}
the divisor $D_i$ hosts an $I_0^*$ singularity. All other cases are incompatible with our weakly coupled type IIB orientifold interpretation, so we really need to impose
\begin{align}
    \langle v_i,\hat{m}_i\rangle\in \{-3,-6\}\, ,
    \label{eq:NHC_3_normal_fan}
\end{align}
for all primitive generators $v_i$ of the normal fan.
\begin{definition}
    Let $\Sigma_\Delta$ be the normal fan constructed from a lattice polytope $\Delta=\Delta_1+\Delta_2$. If all primitive generators $v$ of one-cones $\sigma(1)\in \Sigma_\Delta(1)$ satisfy condition \eqref{eq:NHC_3_normal_fan}, we say that the normal fan $\Sigma_\Delta$ is \emph{admissible}. Otherwise, it is \emph{inadmissible}.
\end{definition}

Given an admissible normal fan $\Sigma_{\Delta_B+\Delta_{g'}}$, it makes sense to proceed and consider refinements of the toric fan. To preserve the nef and Cartier conditions, the most general allowed refinements correspond to subdivisions of $\Sigma_n$, potentially including new rays $v$. 

By the same logic that we applied above to the rays of the normal fan, blow-up rays $v$ must satisfy condition \eqref{eq:NHC_3_normal_fan}, where $\hat{m}_v$ is associated with any full dimensional cone $\sigma$ of the normal fan that $v$ is contained in. Now, by assumption, $v$ can be written as a non-negative linear combination of the generators of $\sigma$, $v=\sum_{v'\in \sigma(1)} a_{v'} v'$, $a_{v'}\geq 0$, and by virtue of \eqref{eq:condition_NP_larger} we have $\langle v',\hat{m}\rangle\leq -3$. Hence, imposing $\langle v,\hat{m}\rangle+6\geq 0$ implies \begin{equation}
    \sum_{v'\in \sigma(1)}a_{v'}\leq 2\, .
\end{equation} 
In particular, the possible blow-up vertices $v$ are actually contained in a \emph{polytope} rather than a cone. Thus, it is natural to consider refinements of the normal fan $\Sigma_n$ that contain \emph{all} such blow-up vertices, and triangulating the resulting vector configuration  (using the tools developed in \cite{MacFadden:2025ssx}).

\begin{definition}
    Let $\Sigma$ be an admissible normal fan constructed from a lattice polytope $\Delta=\Delta_1+\Delta_2$. If $\Sigma'$ is a refinement of the fan $\Sigma$ that contains all primitive generators $v$ that satisfy \eqref{eq:NHC_3_normal_fan}, and forms a regular triangulation of the underlying vector configuration, we call it a \emph{maximal refinement} of $\Sigma$.
\end{definition}

In the resulting toric fourfold, both divisors $\widehat{D}_B$ and $D_{g'}$ are Cartier and nef by construction! After resolving $D_4$-singularities, as explained in \S\ref{sec:smoothening}, the resulting F--theory uplift $Y_4$ is then smooth, up to isolated terminal $\mathbb{Z}_n$-singularities.

One might have attempted a simpler approach: start with the unrefined normal fan $\Sigma_{\Delta_B+\Delta_{6\overline{K}_B}}$, make sure it fulfills the conditions \eqref{eq:condition_NP_larger},\eqref{eq:NHC_3_normal_fan}, and add all the rays of the ``old'' fan $\widetilde{\Sigma}_4$ to it.
This would correspond to the coarsest fan that ensures that parts (a) and (b) of Lemma \ref{lem:uplift_is_refl} are fulfilled.
However, part (c) is guaranteed only for maximal refinements of admissible normal fans.
To prove this, let us assume that at some point in our construction, we have two rays $v_1,v_2$ whose associated divisors host \emph{intersecting} rigid O7 planes. 
The fact that they intersect means that $v_1,v_2$ lie in some common full dimensional cone $\sigma_{m^{(1)}+m^{(2)}}$.
We can resolve the intersection curve by blowing up a (regular) point of the fan by introducing the blow-up ray $v_e=v_1+v_2$ and refining the fan accordingly.
The ray $v_e$ satisfies
\begin{align}\label{eq:blowup_in_max_fan}
    \langle m_2+6m_1,v_e\rangle +6 = \langle m_2+6m_1,v_1\rangle + \langle m_2+6m_1,v_2\rangle +6 = 0\, ,
\end{align}
and hence $v_e$ is automatically included in a maximal refinement.

Here, the final equality holds, because $v_1,v_2$ are rays of NHCs and hence
\begin{align}
    \langle m_2+6m_1,v_{1}\rangle +6 =\langle m_2+6m_1,v_{2}\rangle+6 = 3\, .
\end{align}
We have thus shown that $v_e$ is part of the maximally refined normal fan resolving the intersection of rigid O7 planes. 

Let us summarize the possible failure modes in our construction, i.e., failure to produce divisors with properties listed in Lemma \ref{lem:uplift_is_refl} (a)--(c):
\begin{itemize}
    \item The orientifold hypersurface is irregular, \emph{cf.} Def. \ref{def:regular_orientifold}. In this case either $D_B$ has a divisorial basepoint locus, which the normal fan fails to remove, or the normal fan can simply not be constructed.
    \item The normal fan constructed from $\Delta=\Delta_B+\Delta_{g'}$ is not admissible. This means that either the resulting base space cannot be interpreted as a weakly coupled type IIB orientifold model, or that the model is forced onto a sublocus in its moduli space.
\end{itemize}
If neither of the above situations occur, we may forget about the ``original'' fan $\widetilde{\Sigma}_4$, rename
\begin{align}
    \widehat{V}_4\to \widetilde{V}_4\,,\qquad \widehat{D}_B\to D_B\,,\qquad  \overline{K}_{\widehat{B}}\to \overline{K}_{B}\, ,
\end{align}
and proceed with the F--theory uplift as discussed before in \S\ref{sec:Systematic_F_Theory_uplifts}.
In this case, the assumptions of Lemma \ref{lem:uplift_is_refl} are satisfied. In particular, $D_B$ and $D_W$ are nef Cartier divisors on $V_6$.

However, it frequently happens that the pair $(D_B,D_W)$ does not form a nef-partition, i.e.,  \eqref{eq:partition_condition_2} is not satisfied.

\addtocontents{toc}{\protect\setcounter{tocdepth}{1}}
\section{Examples}\label{sec:Examples}

Let us now study examples to which we apply the machinery developed above.
All examples can be studied using the software $\mathtt{CYTools}$ \cite{Demirtas:2022hqf} and a corresponding class for Calabi-Yau orientifolds and F--theory uplifts will be added to the public repository of $\mathtt{CYTools}$. In the meantime, our algorithms are available through a GitHub repository  \href{https://github.com/B-Hassfeld/Calabi-Yau-orientifolds-and-F-Theory-uplifts}{\faGithub}.
The following examples have been analyzed using it.

\subsection{A lower-dimensional example: Cubic in $\mathbb{P}^2$}
For simplicity let us begin with a lower-dimensional example.
Let $\Sigma_2$ be the toric fan of $\Proj^2$ with rays given by the columns of 
\begin{align}
    p=\begin{pmatrix}
        1 & 0 & -1\\
        0 & 1 & -1
    \end{pmatrix}\,.
\end{align}
The generic anticanonical hypersurface therein is a $T^2$ cut out by a cubic polynomial in the homogeneous coordinates.
Orbifolding $\Proj^2$ by refining the lattice with the lattice vector $\xi/2=(1,1)/2$ leads to a toric fan $\widetilde{\Sigma}_2$ with primitive rays $v_1,v_2,v_3$ given by the columns of 
\begin{align}
    \tilde{p}=\begin{pmatrix}
        1 & -1 & 0\\
        1 & 1 & -1
    \end{pmatrix}\, ,
\end{align}
expressed in a basis of the refined lattice $\Vec{e}_1=(1,-1)/2$, $\Vec{e}_2=(1,1)/2$.

Both fans are illustrated in Fig.~\ref{fig:P2_fans}.
\begin{figure}
    \centering
    \includegraphics[width=0.7\linewidth]{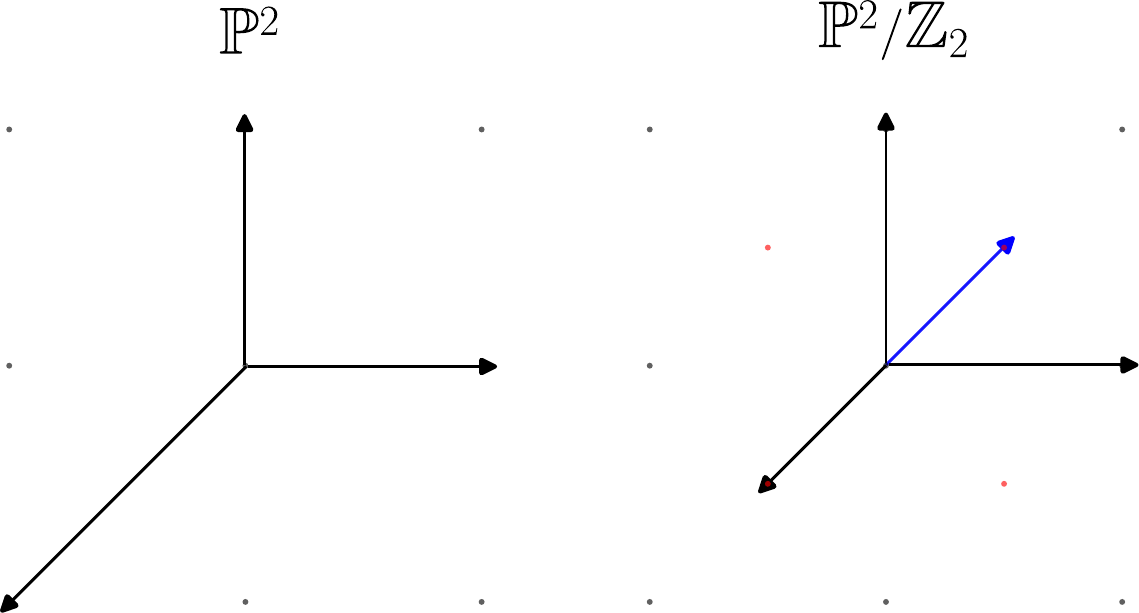}
    \caption{The fan of $\Proj^2$ and of $\Proj^2/\Z_2$ after orbifolding with respect to the vector $\xi=(1,1)$. The red lattice points are due to refinement. The blue ray is the blow-up divisor $v_4$ resolving the $A_1$-singularity.}
    \label{fig:P2_fans}
\end{figure}
The divisor $D_B$ can be represented as
\begin{align}
    D_B=3D_2\,.
\end{align}
The cone spanned by $v_1,v_2$ is singular with singularity type $A_1$. 
As explained in \S\ref{sec:resolution}, we can resolve this singularity by adding the ray $v_4\equiv\frac{1}{2}(v_1+v_2)=(0,1)$ (the blue ray in Fig.~\ref{fig:P2_fans}) to $\widetilde{\Sigma}_2$. 
This ensures smoothness of $\widetilde{V}_2$. 
Alternatively, we can employ the strategy of \S\ref{sec:Normal_fan} and start from the normal fan of $D_B$, which here turns out to be the same as the resolved fan $\widetilde{\Sigma}_2$ above, with rays $v_1,\ldots,v_4$.

In this fan, $D_B$ is represented as
\begin{align}
    D_B=3D_2+D_4\,.
\end{align}
It is straightforward to verify that the normal fan of $\Delta_B+\Delta_{g'}$ is also given by the fan above.
Hence, the construction outlined in \S\ref{sec:Cartier_and_nef} has succeeded and, after uplifting, both $D_B$ and $D_W$ are nef Cartier divisors.

We can now construct the F--theory uplift by following the strategy developed in \S\ref{sec:Systematic_F_Theory_uplifts}.
There is one $I_0^*$ singularity supported along the divisor $D_4\equiv \{\tilde{x}_4=0\}$.\
In total, we find that the rays of the toric fan of the ambient toric variety $V_4$ are given by the columns $v_1,\ldots,v_9$ of 
\begin{align}
    p_4=\begin{pmatrix}
           1 & -1 & 0 & 0 & 0 & 0 & 0 & 0 & 0 \\
           1 & 1 & -1 & 1 & 0 & 0 & 0 & 1 & 2 \\
           0 & 0 & 0 & 0 & 3 & -2 & 0 & -1 & 0 \\
           1 & -2 & 1 &0 & 1 & -1 & 1 & -1 & -1 \\
        \end{pmatrix}\,,
\end{align}
where $v_8,v_9$ are the exceptional divisors that resolve the $I_0^*$ singularity.

It turns out that the divisors $D_B,D_W$ take the form
\begin{align}
    D_B=D_2+D_3\,,\qquad D_W=D_1+\sum_{i=4}^9 D_i\,,
\end{align}
such that the nef-partition property \eqref{eq:partition_condition_2} is also fulfilled. 
For consistency, one can also verify directly that $D_B,D_W$ are indeed Cartier and nef, and, that the four polytopes $\Delta,\Delta^\circ,\nabla,\nabla^\circ$ are reflexive.

The resulting CICY $Y_2$ is a (tuned) family of K3 surfaces. From the type IIB perspective, $12$ of the $16$ D7-branes are at generic positions, while $4$ are on top of one of the four O7-planes.

\subsection{The model \texorpdfstring{$\Proj[1,1,1,1,4]$}{P[1,1,1,1,4]}}
We continue with a model of trilayer type, \emph{cf.} \S\ref{sec:trilayer_polytopes}. This is one of the classic one-parameter Calabi-Yau threefolds that appears in \cite{Morrison:1991cd}.

We begin with the reflexive polytope $\Delta_{\Proj[1,1,1,1,4]}$ whose vertices $v_1,\ldots,v_5$ are the columns of
\begin{align}
    p=\begin{pmatrix}
      -1 & 1 & 1 & 1 & 1\\
      0 & 1 & 0 & 0 & -1\\
      0 & 0 & 1 & 0 & -1\\
      0 & 0 & 0 & 1 & -1
    \end{pmatrix}\,.\label{eq:verts_11114}
\end{align}
This representation of $\Delta_{\Proj[1,1,1,1,4]}$ is in trilayer normal form (see the discussion below Eq.~\eqref{eq:trilayer_grading}) with $v_*=v_1$ and furthermore, the vertices \eqref{eq:verts_11114} are the only nonzero lattice points of $\Delta_{\Proj[1,1,1,1,4]}$.
The unique FRST of this polytope yields a toric fan $\Sigma_4$ of $V_4$ and the anticanonical hypersurface therein is a Calabi-Yau threefold $Y_3$ with Hodge numbers
\begin{align}
    h^{1,1}(Y_3)=1\,,\qquad h^{2,1}(Y_3)=149\,.
\end{align}
The hypersurface equation --- after using symmetries of $V_4$ to remove sections linear in $x_1$ --- is given by
\begin{align}
x_1^2=h(x_i)\,,\label{eq:hyp_EQ}
\end{align}
with $h(x_i)$ a polynomial in the $x_i,\; i>1$.
The polytope $\Delta_{\Proj[1,1,1,1,4]}$ has the simplifying properties that there are no nonzero points not interior to facets in the zero-layer. This property guarantees that the l.h. side of \eqref{eq:hyp_EQ} is the square of the homogeneous coordinate $x_1$. As the orientifold involution is $x_1\mapsto -x_1$, in the toric orbifold $\tilde{x}_1=x_1^2$ is one of the homogeneous coordinates. As a consequence, in the orbifold the hypersurface equation can be solved for $\tilde{x}_1$, 
and the orientifold $B_3$ is a toric variety itself. It is defined by a triangulation of a polytope with vertices
\begin{align}
    p_{\Proj_3}=\begin{pmatrix}
        1 & 0 & 0 & -1\\
        0 & 1 & 0 & -1\\
        0 & 0 & 1 & -1
    \end{pmatrix}\,,\label{eq:B3_P3}
\end{align}
which defines the toric base $B_3=\Proj^3$.

For all trilayer models (as reviewed in \S\ref{sec:trilayer_polytopes}), NHCs are in one-to-one correspondence with points in the ``zero-layer'', other than the origin. Therefore, in this example there are none. 

Following our more general procedure, which we hope will show its worth when we discuss topologically more complex models below, we can also represent $\Proj^3$ as a hypersurface in an orbifold $\widetilde{V}_4$ of $V_4$.
Orbifolding with respect to the lattice refinement $\xi=v_*$ leads to a toric fan whose rays can be represented as the columns of
\begin{align}
    \tilde{p}=\begin{pmatrix}  
      -1 & 2 & 2 & 2 & 2\\
      0 & 1 & 0 & 0 & -1\\
      0 & 0 & 1 & 0 & -1\\
      0 & 0 & 0 & 1 & -1
    \end{pmatrix}\,,\label{eq:orb_11114}
\end{align}
that we also label as $v_1,\ldots,v_5$.
As $\tilde{x}_1\equiv x_1^2$ is a monomial of the hypersurface equation \eqref{eq:hyp_EQ} of $Y_3$ satisfying \eqref{eq:projected_in}, we recover our toric base $B_3=\Proj^3$ as the divisor
\begin{align}
    D_B \equiv D_1 \subset \widetilde{V}_4\,.
\end{align}

We now have two ways of constructing the F--theory uplift.
We can construct appropriately twisted $\Proj_{[2,3,1]}$ fibrations over the toric varieties defined by \eqref{eq:B3_P3} or \eqref{eq:orb_11114} to construct varieties $V_5$ or $V_6$. The Calabi-Yau fourfold $Y_4$ is then constructed as a hypersurface in $V_5$ or as a codimension-two CICY in $V_6$.

The toric variety $V_5$ can be defined as a triangulation of the reflexive polytope $\Delta_5$ with vertices given by the columns of 
\begin{align}
    p_5=\begin{pmatrix}
        1 & 0 & 0 & -1 & 0 & 0\\
        0 & 1 & 0 & -1 & 0 & 0\\
        0 & 0 & 1 & -1 & 0 & 0\\
        0 & 0 & 0 & 0 & 3 & -2\\
        1 & 1 & 1 & 1 & 1 & -1
    \end{pmatrix}\,.\label{eq:p5}
\end{align}
Since $\Delta_5$ is reflexive, it is straightforward to construct $Y_4$ (and also its mirror-dual $X_4$) as anticanonical hypersurfaces and compute the Hodge numbers
\begin{align}
    h^{1,1}(Y_4)=2\,,\qquad h^{2,1}(Y_4)=0\,,\qquad h^{3,1}(Y_4)=3878\,,\qquad \frac{\chi(Y_4)}{24}=972\, ,\label{eq:Hodge_11114}
\end{align}
using the combinatorial formulas of \cite{Klemm:1996ts,Denef:2008wq}.

We can also consider the toric sixfold $V_6$ whose toric fan is constructed via the procedure explained in \S\ref{sec:Weierstrass_model}. The convex hull over the primitive generators of the fan is a reflexive polytope. The fan has a total of eight rays, five inherited from \eqref{eq:orb_11114} and $v_x,v_y,v_z$. No NHCs appear in this model. 

The two divisors defining the codimension-two Calabi-Yau fourfold are given by
\begin{align}
    D_B=D_1\,,\qquad D_W=3D_x\,. 
\end{align}
But, it turns out that $D_W$ can also be represented, by adding a suitable principal divisor, as
\begin{align}
    D_W=\sum_{i=2}^5D_i + D_x+D_y+D_z\,.
\end{align}
The two line bundles are nef, and also form a nef-partition, which is verified by finding an appropriate principal divisor such that $D_W,D_B$ fulfill the condition \eqref{eq:nef_partition_condition}.
We can hence compute the Hodge numbers using \eqref{eq:Hodge_numbers}, and this matches our earlier result \eqref{eq:Hodge_11114}.
One of the two K\"ahler moduli of $Y_4$ derives from the base $B_3$ and the other one comes from the section of the elliptic fiber.
\subsection{The model \texorpdfstring{$\Proj[1,1,1,6,9]$}{P[1,1,1,6,9]}}
Our next example is another trilayer model: the anticanonical hypersurface $\Proj[1,1,1,6,9]$, which appears in \cite{Candelas:1994hw}, and whose F--theory uplift has previously been discussed in \cite{Jefferson:2022ssj}. 

We start with the trilayer polytope $\Delta_{1,1,1,6,9}$ with vertices (in trilayer normal form) given by the columns of
\begin{align}
    p=\begin{pmatrix}
        -1 & 1 & 1 & 1 & 1\\
        0 & 0 & 1 & 0 & -1\\
        0 & 1 & 0 & 0 & -6\\
        0 & 1 & 1 & 1 & -8
    \end{pmatrix}\,.
\end{align}
It again has the feature that there are no nonzero points in the zero-layer that are not interior to a facet, so the orbifold itself is a toric variety.
The anticanonical hypersurface of the toric variety obtained from triangulating $\Delta_{\Proj[1,1,1,6,9]}$ is a Calabi-Yau threefold with Hodge numbers $h^{1,1}=2$ and $h^{2,1}=272$.
There is one NHC in this model, corresponding to the ray generator
\begin{align}
    v_{\text{NHC}} = (1,0,-1,-1)\, ,
\end{align}
which is interior to the one-layer facet.

After orientifolding and using the construction in \S\ref{sec:Weierstrass_model} to lift to the toric sixfold $V_6^s$, we find that the rays $v_1,..,v_6,v_x,v_y,v_z$ of the toric fan can be represented as the columns of
\begin{align}
     p_6 =\begin{pmatrix}
        -1 & 2 & 2 & 2 & 2& 1& 0 & 0 & 0\\
        0 & 0 & 1 & 0 & -1& 0&0 & 0 & 0\\
        0 & 1 & 0 & 0 & -6 & -1 &0 & 0 & 0\\
        0 & 1 & 1 & 1 & -8 & -1 &0 & 0 & 0\\
        0 & 0 & 0 & 0 & 0 & 0 & 3 & -2 & 0\\
        0 & 1 & 1 & 1 & 1 & 1 & 1 & -1 & 1
    \end{pmatrix}\,.
\end{align}
We see that $v_{\rm NHC}$ lifts to the ray $v_6$ in the uplift.
As before, due to the absence of zero-layer points not interior to facets, $x_1^2$ is a monomial of the Calabi-Yau hypersurface equation, leading to the divisor $D_B$ taking the form
\begin{align}\label{eq:DB_11169}
    D_B=D_1\,.
\end{align}
To resolve the $D_4$ singularity induced by the NHC, we introduce the two additional rays
\begin{align}
    e_1=\begin{pmatrix}
        1 &  0 & -1 & -1 & -1 &  0
    \end{pmatrix}\,,\qquad e_2=\begin{pmatrix}
        2 &  0 & -2 & -2 & 0 &  1
    \end{pmatrix}\,,
\end{align}
so altogether we find that the toric fan of $V_6$ has $11$ rays.

One can now check that the Weierstrass divisor, which is always representable by $D_W=3D_x$, can also be decomposed as 
\begin{align}\label{eq:DW_11169}
D_W=D_x+D_y+D_z+\sum_{i=2}^8D_i\,.
\end{align}
Furthermore, it turns out that the line bundles associated with the divisors $D_B$ and $D_W$ are Cartier and nef starting from any triangulation of the reflexive polytope $\Delta_{1,1,1,6,9}$. Inspecting \eqref{eq:DB_11169} and \eqref{eq:DW_11169} we see that the condition \eqref{eq:partition_condition_2} is satisfied, so we have a nef-partition at hand. In particular, all four polytopes $\nabla,\Delta^\circ,\Delta,\nabla^\circ$ can be defined and are reflexive.

As a consequence, we can use the formula \eqref{eq:Hodge_numbers} to compute the Hodge numbers of the uplift. We find
\begin{align}
    h^{1,1}(Y_4) = 7\,,\qquad h^{2,1}(Y_4)=0\,,\qquad h^{3,1}(Y_4)=7341\,,\qquad \frac{\chi(Y_4)}{24}=1839\label{eq:Hodge_11169}
\end{align}
in accordance with \cite{Jefferson:2022ssj}.

The Hodge number $h^{1,1}$ can be understood in a straightforward manner: two K\"ahler moduli are inherited from the base $B_3$, another one comes from the section of the elliptic fibration, and finally there arise four exceptional divisors corresponding to the resolution of the $D_4$ singularity associated with the single NHC.

Since $\Delta_{1,1,1,6,9}$ has no zero-layer points not interior to facets, the base $B_3$ can again be represented as a toric variety. 
The corresponding rays of the fan are generated by the points in the one-layer of $\Delta_{1,1,1,6,9}$ with the new origin being the point $(1,0,0,0)$.
It is straightforward to construct a $\Proj_{[2,3,1]}$ fibration over this base space, obtain the anticanonical hypersurface and resolve the NHC's $D_4$ singularity by the same method used before. 
The corresponding fan can again be obtained by an FRST of a reflexive polytope such that the Hodge numbers can easily be calculated. 
As expected, they agree with \eqref{eq:Hodge_11169}.

\subsection{The mirror of \texorpdfstring{$\Proj[1,1,1,1,4]$}{P[1,1,1,1,4]}}
Our next example is the \emph{mirror} of $\Proj[1,1,1,1,4]$, another trilayer model with the vertices of the polytope $\Delta_{[1,1,1,1,4]}^\circ$, which is the reflexive dual of $\Delta_{\Proj[1,1,1,1,4]}$ discussed before. Its vertices are the columns of
\begin{align}
    p^\vee_{[1,1,1,1,4]}=\begin{pmatrix}
        -1 & 1 & 1 & 1 & 1\\
        0 & -2 & -2 & -2 & 6\\
        0 & -2 & -2 & 6 & -2\\
        0 & -2 & 6 & -2 & -2
    \end{pmatrix}\,.
\end{align}
The Hodge numbers of the associated Calabi-Yau threefold are $h^{2,1}=1$ and $h^{1,1}=149$, so this is our first example with large $h^{1,1}$.

The polytope $\Delta_{[1,1,1,1,4]}^\circ$ has $153$ nonzero points not interior to facets.
The one-layer facet has $35$ interior points, signaling the presence of $34$ NHCs. 
After orientifolding, i.e., refining the lattice by one-half of $\xi=(-1,0,0,0)$, we therefore obtain a model with 34 NHCs that can be resolved according to the procedure explained in \S\ref{sec:smoothening}.

The resulting F--theory uplift again turns out to be a nef-partition and the Hodge numbers of the elliptically fibered fourfold $Y_4$ are given by
\begin{align}
     h^{1,1}(Y_4)=290\,,\qquad h^{2,1}(Y_4)=0\,,\qquad h^{3,1}(Y_4)=6\,,\qquad \frac{\chi(Y_4)}{24}=76\,.
\end{align}
The divisor of the single non-rigid O7 plane, associated with the ray $(-1,0,0,0)$, is a K3 surface. It has a single deformation modulus: $h^{2,0}(\text{K3})=1$. Adding four D7-branes on the O7 plane, which cancels the D7-tadpole, hence leads to four complex seven-brane moduli.

Together with the axio-dilaton and the original complex structure modulus of the base, they form the six complex structure moduli of the uplift encountered above.

We note that this is the minimal number of complex structure moduli one can hope to find in an F--theory uplift with non-rigid O7 planes (unless the original Calabi-Yau orientifold is rigid, $h^{2,1}_-=0$).
It thus presents itself as a wonderful candidate to explicitly study the moduli stabilization procedure for seven-branes in F--theory.

The Hodge number $h^{1,1}(Y_4)=290$ can also be understood in terms of the K\"ahler moduli of the base, the resolution divisors of the NHCs, the section of the elliptic fibration and four more divisors that arise only in the special situation where the non-rigid O7 plane sits on a K3 surface: the seven-brane gauge theory on K3 has a single adjoint chiral multiplet, so at generic points on the Higgs branch $\mathfrak{so}(8)\rightarrow \mathfrak{u}(1)^4$, leading to four Coulomb branch deformations in $3d$. For seven-branes on a non-rigid divisor $D$ with $h^{2,0}(D)>1$ instead the gauge group breaks to nothing at generic points along the Higgs branch. 

The zero-layer of the polytope $\Delta_{[1,1,1,1,4]}^\circ$ has 22 points such that the orientifold $B_3$ is itself not a toric variety. 
In this example, the only apparent option is thus to describe the F--theory uplift $Y_4$ as a codimension-two CICY in a $6d$ toric variety. 

\subsection{Uplifting the threefold with largest $h^{1,1}=491$}
Let us now consider the class of Calabi-Yau manifolds with largest known $h^{1,1}$. It is constructed from a reflexive polytope $\Delta_{491}$ which satisfies the trilayer criteria \S\ref{sec:trilayer_polytopes}. Its vertices, in trilayer normal form, are the columns of
\begin{align}
    p_{491}=\begin{pmatrix}
  -1 & 1 & 1 & 1 & 1 \\
  0 & -56 & 0 & 1 & 28 \\
  0 & -48 & 1 & 0 & 36 \\
  0 & -42 & 0 & 0 & 42 \\
\end{pmatrix}\,,
\end{align}
and the anticanonical hypersurface $Y_3$ therein has Hodge numbers
\begin{align}
    h^{1,1}(Y_3)=491\,,\qquad h^{2,1}(Y_3)=11\,.
\end{align}
The toric fans $\Sigma_4$, built from triangulations of $\Delta_{491}$, as well as the toric fans $\widetilde{\Sigma}_4$ of the toric orbifolds $\widetilde{V}_4$ have $679$ rays.
The F--theory uplift $Y_4$ has $117$ NHCs and, after crepant resolution of codimension-one singularities in $Y_4$, the toric fan $\Sigma_6$ has $918$ rays. 
The uplift turns out to correspond to a nef partition of a $6d$ reflexive polytope, and its Hodge numbers can be computed to be
\begin{align}
    h^{1,1}(Y_4)=960\,,\qquad h^{2,1}(Y_4)=0\,,\qquad h^{3,1}(Y_4)=264\,,\qquad \frac{\chi(Y_4)}{24}=308\,.
\end{align}
We may again interpret these from the type IIB orientifold perspective: we get $491$ K\"ahler moduli from the Calabi-Yau threefold, one for the elliptic fiber, and $4\times 117 = 468$ from the resolution parameters of the NHCs. The Hodge number $h^{3,1}(Y_4)$ is similarly decomposed into $11$ complex structure of the underlying Calabi-Yau threefold, one for the elliptic fiber, and $252$ seven-brane moduli. These arise from the ``Whitney umbrella''  D7-brane \cite{Collinucci:2008pf} that cancels the seven-brane charge of the orientifold seven-plane wrapped on the divisor $D_1$, which has $h^{2,0}(D_1)=6$. We note that the D7-brane moduli space is \emph{not} the Higgs branch of the $\mathfrak{so}(8)$ gauge theory that arises from wrapping four D7-branes on this divisor. Indeed, the latter has complex dimension $140<252$. Instead, the brane recombination process that corresponds to the $\mathfrak{so}(8)$ Higgs branch generally involves worldvolume fluxes that freeze some of the D7-brane deformations \cite{Gaiotto:2005rp}. The geometric model with $h^{3,1}=264$ is related to this gauge theory Higgs-branch via a non-trivial brane-flux transition.

\subsection{The normal fan at work}
Let us now discuss an example that leads to a nef-partition \emph{only} upon adopting the normal fan construction outlined in \S\ref{sec:Cartier_and_nef}.
The columns of 
\begin{align}\label{eq:p_p_tilde}
    p = \begin{pmatrix}
        -4 & 0 & 0  & 0 & 1 & -2\\
        -3 & 0 & 0 & 1 & 0 & -1\\
        -2 & 0 & 1 & 0 & 0 & -1\\
        -2 & 1 & 0 & 0 & 0 & -1
    \end{pmatrix}\,,\qquad \tilde{p}=\begin{pmatrix}
        -8 & 0 & 0  & 0 & 1 & -4\\
        -3 & 0 & 0 & 1 & 0 & -1\\
        -2 & 0 & 1 & 0 & 0 & -1\\
        -2 & 1 & 0 & 0 & 0 & -1
    \end{pmatrix}
\end{align}
define the rays of a toric fan $\Sigma_4$ --- the lattice points of a reflexive polytope $\Delta$ --- as well as its orbifold $\widetilde{\Sigma}_4$ with respect to the refinement $\xi=(1,0,0,0)$.
The Calabi-Yau threefold constructed as the anticanonical hypersurface in $V_4$ has Hodge numbers 
\begin{align}
    h^{1,1}(Y_3)=2\,,\qquad h^{2,1}(Y_3)=74\,,
\end{align}
and the Calabi-Yau orientifold is cut out of $\widetilde{V}_4$ by the divisor
\begin{align}
    D_B=12D_1 + 6D_6\,.
\end{align}
The Newton polytope of $D_B$ is not a lattice polytope so, as a consequence of Lemma \ref{lem:NP_Cartier_and_nef_is_lattice}, $D_B$ fails to be Cartier and nef given any triangulation of the one-skeleton $\widetilde{\Sigma}_4(1)$.
Upon constructing the toric sixfold $V_6$ as outlined in \S\ref{sec:Systematic_F_Theory_uplifts}, we find the expected result that $D_B$ and $D_W$ do not form a nef-partition. 
There are no NHCs in this model, and one can verify that $D_B$ has a base locus given by
\begin{align}
    BL_{D_B}=\{x_2=x_3=x_6=0\}\, .
\end{align}
The generic hypersurface $D_B\subset \widetilde{V}_4$, as well as the generic orientifold-invariant Calabi-Yau hypersurface in $V_4$, has two isolated singularities along $BL_{D_B}$,
\begin{equation}
    SL_{D_B}=\{x_1=x_2=x_3=x_6=0\}\cup \{x_2=x_3=x_4=x_6=0\}\subset BL_{D_B}\, ,
\end{equation}
where the Jacobian criterion is violated: the generic section vanishes to order two along these cones. 

Locally,  either singularity is isomorphic to a $\mathbb{Z}_2$-orbifold of a conifold,
\begin{equation}
    \sum_{i=1}^4 z_i^2=0\, ,\quad \Vec{z}\in \mathbb{C}^4/{\mathbb{Z}_2}\, ,
\end{equation}
where the $\mathbb{Z}_2$ orbifold acts diagonally on the four local complex coordinates $z_i$.\footnote{Singularities of this type have been studied, e.g., in \cite{Davies:2009ub}.}

Let us see how the normal fan construction of \S\ref{sec:Cartier_and_nef} takes care of these singularities.
We start with the normal fan of $\Delta_B+\Delta_{g'}$, where we have taken the convex hull over the integral points to define $\Delta_B,\Delta_{g'}$ to ensure that they are indeed lattice polytopes.

Let us denote the normal fan by $\widetilde{\Sigma}_4^{(N,0)}$. Its rays are given by the columns $v_1,\ldots,v_7$ of
\begin{align}
    p_{\rm normal}=\begin{pmatrix}
        -8 & 0 & 0  & 0 & 1 & -6 & -1\\
        -3 & 0 & 0 & 1 & 0 & -2 & 0\\
        -2 & 0 & 1 & 0 & 0 & -1 & 0\\
        -2 & 1 & 0 & 0 & 0 & -1 & 0
    \end{pmatrix}\,,\label{eq:p_normal}
\end{align}
and we already see a difference to \eqref{eq:p_p_tilde}: this fan contains seven instead of six rays. By checking \eqref{eq:NHC_3_normal_fan}, one can verify that this normal fan is admissible.
Maximal refinements of this admissible normal fan are by construction compatible with all the monomial sections of $D_B$ and $6\overline{K}_B$.  As explained in \S\ref{sec:Cartier_and_nef} they are defined by appropriate regular triangulations of the rays $v_1,\ldots,v_8$ with $v_8=(-4,-1,-1,-1)$. In what follows, we  denote by $\widetilde{\Sigma}_4^{(N)}$ any such fan.

The maximally refined normal fan $\widetilde{\Sigma}_4^{(N)}$ contains all rays of our original orbifold toric fourfold, plus two further rays, $v_6,v_7$. These two exceptional divisors are responsible for  resolving the two aforementioned isolated singularities. Let us go into a little more detail about how this works, focusing, without loss of generality, on the singularity at $\{x_2=x_3=x_4=x_6=0\}$. 

We have that $v_7=\frac{1}{2}(v_2+v_3+v_4+v_6)$, so the exceptional divisor $D_7$ yields a toric resolution of the $\mathbb{C}^4/{\mathbb{Z}_2}$ quotient singularity in $\widetilde{V}_4$. At the same time, it removes the locus where the Jacobian criterion is violated, yielding a crepant resolution of the singularity in the threefold hypersurface. 

The exceptional divisor in the threefold is a $\mathbb{P}^1\times \mathbb{P}^1$-surface, so in contrast to the small resolution of the ordinary conifold \cite{Candelas:1989js}, here, the resolution variety is divisorial. This blow-up makes sense both in the generic Calabi-Yau orientifold hypersurface $B_3$ as well as in the underlying tuned Calabi-Yau hypersurface $Y_3$. In the latter case, one arrives, on the resolved side, at a new toric fan $\Sigma'$ whose rays form another $4d$ reflexive polytope $\Delta'$. 

As an aside, the toric fan $\Sigma'$ does \emph{not} correspond to an FRST of $\Delta'$, but rather to a \emph{vex triangulation} \cite{MacFadden:2025ssx}. The resolved Calabi-Yau hypersurface is nevertheless smooth. Upon passing through an appropriate flop transition, one reaches a birationally equivalent Calabi-Yau threefold that corresponds to an FRST of $\Delta'$. The Calabi-Yau orientifolds that our normal fan construction produces in this example are equivalent to the orientifolds, with lattice refinement $\xi=(1,0,0,0)$, of the Calabi-Yau hypersurfaces constructed from FRSTs of $\Delta'$. In general, however, the construction of \S\ref{sec:Normal_fan} may yield blown-up Calabi-Yau orientifolds whose underlying Calabi-Yau threefolds have no known description as a hypersurface in a toric variety.

Returning to our example at hand, the class of the Calabi-Yau orientifold in the toric variety $\widetilde{V}_4^{(N)}$ constructed from our fan $\widetilde{\Sigma}_4^{(N)}$ is given by
\begin{align}
    D_B=12D_1+8D_6+D_7+6D_8\,.
\end{align}
By construction, this divisor is Cartier and nef in $\widetilde{V}_4^{(N)}$. 

In this new orientifold model, an NHC is wrapped on $D_7$ and the corresponding singularity has to be resolved when constructing the uplift.
This is easily done by adding two additional rays to the uplift constructed from \eqref{eq:p_normal}.
In total, we find a nef-partition that allows us to compute the Hodge numbers of the uplift $Y_4$:
\begin{align}
    h^{1,1}(Y_4)=9\,,\qquad h^{2,1}(Y_4)=17\,,\qquad h^{3,1}(Y_4)=536\,,\qquad \frac{\chi(Y_4)}{24}=134\,.
\end{align}
The Hodge number $h^{1,1}(Y_4)$ can now be interpreted as follows: two K\"ahler moduli are inherited from the Calabi-Yau threefold, one is added for the elliptic fiber, four arise from resolving the $D_4$ singularity associated with the NHC, and two further K\"ahler deformations arise from the two blow-up divisors associated with the points $v_{7,8}$.

\subsection{Where the normal fan fails}
The next example presents a situation where even the construction of \S\ref{sec:Cartier_and_nef} fails to produce Cartier and nef divisors $D_W,D_B$.

We begin with the rays $v_1,\ldots,v_5$ of the toric fans of $V_4$ and $\widetilde{V}_4$, with the latter arising from the former by orbifolding with respect to $\xi=(0,0,0,1)$:
\begin{align}
p=\begin{pmatrix}
    -1 & 0 & 0 & 0 & 1\\
    -1 & 0 & 0 & 1 & 0\\
    -1 & 0 & 1 & 0 & 0\\
    -1 & 1 & 0 & 0 & 0
\end{pmatrix}\,,\qquad
    \tilde{p}=\begin{pmatrix}
    -2 & 0 & 0 & 0 & 1\\
    -1 & 0 & 0 & 1 & 0\\
    -1 & 0 & 1 & 0 & 0\\
    -1 & 1 & 0 & 0 & 0
\end{pmatrix}\,.
\end{align}
The base divisor has the decomposition
\begin{align}
    D_B=5D_2\,,
\end{align}
and is not Cartier: the Cartier data associated to the cone spanned by $\{v_1,v_3,v_4,v_5\}$ is not integral. 
This is precisely the cone that defines the base locus of $D_B$.
The generic section of $D_B$ vanishes linearly in this cone so we suspect that the hypersurface $D_B$ is actually smooth.
Nevertheless, we will not succeed in finding a nef partition uplift. 

We now employ the strategy of \S\ref{sec:Cartier_and_nef} and compute the normal fan $\Sigma_{\Delta_D+\Delta_{g'}}$.
It is built from the columns $v_1,\ldots,v_6$ of
\begin{align}
    p_{\rm normal}=\begin{pmatrix}
    -2 & 0 & 0 & 0 & 1 & -1\\
    -1 & 0 & 0 & 1 & 0 & 0\\
    -1 & 0 & 1 & 0 & 0 & 0\\
    -1 & 1 & 0 & 0 & 0 & 0
\end{pmatrix}\,.
\end{align}
In the resulting toric variety, we have
\begin{align}
    \widehat{D}_B=5D_1+2D_6\,,\qquad \widehat{D}_{6\overline{K}_B} = -24D_1+6D_2+6D_3+6D_4+6D_5-3D_6\,,
\end{align}
which implies
\begin{align}
    6\overline{K}_{\widehat{B}}-\widehat{D}_{6\overline{K}_B}=-3D_6\, .
\end{align}
This violates the condition \eqref{eq:condition_NP_larger}:  $\Delta_{6\overline{K}_{\widehat{B}}}$ is strictly smaller than $\Delta_{g'}$.
The normal fan is thus inadmissible and we are unable to construct an uplift with Cartier and nef divisors $D_B,D_W$.

\subsection{Statistics}


\begin{figure}
    \centering
    \includegraphics[width=0.7\linewidth]{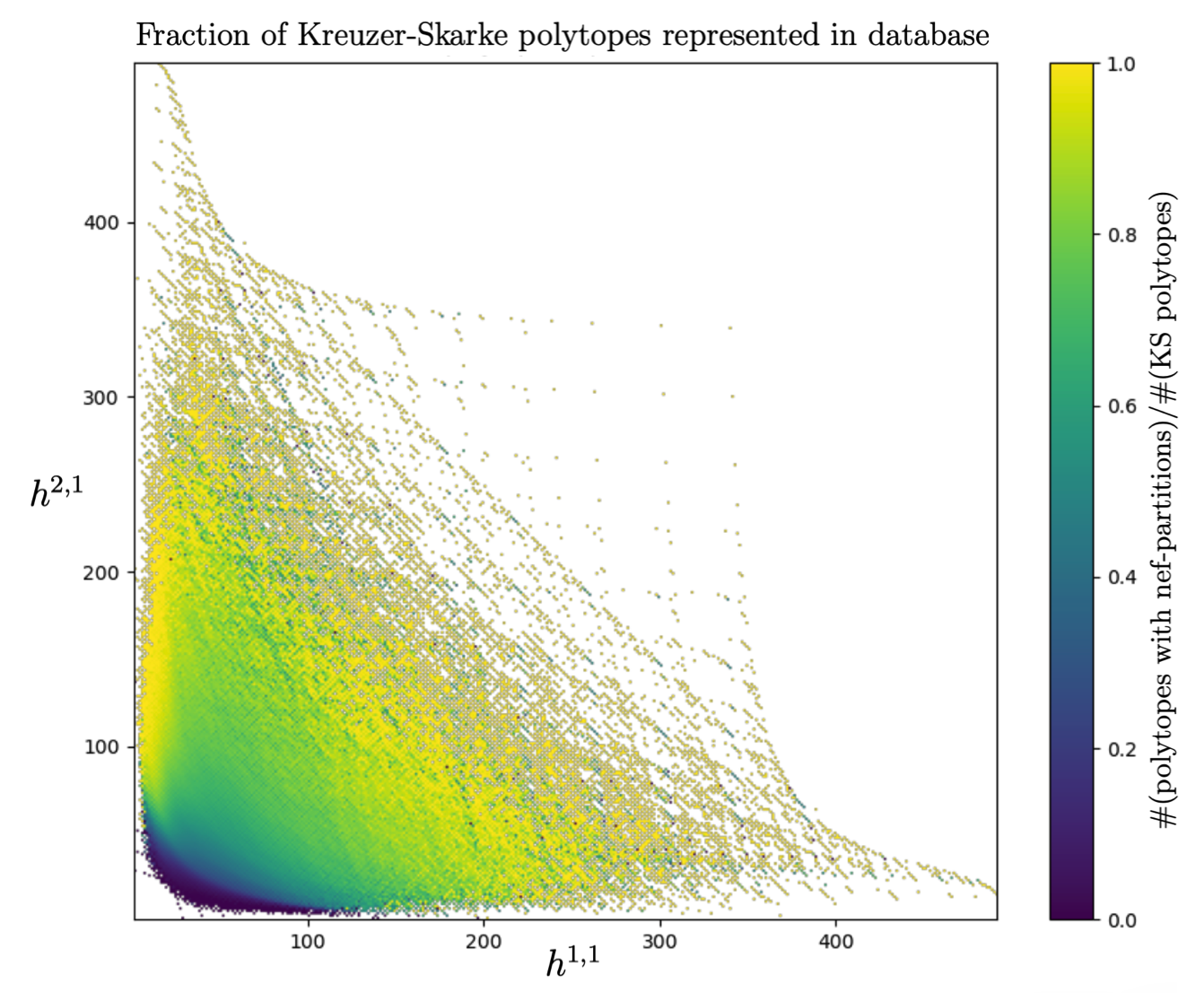}
    \includegraphics[width=0.7\linewidth]{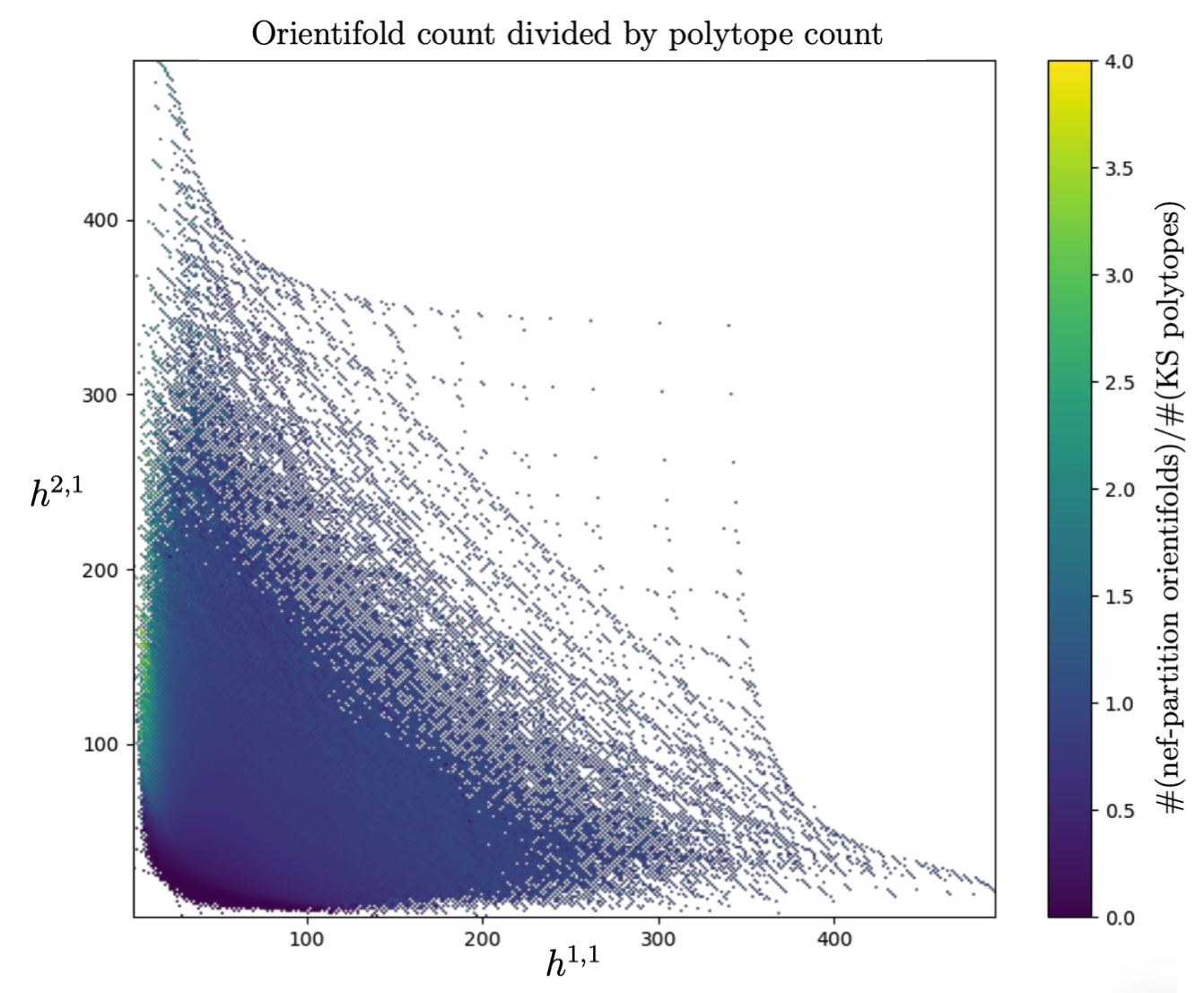}
    \caption{\textbf{Top:} Fraction of polytopes with given Hodge numbers $(h^{1,1},h^{2,1})$ leading to at least one nef-partition uplift.
    \textbf{Bottom:} Number of nef-partition uplifts divided by number of 4d reflexive polytopes for a given pair of Hodge numbers $(h^{1,1},h^{2,1})$.
    The nef-partition density is enhanced at small $h^{1,1}$.}
    \label{fig:statistics}
\end{figure}

Here, we construct explicit ensembles of orientifold models by starting from 4d reflexive polytopes from the Kreuzer-Skarke database \cite{Kreuzer:2000xy}, and applying all possible $\Z_2$ orientifold actions.
Each model then corresponds to a choice of $4d$ reflexive polytope and a lattice point $\xi$ defining the involution.\footnote{Each model may yield many inequivalent but birational configurations associated to different regular triangulations of vector configurations. We only choose one triangulation per polytope.}

In Fig.~\ref{fig:statistics} at the top, the fraction of 4d reflexive polytopes with Hodge numbers $(h^{1,1},h^{2,1})$ leading to at least one nef-partition uplift is displayed. 
At the bottom, we instead show the number of nef-partition uplifts normalized by the number of 4d reflexive polytopes. 
We see that over most Hodge pairs, an $\mathcal{O}(1)$ fraction of Calabi-Yau threefold hypersurfaces have orientifold involutions that lead to nef-partition uplifts.
We note however that the number of nef-partitions is enhanced at small $h^{1,1}$ and fairly suppressed at small $h^{2,1}$ (where $h^{1,1}$ is then necessarily large).
This is due to the fact that the partition property \eqref{eq:partition_condition_2} is easy/hard to fulfill in these regimes.

Figure~\ref{fig:statistics_normal_fan} summarizes the properties of a smaller ensemble of F--theory uplifts generated from Calabi--Yau threefold hypersurfaces with either $1\leq h^{1,1}\leq 9$ or $1\leq h^{2,1}\leq 9$ from the Kreuzer-Skarke list.
Nef-partition uplifts exist most frequently for models with small $h^{1,1}$.

\begin{figure}
        \centering
        \includegraphics[width=0.455\linewidth]{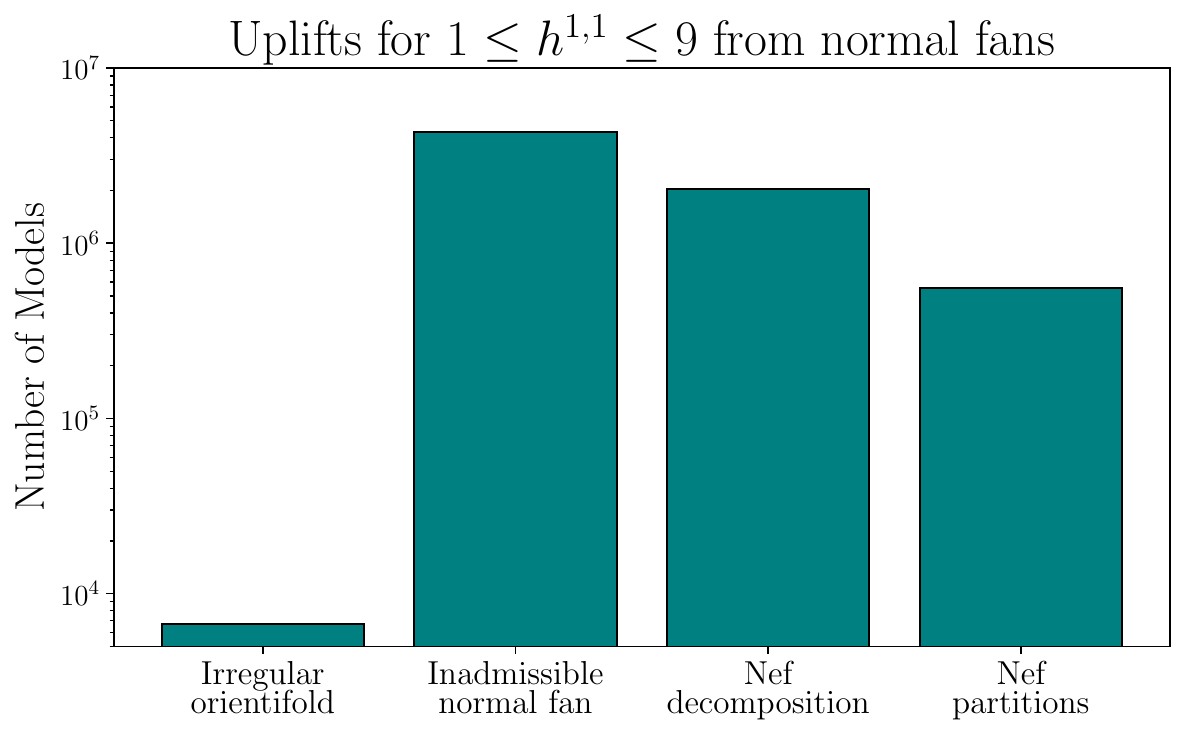}
        \hspace{0.3cm}
        \includegraphics[width=0.455\linewidth]{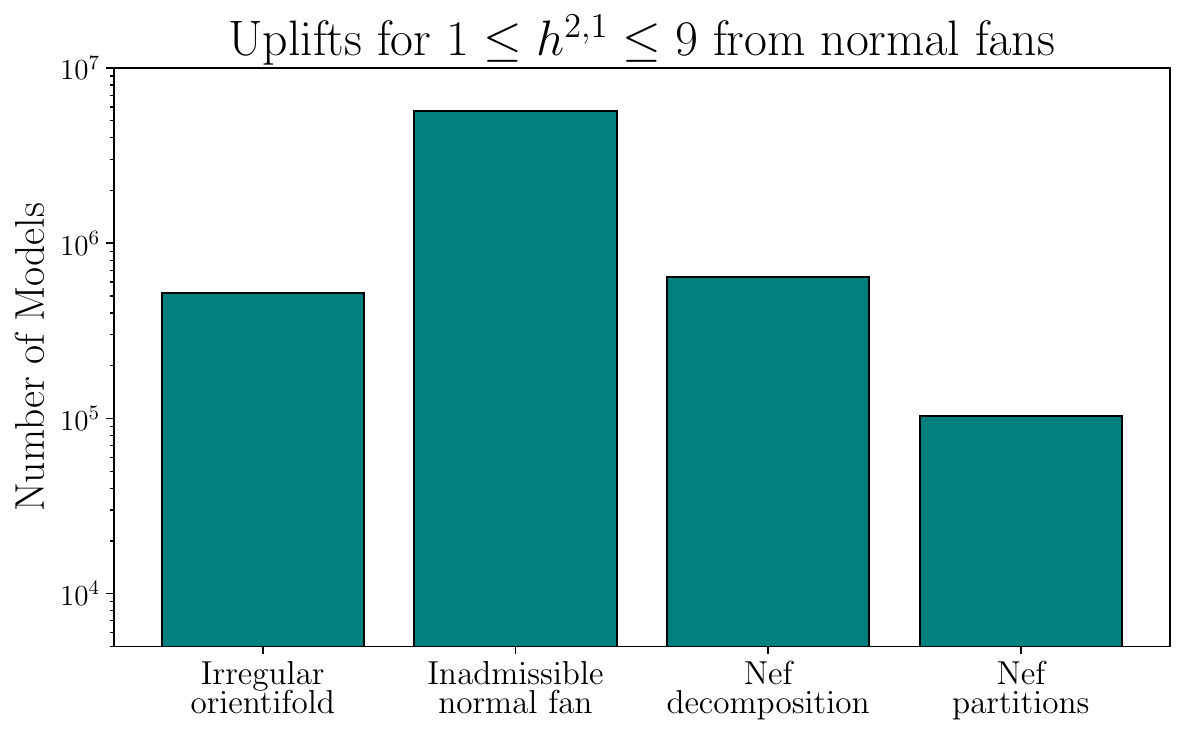}
        \caption{F--theory uplifts generated using the normal fan construction of \S\ref{sec:Cartier_and_nef}, from all polytopes of the Kreuzer-Skarke list with $1\leq h^{1,1}\leq 9$ and $1\leq h^{2,1}\leq 9$.}
        \label{fig:statistics_normal_fan}
\end{figure}

The results of Fig.~\ref{fig:statistics_normal_fan} should be compared to Fig.~\ref{fig:statistics_no_normal_fan}, which shows the same counts, but for uplifts that are constructed directly from Delaunay triangulations of 4d reflexive polytopes, and thus without the normal fan technology of \S\ref{sec:nef_partitions}.
The normal fan construction leads to significantly more nef-decompositions and nef-partitions.

\begin{figure}
        \centering
        \includegraphics[width=0.455\linewidth]{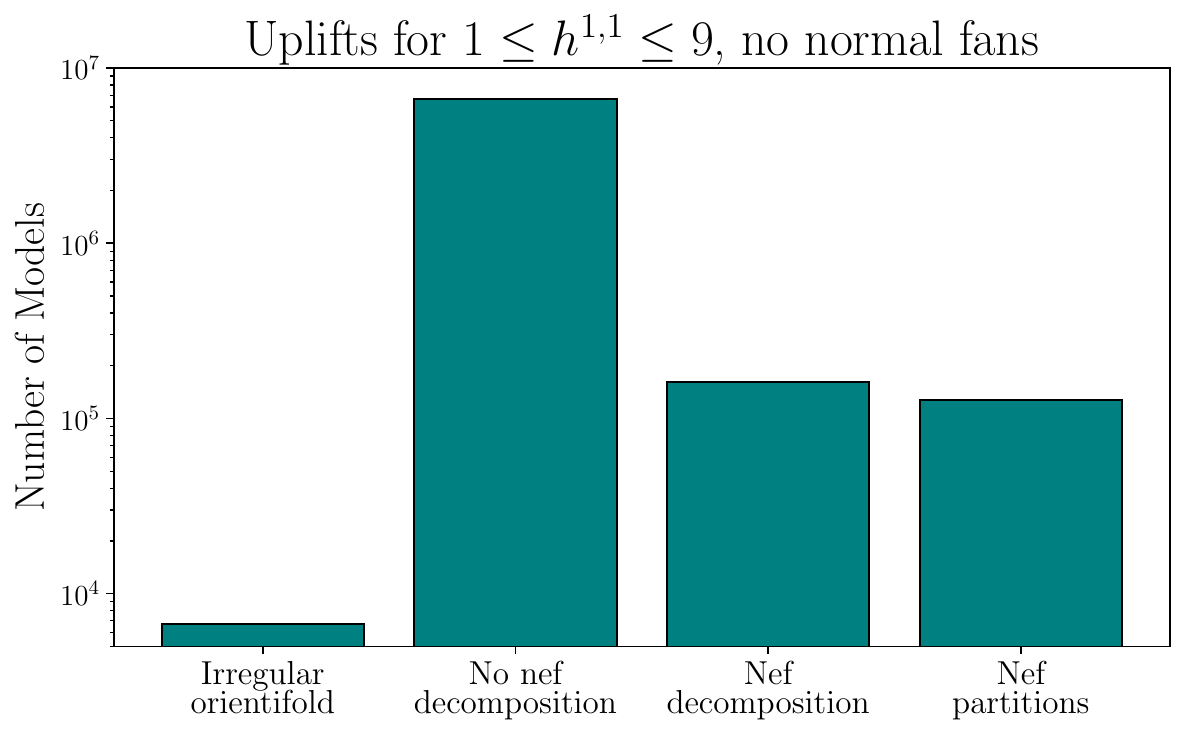}
        \hspace{0.3cm}
        \includegraphics[width=0.455\linewidth]{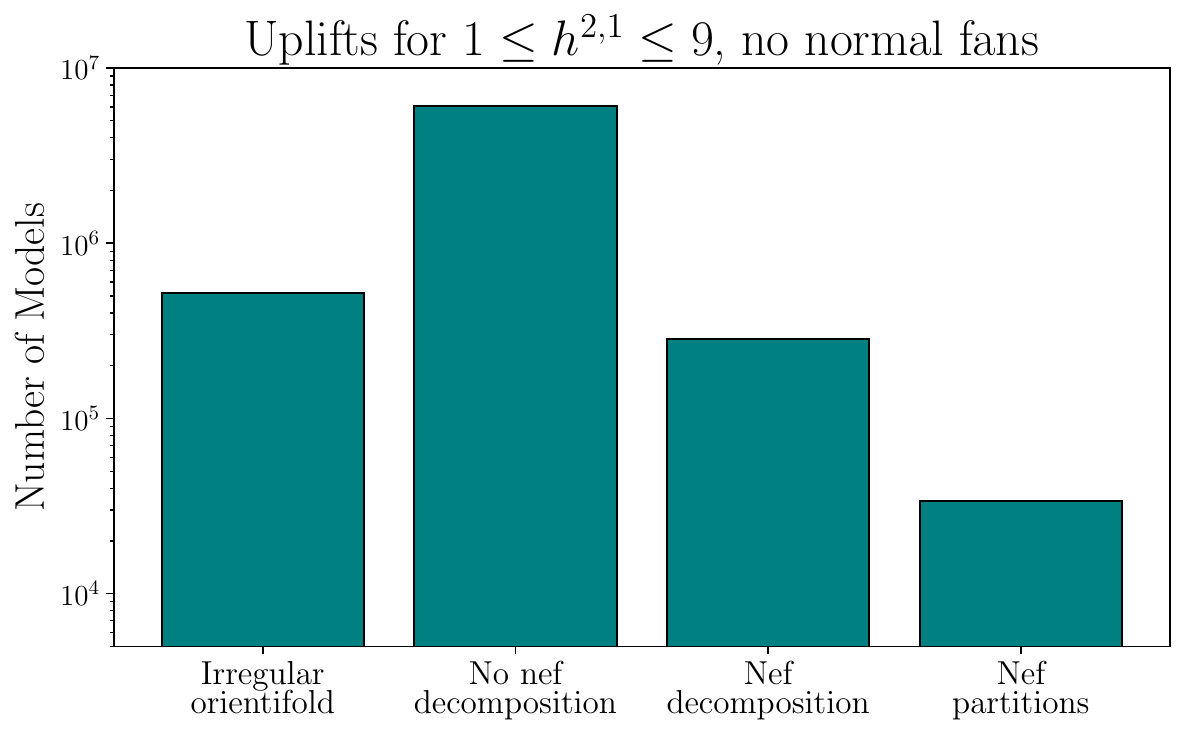}
        \caption{F--theory uplifts from Delaunay triangulations of all  Kreuzer-Skarke polytopes with $1\leq h^{1,1}\leq 9$ and $1\leq h^{2,1}\leq 9$. The normal fan construction of \S\ref{sec:Cartier_and_nef} is not used.}
        \label{fig:statistics_no_normal_fan}
\end{figure}

The data of all our nef-partitions is stored in a new database made available through $\mathtt{huggingface}$ \href{https://huggingface.co/datasets/jakobmoritz/F-theory_Uplifts_Nef-partitions}{\simpleicon{huggingface}}.
In total, we find that $240,136,807$ out of the $473,800,776$ 4d reflexive polytopes admit at least one involution leading to a nef-partition uplift. The overall number of nef-partition uplifts from 4d reflexive polytopes is $255,976,011$.



\section{Summary and Conclusions}
In this work, we have laid the groundwork for evaluating the seven-brane superpotential in type IIB flux compactifications, by developing a combinatorial framework for Calabi-Yau orientifolds and their F--theory uplifts. Applying our methods to the Kreuzer-Skarke list \cite{Kreuzer:2000xy} yields a vast ensemble of pairs of Calabi-Yau threefold orientifolds and their F--theory uplifts.

Our construction proceeds as follows. One starts with a Calabi-Yau hypersurface in a toric variety $Y_3\subset V_4$, for example drawn from the Kreuzer-Skarke list \cite{Kreuzer:2000xy}, and defines a toric orbifold $\widetilde{V}_4$ by refining the lattice underlying the toric fan of $V_4$. A Calabi-Yau orientifold $B_3=Y_3/\Z_2$ results from this as a divisor $D_B\subset \widetilde{V}_4$.

Generically, without further modifications, such orientifolds turn out to be singular, and we have developed an automated combinatorial machinery that resolves the singularities via suitable toric blow-ups. 

Given a Calabi-Yau orientifold defined this way, its F--theory uplift is then constructed as a Weierstrass model:
one first constructs an appropriately twisted $\Proj_{[2,3,1]}$ fibration over $\widetilde{V}_4$ to arrive at a toric sixfold $V_6^s$.
The Weierstrass divisor $D_W$ together with the pullback of $D_B$ to $V_6^s$ then defines a complete intersection Calabi-Yau $Y_4^s=D_W\cap D_B$, the F--theory uplift of $B_3$.

Again, without further modifications, this fourfold is generally singular. The relevant singularities of $Y_4^s$ arise because the elliptic fiber degenerates over the locations of non-Higgsable clusters (NHCs): seven-branes, whose embeddings admit no deformations. We have explained how these remaining singularities can be removed systematically through further toric blow-ups in the ambient variety $V_6^s\rightarrow V_6$. These correspond to moving onto the M-theory Coulomb branch. The resulting fourfold $Y_4$ is the F--theory uplift we are after, and is smooth up to isolated terminal singularities that we interpret as the locations of O3-planes. It arises as a codimension-two hypersurface in the toric sixfold $V_6$, defined by a pair of nef line bundles. The toric ambient sixfold $V_6$ turns out to arise from sufficiently fine regular triangulations of a $6d$ reflexive polytope $\Delta_6$, and our machinery constructs this polytope explicitly.

In many cases, the pair of nef line bundles defining the F--theory uplift satisfies the stronger criterion of a nef-partition, placing such models into the realm where mirror symmetry has been understood by Batyrev and Borisov \cite{borisov1993mirrorsymmetrycalabiyaucomplete,batyrev1994dualconesmirrorsymmetry,Batyrev:1994pg}. For this subset of models our work immediately sets the stage for evaluating the periods of the F--theory uplifts, and thus the seven-brane superpotential in flux compactifications.

We have applied our machinery to a multitude of examples, which highlight the feasibility of our methods even at the largest Hodge numbers. In particular, we have identified various models that arise from nef-partitions, and models with very few complex structure moduli, which form an ideal starting point for investigations into the stabilization of seven-brane moduli in flux compactifications such as \cite{Demirtas:2021nlu,McAllister:2024lnt}. 

It would be very useful to understand the mirror symmetry dictionary in cases where the nef-partition property is not satisfied: such models are perfectly well-behaved compactifications of F--theory, and arise more frequently than nef-partitions. Perhaps, such uplifts can be understood as nef-partitions at higher codimension. 

Our computational tools are available through $\mathtt{CYTools}$ \cite{Demirtas:2022hqf}, and can alternatively be accessed from a GitHub repository \href{https://github.com/B-Hassfeld/Calabi-Yau-orientifolds-and-F-Theory-uplifts}{\faGithub}.  A database of all nef-partition uplifts generated by starting from 4d reflexive polytopes is furthermore available \href{https://huggingface.co/datasets/jakobmoritz/F-theory_Uplifts_Nef-partitions}{\simpleicon{huggingface}}.

\section*{Acknowledgements}  We are indebted to Nate MacFadden and Elijah Sheridan for numerous helpful discussions, and in particular for explaining various details of their work \cite{MacFadden:2025ssx}, as well as the software package $\mathtt{regfans}$ \cite{regfans}, to us. We furthermore thank Elijah Sheridan for useful comments on a draft.
Support for this research was provided by the Office of the Vice Chancellor for Research and Graduate Education at the University of Wisconsin-Madison with funding from  the Wisconsin Alumni  Research Foundation.
This work was completed at the Aspen Center for Physics, which is supported by National Science Foundation grant PHY-2210452. We are grateful for its hospitality. BH furthermore acknowledges support from the Simons Foundation (1161654, Troyer). 

\appendix

\newpage

\section{Intersection numbers for non-favorable Calabi-Yau hypersurface embeddings}\label{app:intersections_non_favorable}
In this appendix, we explain how to compute intersection numbers for Calabi-Yau threefold hypersurfaces in toric varieties whose embedding is not favorable.

Let $\Delta^\circ\subset N\simeq  \mathbb{Z}^4$ be a $4d$ reflexive polytope, and let $\Sigma$ be an FRST thereof. Let $\Theta_2^\circ$ be a $2$-face of $\Delta^\circ$ with $k$ points in its interior. Furthermore, let $V=V_{\Sigma}$ be the toric fourfold with toric fan $\Sigma$ and finally let $Y$ be a generic anticanonical hypersurface. To each point $v_I\in\Delta^\circ$ not equal to the origin and not in the interior of a facet, and not in the interior of $\Theta_2^\circ$, we associate a toric coordinate $x_I$, $I=1,\ldots,h^{1,1}(V)+4-k$ and we will denote the toric coordinates associated with the points in the interior of $\Theta^\circ_2$ by $e_\alpha$, $\alpha=1,\ldots,k$. Moreover, we define the toric divisors $D_I:=\{x_I=0\}$, and $E_{\alpha}:=\{e_{\alpha}=0\}$. Finally, we denote by $\{x_i,D_i\}$ the toric coordinates/divisors associated with points $v_i\in \del \theta^\circ_2$. 

First, we consider the intersections of divisors $E_\alpha$ with two or more divisors $\{D_i,E_{\alpha}\}$ in the fourfold $V$, but none of the remaining ones. For this it is useful to note that the face $\Theta_2^\circ$ itself defines a non-compact Calabi-Yau threefold. This is because the points in $\Theta_2^\circ$ plus the origin all lie in a $3d$ hypersurface of $N\simeq \mathbb{Z}^4$ so they can be thought of as points in the lattice $N'\simeq \mathbb{Z}^3$. Furthermore, the points in $\Theta_2^\circ$ lie in a $2d$ affine hypersurface of $N'$. Together with the triangulation induced by $\Sigma$ we thus get a smooth non-compact Calabi-Yau threefold $Y_{loc}$. 

We may now blow down the exceptional divisors $E_{\alpha}$ viewed as divisors in $Y_{loc}$ to obtain a singular toric variety. All the $2d$ cones of its toric fan are still simplicial, but there is now a single (generally) non-simplicial $3d$ cone so there are codimension-three singularities in $Y_{loc}$ that generally are not orbifold singularities. 

Blowing up the exceptional divisors $E_{\alpha}$ (and triangulating) smoothes the threefold $Y_{loc}$ by replacing the singularities at codimension three with the blow-up divisors (and curves). This means that the exceptional divisors $E_{\alpha}$ shrink to a single point $p_s\in Y_{loc}$ upon blowing down. As a consequence, upon viewing the $E_{\alpha}$ as divisors in $V$ they shrink to a singular \textit{curve} $\mathcal{C}$ in $V$, and for small blow-up volumes a tubular neighborhood surrounding $\mathcal{C}$ is isomorphic to $Y'_{loc}\times \mathcal{C}$ where $Y'_{loc}$ is a small tubular neighborhood around $p_s\in Y_{loc}$.

Next, let us denote by
\begin{equation}
\kappa_{\alpha ij}\, ,\quad \kappa_{\alpha\beta i} \, ,\quad \text{and} \quad \kappa_{\alpha\beta \gamma}\, ,
\end{equation}
the triple intersection numbers of $Y_{loc}$ involving at least one exceptional divisor. Then, in $V$ we get $D_i\cap D_j\cap E_{\alpha}\simeq \kappa_{ij\alpha}[\mathcal{C}]\in H^6(V,\mathbb{Z})$, and similarly for the other pairings.

The curve $\mathcal{C}$ now intersects the generic anticanonical hypersurface $Y$ at $k'=\mathcal{C}\cap Y$ \textit{isolated points}, and for small blow-up volumes in small tubular neighborhoods surrounding these points we have $Y\simeq Y_{loc}'$. In particular, the intersections of the exceptional divisors $E_\alpha$ with $Y$ have $k'$ \textit{non-intersecting} irreducible components $E_{\alpha}^a$, $a=1,\ldots,k'$. 

Thus, in the threefold $Y$ we get
\begin{align}
&D_{i}\cap D_{j}\cap E_\alpha^a=\kappa_{ij \alpha}\, ,\quad \forall \, a=1,\ldots,k'\, ,\\
&D_{i}\cap E_{\alpha}^{a}\cap E_{\beta}^{b}=\delta^{ab}\kappa_{i\alpha\beta} \, ,\quad \forall \, a,b=1,\ldots,k'\, ,\\
&E_{\alpha}^{a}\cap E_{\beta}^{b}\cap E_{\gamma}^{c}=\delta^{ab}\delta^{bc}\kappa_{ \alpha\beta\gamma} \, ,\quad \forall \, a,b,c=1,\ldots,k'\, .
\end{align}
Given the intersection of the toric divisors $E_{\alpha}$ with the $\{D_i,E_{\alpha}\}$ and the anticanonical hypersurface $Y$, one immediately obtains the $\kappa_{ij\alpha}$ etc. by dividing the result by $k'$.

All other toric divisors of the ambient variety intersect with the blow-up divisors $E_{\alpha}$ only if they intersect the curve $\mathcal{C}$ in the singular limit. However, these intersection points in the (singular) fourfold do not generically coincide with the Calabi-Yau hypersurface so that corresponding triple intersection numbers in the threefold are all zero.
\section{Intersecting O7 planes and codimension-one uplifts}\label{sec:5d_uplift}

In this section, we briefly discuss models with intersecting O7 planes. 

Consider a trilayer polytope $\Delta$ that has at least one zero-layer point (not equal to the origin) that is not interior to a facet.
An example is given by
\begin{align}
        p=\begin{pmatrix}
      -1 & 1 & 1 & 1 & 1 & 0 \\
      0 & -1 & 0 & 0 & 1 & 0 \\
      0 & -1 & 0 & 1 & 0 & 0 \\
      0 & -2 & 2 & 0 & 0 & 1 \\
    \end{pmatrix}\,,\qquad \tilde{p}=\begin{pmatrix}
  -1 & 2 & 1 & 2 & 2 & 0 & 1 & 2 \\
  0 & -1 & 0 & 0 & 1 & 0 & 0 & 0 \\
  0 & -1 & 0 & 1 & 0 & 0 & 0 & 0 \\
  0 & -2 & 1 & 0 & 0 & 1 & 0 & 1 \\
\end{pmatrix}\,,
\end{align}
where the columns $v_1,\ldots,v_6$ define all points not interior to facets of a trilayer polytope $\Delta$ and thus represent rays of a fan $\Sigma$.

After orbifolding with respect to $v_*=v_1$, we arrive at a fan $\widetilde{\Sigma}_4$ with rays $v_1,\ldots,v_8$ given by the columns of $\tilde{p}$.\footnote{Here, we have resolved two $A_1$-singularities which has led to the addition of two additional rays to the fan.}
For every zero-layer point $v$ that is not interior to a facet, the point $v-\xi$ in $\widetilde{\Sigma}_4$ hosts an NHC.
Hence, there is an NHC along $D_3$.
The divisor $D_B$ can be decomposed as
\begin{align}
    D_B=D_1+D_6\,.
\end{align}

Constructing the F--theory uplift by employing the methods developed in this paper yields a Calabi-Yau fourfold $Y_4$ with Hodge numbers and Euler number given by
\begin{align}
    h^{1,1}(Y_4)=7\,,\qquad h^{2,1}(Y_4)=21\,,\qquad h^{3,1}(Y_4)=1150\,,\qquad \frac{\chi(Y_4)}{24}=286\,.
\end{align}

Upon blowing down the divisor $D_6$, O7 planes along $D_1$ and $D_3$ intersect along a curve in $B_3$. 

This F--theory model can now alternatively be understood as a codimension-one hypersurface in a toric fivefold. This is because after the blow-down, the toric rays associated with zero-layer points have been removed.

The resulting fourfold $Y_4'$ is the anticanonical hypersurface of the 5d toric variety constructed as a triangulation of the reflexive polytope $\Delta_5$ with vertices
\begin{align}
    p_{\Delta_5}=\begin{pmatrix}
  -1 & 0 & 0 & 0 & 0 & 1 \\
  -1 & 0 & 0 & 0 & 1 & 0 \\
  -2 & 0 & 0 & 1 & 0 & 0 \\
  0 & -2 & 3 & 0 & 0 & 0 \\
  1 & -1 & 1 & 1 & 1 & 1 \\
\end{pmatrix}\,.
\end{align}
As $\Delta_5$ is reflexive, the Hodge numbers and the Euler number of $Y_4'$ can be computed using Batyrev's formula for hypersurfaces \cite{batyrev1993dualpolyhedramirrorsymmetry}:
\begin{align}
    h^{1,1}(Y_4')=2\,,\qquad h^{2,1}(Y_4')=0\,,\qquad h^{3,1}(Y_4')=3786\,,\qquad  \frac{\chi(Y_4')}{24}=949\,.
\end{align}

We see that the Hodge numbers of $Y_4'$ --- and the D3-tadpole --- significantly differ from the ones of $Y_4$. The transition between $Y_4$ and $Y_4'$ involves two steps. First, the exceptional divisors associated to the zero-layer points of $\Delta$ are blown-down, leading to transverse intersections of O7-planes. Second, after the blow-down, additional complex structure deformations become available, which generically lead to a recombination of the intersecting O7-planes into a single smooth O7-plane. As the D3-tadpole differs on the two sides of this geometric transition we expect non-trivial $G_4$-flux on the deformation side, as in \cite{Gaiotto:2005rp}.
Furthermore, by analyzing the monodromy polynomial, one finds that the generic gauge algebra on the NHC is $\mathfrak{g}_2$ on the deformation side, rather than $\mathfrak{so}(8)$.

This behavior generalizes to any trilayer polytope with non-trivial zero-layer points:
a codimension-one uplift can always be constructed after blowing down all divisors associated to zero-layer points, and the resulting 5d polytopes are always reflexive.

Furthermore, we were able to verify that many of the codimension-two nef-partitions we found do not allow for a codimension-one description in terms of any known reflexive polytope from the Sch\"oller-Skarke list \cite{Scholler:2018apc}. 
We conclude from this analysis that working with codimension-two CICYs is necessary for dealing with the F--theory uplifts of Calabi-Yau orientifolds derived from the Kreuzer-Skarke list. 
Nevertheless, blowing down the zero-layer points of trilayer polytopes and uplifting to F--theory results in valid Calabi-Yau fourfold models with potentially interesting physics.
Since the underlying $5d$ polytopes are reflexive, mirror symmetry is again understood combinatorially \cite{Batyrev:1993oya}.
\section{Mirror symmetry from nef-partitions}\label{app:nef_partitions}

In this appendix we review the mirror symmetry dictionary of Batyrev and Borisov \cite{borisov1993mirrorsymmetrycalabiyaucomplete,batyrev1994dualconesmirrorsymmetry,Batyrev:1994pg} in more detail than in the main text.

Consider a toric variety $V$ with toric fan $\Sigma$ and $k$ divisors $D^{(1)},\ldots,D^{(k)}$ decomposed in terms of prime toric divisors as
\begin{align}
    D^{(\alpha)} = \sum_{j} a_{j}^{(\alpha)}D_j\,.\label{eq:D_alpha_i_rep}
\end{align}
Each divisor comes with a Newton Polytope $\Delta^{(\alpha)}$ in $M_\R$. 
From the adjunction theorem, we conclude that in order for the complete intersection $Y=D^{(1)}\cap \ldots \cap D^{(k)}$ to be a Calabi-Yau variety, $\sum_{\alpha=1}^k D^{(\alpha)}=\overline{K}_V$ must hold.
By an appropriate shift of the $D^{(\alpha)}$ by principal divisors, one can thus arrange that
\begin{align}
    \sum_{\alpha=1}^k a_{j}^{(\alpha)} = 1 \,,\qquad \forall j\,.\label{eq:a_sum_to_1}
\end{align}
If each divisor $D^{(\alpha)}$ is Cartier and nef, we conclude from Lemma \ref{lem:nef_Mink_sum} that
\begin{align}
    \Delta \equiv \Delta^{(1)} + \ldots + \Delta^{(k)}\label{eq:Delta_M_sum_def}
\end{align}
is the Newton Polytope of $\overline{K}_V$.
Furthermore, as both the Cartier and the nef property are linear, as can directly be seen from the Definitions \ref{def:Weil_and_Cartier_divisor} and \ref{def:nef}, $\overline{K}_V$ is then also Cartier and nef.
This implies that $\Delta$ is a reflexive polytope\footnote{In general, the Newton Polytope is only defined up to a linear shift, as a consequence of Lemma \ref{lem:shift}. However, the condition \eqref{eq:a_sum_to_1} ensures that the resulting $\Delta$ has the origin as its unique interior point, so it is reflexive.} and its dual is the polytope
\begin{align}
    \Delta^\circ=\text{Conv}(\Sigma(1))\,.\label{eq:Delta_N_conv_def}
\end{align}
As explained in \cite{borisov1993mirrorsymmetrycalabiyaucomplete}, mirror symmetry can now be phrased combinatorially,  if in addition, the divisors $D^{(\alpha)}$ can be decomposed as in \eqref{eq:D_alpha_i_rep},\eqref{eq:a_sum_to_1} and fulfilling the extra condition
\begin{align}
    a_j^{(\alpha)}\in\{0,1\}\,,\qquad \forall \alpha,j\,.\label{eq:nef_partition_condition}
\end{align}
In this case, the divisors $D^{(\alpha)}$ form a \emph{nef-partition}. 
The terminology comes from the fact that \eqref{eq:nef_partition_condition} partitions the rays $\Sigma(1)$ in $k$-many groups
\begin{align}
    E^{(\alpha)} =\{v_j \in \Sigma(1)\,|\,a_j^{(\alpha)}=1\}\,.
\end{align}

The property \eqref{eq:nef_partition_condition} implies that
\begin{align}
    \Delta':= \text{Conv}(\Delta^{(1)},\ldots,\Delta^{(k)})\label{eq:Delta_M_conv_def}
\end{align}
is reflexive \cite{borisov1993mirrorsymmetrycalabiyaucomplete}. 

To see this, we deduce from \eqref{eq:nef_partition_condition}  and Def.~\ref{def:Newton_Polytope} that $0\in \Delta^{(\alpha)}$ for all $i$.
This implies $0\in \Delta'$ as well.

In fact, the origin has to be an \emph{interior} point of $\Delta'$. Indeed, suppose it were a boundary point. Then, there would have to exist a hyperplane $H$ with the property that all points $m\in \Delta^{(\alpha)}$ satisfy $H\cdot m\geq 0$. This, however, implies that the origin is also a boundary point of $\Delta$, and this contradicts our assumption that this polytope is reflexive.

Next, $\Delta'$ is also a lattice polytope, because the Newton polytopes of Cartier and nef divisors are lattice polytopes. 
It hence remains to be shown that the polar dual of $\Delta'$ is a lattice polytope. 
To demonstrate this one recalls the property \eqref{eq:nef_partition_condition} together with the definition of the polar dual polytope, Def.~\ref{def:reflexive_and_polar_dual}. These guarantee that this polytope is given by
\begin{align}
    \nabla := \nabla^{(1)}+\ldots+\nabla^{(k)}\,,\qquad \nabla^{(\alpha)}:=\text{Conv}(\{0\}\cup E^{(\alpha)})\,.\label{eq:Delta_N_sum}
\end{align}
As all $\nabla^{(\alpha)}$ are lattice polytopes, so is \eqref{eq:Delta_N_sum} and hence $\nabla$ and $\Delta'\equiv \nabla^\circ$ form a reflexive pair.

\begin{figure}
        \centering
        \includegraphics[width=0.8\linewidth]{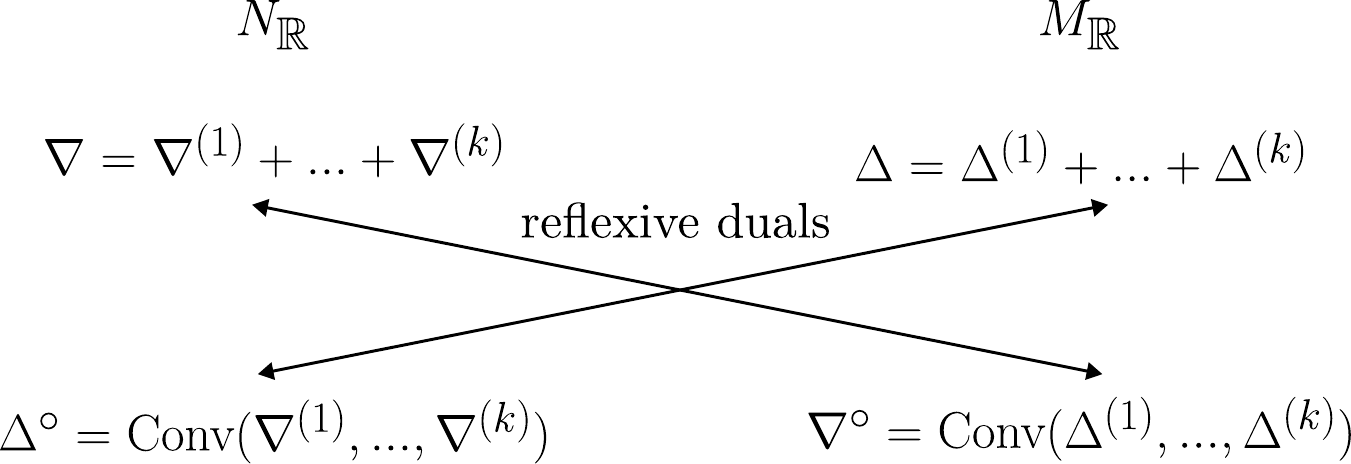}
        \caption{Dualities for Batyrev-Borisov mirror symmetry coming from nef-partitions.}
        \label{fig:app:BB_Mirror_symetry_tot}
\end{figure}

In the situation of a CICY $Y$ being defined by a nef-partition, mirror symmetry can be understood as follows.
An FRST of the polytope $\nabla^\circ$ yields a toric fan of a toric variety $V^\vee$.
The polytopes $\nabla^{(\alpha)}$ define Newton polytopes of Cartier and nef divisors defined on $V^\vee$ that lead to a codimension-$k$ CICY $X$ dual to $Y$.
The duality operations are depicted in Fig.~\ref{fig:app:BB_Mirror_symetry_tot}.
A different approach to mirror symmetry for CICYs is to study the dualities of \textit{reflexive Gorenstein cones}, as explained in \cite{batyrev1994dualconesmirrorsymmetry}.

\section{Hodge numbers for CICYs}\label{app:Hodge_numbers}
Given a CICY defined by a nef-partition $D^{(1)},\ldots,D^{(k)}$, the Hodge numbers can be computed combinatorially, as explained in \cite{Batyrev:1994pg,Batyrev:1995ca}.
In order to do so, one first constructs the \emph{Cayley polytopes} $\Delta^{\rm Cay}$ and $\nabla^{\rm Cay}$ defined by
\begin{align}
    \Delta^{\rm Cay} := \text{Conv}(\Delta^{(1)}\times e_1,\ldots,\Delta^{(k)}\times e_k)\subset M_\R\times \R^k\,,\\
    \nabla^{\rm Cay} := \text{Conv}(\nabla^{(1)}\times e_1,\ldots,\nabla^{(k)}\times e_k)\subset N_\R\times \R^k\,,
\end{align}
with $e_i$ the unit vectors of $\R^k$.

From these, one constructs a polynomial generating functional for the Hodge numbers of the pair of CICYs \cite{Batyrev:1995ca,Batyrev:1994ju,batyrev1998stringyhodgenumbersvarieties}. The Hodge numbers are obtained as coefficients thereof.

For the case of $k=2$, one finds \cite{Novoseltsev2011}
\begin{align}
    \begin{split}
        h^{1,1}(Y)=h^{d-1,1}(X)=\ell(\Delta^{\rm Cay})-2-n - \sum_{y=0}[\ell^*(2\cdot y^\vee)-(n+1)\ell^*(y^\vee)]\\
    +\sum_{\dim y=1}\ell^*(y^\vee)+\sum_{\dim y=1}\ell^*(y)[\ell^*(2\cdot y^\vee)-n\ell^*(y^\vee)]\\
    -\sum_{\dim y=2}[\ell(y)-\ell^*(y)-3]\ell^*(y^\vee)]+\sum_{\dim y=3}[\ell^*(2\cdot y)-4\ell^*(y)]\ell^*(y^\vee)\\
    -\sum_{\dim x=2,\dim y=3, x<y}\ell^*(x)\ell^*(y^\vee)\,,
    \end{split}\label{eq:Hodge_numbers}\\
    \begin{split}
        h^{2,1}(Y)=h^{2,1}(X) =     \sum_{\dim y=2}\ell^*(y)[\ell^*(2\cdot y^\vee)-(n-3)\ell^*(y^\vee)]\\
        +\sum_{\dim y=4}\ell^*(y^\vee)[\ell^*(2\cdot y)-5\ell^*(y)]\\
        -\sum_{\dim x=2,\dim y=3,x<y}\ell^*(x)\ell^*(y^\vee)-\sum_{\dim x=3,\dim y=4,x<y}\ell^*(x)\ell^*(y^\vee)\,,
    \end{split}\label{eq:Hodge_number_h21}
\end{align}
where the sums run over faces of $\nabla^{\rm Cay}$. Here,  $\ell(y)$ denotes the number of points in the face $y$, and $\ell^*(y)$ counts the number of points in the interior of $y$. Moreover, $y^\vee$ denotes the dual face of $y$ and $x<y$ denotes that $x$ is a face of $y$.\footnote{The Cayley polytopes themselves are not reflexive polytopes. Given a face $y\subset \nabla^{\rm Cay}$, its dual face $y^\vee\subset \Delta^{\rm Cay}$ is defined by $y^\vee = \{m\in \Delta^{\rm Cay}, \langle m,v\rangle \geq 0 \;|\;\forall v\in y\}$.}

From the independent Hodge numbers of the Calabi-Yau, one computes its Euler number according to
\begin{align}
    \chi(Y)=48 + 6(h^{1,1}(Y)+h^{3,1}(Y)-h^{2,1}(Y))\,.
\end{align}

In principle, the generating functional introduced in \cite{Batyrev:1995ca} may also be used in situations when all divisors are Cartier and nef but the partition property \eqref{eq:nef_partition_condition} does not hold. 
Indeed, in this situation one can still define a reflexive Gorenstein cone and its associated generating functional for the Hodge numbers \cite{Batyrev:2007cq}. 
However, evaluating this function explicitly becomes difficult, if no dual nef-partition exists.
\bibliography{biblio}
\bibliographystyle{JHEP}
\end{document}